\documentclass[12pt]{article}
\usepackage{setspace}
\usepackage{tikz}
\usetikzlibrary{positioning,arrows.meta,fit}
\usepackage[utf8]{inputenc}
\usepackage{bbm}
\usepackage{array}
\usepackage{booktabs}
\usepackage{ragged2e}
\usepackage{multirow}
\usepackage{csquotes}
\usepackage{verbatim}
\usepackage[a4paper, left=0.9in,right=0.9in,top=0.5in,bottom=1in]{geometry}
\usepackage[T1]{fontenc}
\usepackage{lmodern}
\usepackage{mathtools}
\usepackage[ruled, vlined, linesnumbered]{algorithm2e}
\usepackage[english]{babel}
\usepackage{amsmath,amsthm,amssymb}
\usepackage{amsmath}
\usepackage{amsfonts}
\usepackage{hyperref}
\usepackage{graphicx}
\usepackage[shortlabels]{enumitem}
\usepackage{amsmath, amsthm, amssymb, amsfonts}
\usepackage{bm}
\usepackage{comment}
\usepackage[
    backend=biber,
    style=numeric,
    sorting=nyt,
    giveninits=true,
    maxbibnames=99,
    doi=false,
    url=false,
    isbn=false,
    eprint=false
]{biblatex}
\usepackage{color}
\usepackage{float}
\usepackage{xcolor}
\usepackage{cancel}
\graphicspath{{/texfiles/}}

\hypersetup{
    colorlinks=true,
    linkcolor=blue,
    filecolor=magenta,      
    urlcolor=cyan,
    citecolor = blue,
}

\hypersetup{
    colorlinks=true,
    linkcolor=blue,
    filecolor=magenta,      
    urlcolor=cyan,
}
\newtheorem{assumption}{Assumption}
\newtheorem{theorem}{Theorem}

\newtheorem{theoremapx}{Theorem}

\newtheorem{proposition}{Proposition}
\newtheorem{corollary}{Corollary}
\newtheorem{defn}{Definition}
\newtheorem{lemma}{Lemma}

\newtheorem{claim}{Claim}

\newcommand{\E}{\mathbb{E}}
\newcommand{\Es}{\mathbb{E}^*}
\newcommand{\pr}{\mathbb{P}}
\newcommand{\ps}{\mathbb{P}^*}
\renewcommand{\P}{\mathcal{P}}
\newcommand{\V}{\mathbb{V}}
\newcommand{\Var}{\mathbb{V}}

\newcommand{\R}{\mathbb{R}}

\newcommand{\U}{\mathring{U}}
\newcommand{\Uh}{\mathring{U}^*_{n,k}}

\DeclareMathOperator{\Cov}{Cov}

\newcommand{\X}{\mathcal{X}}

\newcommand\numberthis{\addtocounter{equation}{1}\tag{\theequation}}

\theoremstyle{definition}

\title{Enhancing Statistical-Computational Balance in Uncertainty Quantification of Bagging: A Cheap Bootstrap Approach}

\author{
Arindam RoyChowdhury\thanks{Department of Industrial Engineering and Operations Research, Columbia University, New York, NY, USA. Email: \texttt{arindam.roychowdhury@columbia.edu}.}
\and
Henry Lam\thanks{Department of Industrial Engineering and Operations Research, Columbia University, New York, NY, USA. Email: \texttt{henry.lam@columbia.edu}.}
}

\date{ }

\begin{document}

\maketitle

\maketitle
\hrule

\begin{abstract}
Bagging, which refers to the repeated subsampling of data and averaging the resulting outputs, is a common approach to reduce sampling variability of unstable learning and data-driven optimization procedures. However, uncertainty quantification for these bagged estimators, namely to generate confidence intervals for their target quantities, faces challenges in the simultaneous attainment of coverage validity while being computationally affordable. The latter refers to the practical consideration of overall resampling effort which, when using conventional bootstraps for instance, would lead to nested resampling procedures and thus can be multiplicatively large. In this paper, we develop an interval construction approach that is both asymptotically coverage-valid and computationally light in that it does not use much more than the resampling effort in the bagging scheme itself. Our approach builds on the recent so-called cheap bootstrap idea, a method designed to substantially reduce the number of resamples based on an integrative treatment between data noises and Monte Carlo resampling noises, and an in-depth derivation of the algorithmic configurations  under which the Monte Carlo error from incomplete bagging is negligible. Our theory shows that our required computation budget is significantly smaller than existing methods in relation to the bag size and number of Monte Carlo subsamples. Numerical experiments in random forests, as well as an optimality bound estimation in stochastic optimization where bagging is useful, show that our proposed intervals achieve reliable coverage with low computation cost.

% , and it is central to methods such as random forests and bagging-based bounds in stochastic optimization. computationally tractable and  require placing an outer resampling loop around the bagging procedure, leading to a costly nested resampling scheme. . The latter refers to the overall resampling effort that, when using conventional bootstraps for instance, would lead to nested resampling procedures and thus can be multiplicatively large. in the sense of not needing a multiplicatively large number of resampling effort due to nested sampling procedures  for the expected output of the base learner, using the recently proposed cheap bootstrap: a
\end{abstract}

\hrule

\section{Introduction}
Bagging, short for bootstrap aggregating, is a general ensemble methodology for reducing the variability of statistical learning and data-driven decision-making procedures \cite{breiman1996bagging}. The method repeatedly resamples or subsamples the data, and applies a so-called base learner to each resample followed by an aggregation of the resulting outputs. Here, a base learner means the original algorithm or statistical procedure that one would have applied to a single dataset before aggregation. For example, in a regression problem, the base learner may be a decision tree regressor that maps the training data to a fitted prediction function; in classification, it may be a tree classifier that assigns class labels. The key motivation is that many useful base learners are sensitive to small perturbations of the training data \cite{buhlmann2002analyzing}. By averaging over many such data perturbations, bagging reduces this sampling variability while preserving the main structure learned by the base procedure. This idea underlies widely used machine learning methods such as random forests \cite{Breiman2001_Random_Forest} as well as the optimality bound estimator for stochastic optimization based on the optimism of empirical solution that is in turn expressible as expectation of resampled optimization problems \cite{lam2018bounding}.

Despite the widespread use of bagging, its uncertainty quantification remains comparatively less settled. Here, uncertainty quantification means the construction of confidence interval for the bagged target, namely the expected value of the bagged estimator. A main challenge in this task comes from the intricate balance between computation need and statistical validity: To attain coverage validity, existing approaches often require a computation cost, in terms of the overall resampling effort, that is much larger than the resampling used to create the bagged estimator itself. To briefly explain this challenge, first note that in practice bagging is implemented using only finitely many subsamples, and thus a coverage-valid interval must account for the additional randomness introduced by this finite-subsampling step. That is, this interval should encompass both data and Monte Carlo noises in the coverage guarantee. Unless we expend significant subsampling effort, these noises are intricately entangled and cause methodological complications. With this, using standard approaches such as the bootstrap to construct intervals would lead to ``nested resampling'', namely each resample estimate in the bootstrap scheme would involve an inner layer of resampling arising from the creation of the bagged estimator itself, thus making the total computation effort multiplicatively large and very expensive.

Due to the above challenges in attaining statistical-computational balance, existing inference approaches for bagging either do not have full coverage guarantees, or require highly expensive resampling effort. Using a normality interval requires a limiting variance that depends on unknown covariance components \cite{mentch_hooker_2016}, whose estimation in turn requires additional subsampling, and the total required computation to construct coverage-valid confidence interval is not explicitly known. The infinitesimal-jackknife (IJ) variance estimation, popular in the random forest literature \cite{IJ_first}, is analyzed in theory assuming either infinite ensembles \cite{athey_RF}, or else again no interval coverage guarantee is known \cite{IJ_CLT}. \cite{lam2018bounding} develop IJ-based confidence bounds for optimality-gap estimation, but its analysis is specialized to the SAA structure in stochastic optimization rather than providing a general inferential framework for bagged estimators. Moreover, the required computational burden is not explicitly linked to the subsample size.
% , leaving room for further improvement.
Finally, classical bootstrap methods \cite{efron1979bootstrap,Efron1994_bootstrap_intro,diciccio_efron_1996,hall1992bootstrap} can in principle be employed. However, existing off-the-shelf bootstrap theory does not directly establish validity for the implemented bagged estimator, particularly when the finite subsampling introduces additional Monte Carlo error. Moreover, as mentioned above, computationally each outer bootstrap replication requires rebuilding a bagged estimator from many new subsamples, leading to a nested procedure whose cost can be much larger than the original bagging computation. Table~\ref{tab:lit_review_summary}, which we will discuss in further detail in the sequel, highlights some of the challenges faced by these methods.

The main goal of this paper is to develop a new approach to construct intervals for bagging that pushes forward the statistical-computational frontier -- We attain provably asymptotic coverage validity, while at the same time requiring a substantially lower computational budget than existing approaches. Roughly speaking, our approach follows the classical bootstrap route, in that we resample the data which is then used to create resampled bag estimates. The discussion above implies this typically leads to nested resampling effort and hence computationally expensive. Our twist comes from two main ingredients: One is to draw upon the idea of so-called cheap bootstrap~\cite{lam2022cheap,lam2023bootstrap,ohlendorff2025cheap}. This is a recent bootstrap implementation approach that utilizes an integrated perspective in handling data noise and resampling noise, instead of separate treatment of these two noises in conventional bootstraps, which in turn allows the construction of pivotal quantities using very few resamples. The cheap bootstrap has been recently applied to the inference of stochastic gradient descent \cite{lam2023resampling}, ensemble neural networks \cite{huang2023efficient} and simulation modeling \cite{lam2022cheap}, and it is particularly well-suited to reduce the inference computation cost for bagging that otherwise requires nested resampling effort. Our second main ingredient is an in-depth analysis on the number of bags needed to ``wash away'' the Monte Carlo noises carried in the confidence interval for bagged estimators. As mentioned above, the latter needs to account for both data noises and Monte Carlo noises in the bagging scheme, and these noises interact intricately. By washing away the Monte Carlo noises, we can focus on only the data noises in which the cheap bootstrap is applicable, so that in total we achieve only a small amplifying factor of the resampling effort used in the original bagged estimator.

The conceptual novelty of our approach is two-fold. First, our approach avoids the variance-estimation route that underlies much of the existing literature. Standard approaches, such as the ones discussed above, typically proceed by first establishing a normal approximation for the ensemble estimator and then using a consistent estimator of the limiting variance. For example, \cite{mentch_hooker_2016} estimates variance through an additional Monte Carlo procedure, while IJ in \cite{IJ_first} works with an idealized variance that can be approximated by reusing the subsamples used to construct the bagged estimator.  Thus, the final validity of these intervals depends not only on the ensemble central limit theorem (CLT), but also on the accuracy of a separate variance-estimation step, whose overall computational requirements are not always made explicit. Moreover, in some cases such as \cite{IJ_first}, establishing simultaneously the ensemble CLT and the validity of the accompanying variance-estimation step, which is needed to achieve coverage validity, can be lacking. Our approach bypasses this consistent variance estimation. Instead of estimating the asymptotic variance, we compare the original bagged estimator against a small number of bootstrap recomputations of the same procedure, and use the original estimator and the few resample estimators to form a pivotal statistic that can be converted into a coverage-valid interval. In particular, this construction uses a $t$-type interval instead of Gaussian, where, intriguingly, the degrees of freedom is the number of resamples rather than the data size as in conventional thinking.

Our second novelty is the explicit treatment of the finite number of bags used in practice. The implemented bagged estimator aggregates a finite number of base learners, each trained on a subsample of the full data, which is an approximation for the fully bagged, or the complete U-statistic. Valid inference must account for the additional Monte Carlo variability introduced by this approximation. Our analysis gives an explicit relation among the number of bags, the subsample size, and the original data size: roughly, the Monte Carlo error from incomplete bagging becomes negligible relative to sampling error when the number of bags grows faster than the ratio of the original data size to the subsample size. With these, our method replaces the usual ``CLT with variance estimation'' pipeline with a direct pivotal construction for the practically relevant finite-ensemble bagging predictor, thus achieving a simultaneous guarantee on both statistical coverage validity and computational cheapness that is superior to existing approaches. 

The rest of the paper is organized as follows. Section~\ref{sec:problem_setup} introduces the general bagging framework and the related notations used throughout the paper. Section~\ref{sec:lit_review} reviews existing inferential approaches for bagged estimators and discusses the associated statistical and computational challenges. Section~\ref{sec:the_CB} presents the cheap bootstrap bagging interval and the corresponding algorithm. Section~\ref{sec:main results} establishes the coverage guarantees, including conditions under which the finite-bagging approximation error is asymptotically negligible, and extends the results to bagging with replacement. Section~\ref{sec:Applications} discusses applications, including random forests and stochastic optimization. Section~\ref{sec:Numerical_Examples} reports numerical experiments, with concluding remarks in Section~\ref{sec:Summary}.

\section{Problem Setup}\label{sec:problem_setup}

To formalize the setup, let \(\Xi\) denote the sample space of observations, such that each data point \(\xi_i \in \Xi\). We represent the base learner by a \emph{symmetric kernel}
\[
    H_k:\Xi^k\to\mathbb R,
\]
which maps a dataset of \(k\) observations to a real-valued output. If \(\xi_1,\ldots,\xi_k\) are independent observations drawn from the underlying distribution \(\P\), we also define $W_k$ to be the expectation of the base learner output:
\begin{equation}\label{eq:Wk_def}
     W_k
    :=
    \mathbb E_{\P}\!\left[H_k(\xi_1,\ldots,\xi_k)\right].
\end{equation}

This abstraction covers a wide range of procedures. For example, in a supervised learning problem, each observation may be \(\xi_i=(X_i,Y_i)\), where \(X_i\) is a covariate vector and \(Y_i\) is a response. Fix a target covariate value \(x_0\) at which prediction is desired. Given \(k\) training observations \(\xi_1,\ldots,\xi_k\), a decision tree algorithm first fits a tree-based prediction rule \(\hat f_k(\cdot;\xi_1,\ldots,\xi_k)\) by recursively partitioning the covariate space and assigning a fitted response value to each terminal leaf. The kernel output is then the prediction of this fitted tree at the fixed target point:
\[
    H_k(\xi_1,\ldots,\xi_k)
    =
    \hat f_k(x_0;\xi_1,\ldots,\xi_k).
\]  

The same framework also encompasses stochastic optimization, where the goal is to choose a decision under uncertainty. Consider the stochastic program
\[
    H^*
    =
    \min_{x\in\mathcal X}
    \mathbb E_{\P}[h(x,\xi)],
\]
where \(x\in\mathcal X\) is the decision variable, $\xi$ is a random object from an unknown distribution \(\P\), and \(h(x,\xi)\) is the cost incurred by decision \(x\) under realization \(\xi\).  A standard approach in estimating $H^*$ is the sample average approximation (SAA), which replaces the unknown expectation by an empirical average \cite{Shapiro2009,Kleywegt2002,Higle1991}. Given \(k\) observations \(\xi_1,\ldots,\xi_k\), the SAA optimal value is
\[
    H_k(\xi_1,\ldots,\xi_k)
    =
    \min_{x\in\mathcal X}
    \frac{1}{k}\sum_{i=1}^k h(x,\xi_i).
\]
Now, a direct application of Jensen's inequality shows that $W_k = \E_{\P}[H_k] \leq H^*$, making \(W_k\) an optimistic lower-bound target for the true optimal value \(H^*\). This observation holds under very general conditions and has been widely used to construct statistical lower bounds for stochastic programs \cite{mak1999monte,shapiro2003monte,glasserman2004monte}. Moreover, under additional regularity conditions, \(W_k\) approaches \(H^*\) as \(k\) grows, so choosing a sufficiently large \(k\) can make this lower-bound target close to the true optimal value.

From this perspective, estimating \(W_k\) requires estimating the average SAA optimal value over repeated samples of size \(k\). Given only one dataset, a natural way to approximate this expectation is to repeatedly draw subsamples of size \(k\) from the observed data, solve the corresponding SAA problem on each subsample, and average the resulting optimal values. Thus, bagging arises quite naturally in the context of stochastic optimization, where the base learner is the SAA optimal value \(H_k\), and the bagged estimator averages this output over many resampled datasets \cite{lam2018bounding}.

\subsection{Bagging Procedure}

We next formalize the bagging construction using the kernel notation introduced above. Given the observed data \(\xi_1,\ldots,\xi_n\), for any $k\leq n$, the ideal version of bagging averages the base output \(H_k\) over every subset of \(k\) observations. For subsampling without replacement, this fully bagged estimator is the \(U\)-statistic:
\begin{equation}\label{eq:U_nk_def}
    U_{n,k}
    :=
    \binom{n}{k}^{-1}
    \sum_{1\le i_1<i_2<\cdots<i_k\le n}
    H_k(\xi_{i_1},\ldots,\xi_{i_k}).
\end{equation}
 Since \(H_k\) is symmetric and the observations are i.i.d., each summand in \eqref{eq:U_nk_def} has expectation \(W_k\). Therefore, for any \(k\), \(\mathbb E_{\P}[U_{n,k}]=W_k\). Thus, \(U_{n,k}\) is also an unbiased estimator for $W_k$.

The complete estimator in \eqref{eq:U_nk_def}, however, is usually not computable in realistic applications. It requires \(\binom{n}{k}\) evaluations of the base procedure \(H_k\), which can be prohibitive even for moderate \(n\) and \(k\). Thus, practical bagging uses only a finite number \(D \ll \binom{n}{k}\) of randomly selected subsamples. Let
\(
    I^{(d)}
    =
    \bigl(i^{(d)}_1,\ldots,i^{(d)}_k\bigr),
\) for $d = 1,...,D$;
denote \(D\) independently drawn \(k\)-tuples of distinct indices from \(\{1,\ldots,n\}\), corresponding to subsamples drawn without replacement. The implemented, or incomplete, bagged estimator is
\begin{equation}\label{eq:U_incomplete_def}
    U_{n,k,D}
    :=
    \frac{1}{D}\sum_{d=1}^D
    H_k\!\bigl(\xi_{i^{(d)}_1},\ldots,\xi_{i^{(d)}_k}\bigr).
\end{equation}
Thus, \(U_{n,k,D}\) is a Monte Carlo approximation to the complete bagged estimator \(U_{n,k}\), or equivalently an incomplete \(U\)-statistic. The parameter \(D\) controls the computational budget: larger \(D\) gives a closer approximation to complete bagging, but requires more evaluations of the base procedure. 

\subsection{Confidence Intervals and Validity Notion}
% \paragraph{Goal: }
Our goal is to construct a confidence interval for the population bagged target \(W_k\), based on the finite-\(D\) bagged estimator \(U_{n,k,D}\). This procedure should account for the randomness introduced by using a finite number of randomly selected subsamples, in addition to the noises coming from the original data.
% while avoiding inference schemes whose computational cost substantially exceeds that of the original bagging computation.

More precisely, for a prescribed confidence level \(1-\alpha\), we seek an interval \(I_{n,k,D}\), constructed from the observed data and the finite-\(D\) bagging procedure, such that
\[
\lim_{n\to\infty}\mathbb{P}\bigl( W_k \in I_{n,k,D} \bigr) = 1-\alpha .
\]
Here, the probability is taken over all sources of randomness involved in constructing the interval, including sampling variability in the observed data, Monte Carlo randomness from the random choice of bagging subsamples, and any additional randomness introduced by the inferential procedure itself.

\paragraph{Notation.}
Throughout the remainder of the paper, we adopt the following notational conventions.  
We observe an i.i.d.\ sample $\xi_1,\ldots,\xi_n$ from an unknown distribution $\P$, which constitutes all available data.  
The estimators $U_{n,k}(\equiv U_{n,k,\infty})$ and $U_{n,k,D}$ depends on this sample. Similarly, $\P_n$ denotes the empirical distribution based on $\xi_1,\ldots,\xi_n$.  Bootstrap resamples are denoted by $\xi_1^*,\ldots,\xi_n^*$, drawn i.i.d.\ from $\P_n$. 

Unless stated otherwise, $\E[\cdot]$ and $\pr(\cdot)$ denote expectation and probability under the true data-generating distribution $\P$, i.e., $\E \equiv \E_{\P}$ and $\pr \equiv \pr_{\P}$. Similarly, $\E^*[\cdot]$ and $\pr^*(\cdot)$ denote expectation and probability under the bootstrap law induced by the empirical distribution, i.e., $\E^* \equiv \E_{\P_n}$ and $\pr^* \equiv \pr_{\P_n}$. In particular, for any statistic $T(\xi_1^*,\ldots,\xi_k^*)$, we use the shorthand
\[
    \E^*\!\left[T\right] 
    \;:=\; 
    \E\!\left[T(\xi_1^*,\ldots,\xi_k^*) \,\middle|\, \P_n\right],
\]
and analogously for $\V^*$. For any other probability measure $Q$, we will explicitly write $\E_Q[\cdot]$ and $\pr_Q(\cdot)$.

\section{Existing Approaches and Challenges}\label{sec:lit_review}

We discuss existing approaches for inference on bagged estimators and the main challenges that arise in applying them. Table~\ref{tab:lit_review_summary} summarizes the comparison across these approaches, including their requirements on the bagging size \(D\), subsample size \(k\), total computational demand, and theoretical guarantee for asymptotic coverage. The remainder of this section elaborates on the existing approaches listed in the table, namely direct CLT-based intervals, infinitesimal jackknife approaches, and classical bootstrap intervals. The cheap bootstrap, which is the focus of this paper, is included in the last row for comparison and is developed next in Section \ref{sec:the_CB}.

\begingroup
\renewcommand{\arraystretch}{1.4}
% Please add the following required packages to your document preamble:

\begin{table}[H]
\begin{tabular}{|c|cc|c|c|}
\hline
\multirow{2}{*}{\begin{tabular}[c]{@{}c@{}}\\[-1pt]\textbf{Approach}\end{tabular}} & \multicolumn{2}{c|}{\textbf{Sufficient Requirements}}                                                                                                                                                                                    & \multirow{2}{*}{\textbf{\begin{tabular}[c]{@{}c@{}}\\[-10pt]Total\\[-10pt]  Computational\\[-10pt]  Demand\end{tabular}}} & \multirow{2}{*}{\textbf{\begin{tabular}[c]{@{}c@{}}
Theoretical \\[-10pt] Guarantee\\[-10pt]  on\\[-10pt]  Asymptotic \\[-10pt]  Coverage
\end{tabular}}} \\ \cline{2-3}
                                   & \multicolumn{1}{c|}{\textbf{\begin{tabular}[c]{@{}c@{}}Bagging Size\\[-5pt]  (D)\end{tabular}}} & \textbf{\begin{tabular}[c]{@{}c@{}}Subsample \\[-10pt]  Size\\[-5pt]  (k)\end{tabular}}                               &                                                                                                  &                                                                                                                          \\ \hline
Direct CLT                         & \multicolumn{1}{c|}{\begin{tabular}[c]{@{}c@{}}No explicit \\[-10pt] condition\\[-10pt](Except \\[-10pt]$D_n\to\infty$)\end{tabular} }                                                                                  & $o(\sqrt{n})$                                                                                          & \begin{tabular}[c]{@{}c@{}}Unknown\\[-10pt] (requires separate \\[-10pt] variance estimation)\end{tabular}     & \begin{tabular}[c]{@{}c@{}}Yes\\[-5pt](assuming \\[-10pt] consistent \\[-10pt] variance \\[-10pt] estimation)\end{tabular}                                                                                                                       \\ \hline
\multirow{2}{*}{IJ for RF}         & \multicolumn{1}{c|}{$\infty$}                                                                                   & \multirow{2}{*}{\begin{tabular}[c]{@{}c@{}}$\Theta(n^{\beta})$\\ ($\beta_{min}<\beta<1$)\end{tabular}} & $\Theta(n^k)$                                                                                    & Yes                                                                                                                      \\ \cline{2-2} \cline{4-5} 
                                   & \multicolumn{1}{c|}{$\Theta(n)$}                                                                                     &                                                                                                        & $\Theta(n)$                                                                                      & Unknown                                                                                                                  \\ \hline
IJ for SAA                         & \multicolumn{1}{c|}{$\Omega(n)$}                                                                                     & $o(\sqrt{n})$                                                                                          & $\Omega(n)$                                                                                      & Yes                                                                                                                      \\ \hline
Classical Bootstrap                & \multicolumn{1}{c|}{$\infty$}                                                                                 & Unknown                                                                                                & $\Omega(n^{2k})$                                                                                 & \begin{tabular}[c]{@{}c@{}}Unknown\end{tabular}                                                         \\ \hline
Cheap Bootstrap                    & \multicolumn{1}{c|}{$\Omega(n/k)$}                                                                                   & $o(n^{1/4})$                                                                                           & $\Omega(n/k)$                                                                                    & Yes                                                                                                                      \\ \hline
\end{tabular}
\caption{Comparison of inferential approaches for bagged estimators for a given sample size $n$}
\label{tab:lit_review_summary}
\end{table}
\endgroup

 % related resampling-based procedures, discussing how addressing the two requirements simultaneously remains a nontrivial issue
\subsection{Normality Interval with Estimated Variance}

A natural way to construct an interval for \(W_k\), corresponding to the direct CLT approach in Table~\ref{tab:lit_review_summary}, is to establish a central limit theorem for \(U_{n,k,D}\) and then estimate the scale appearing in the limit. In particular, suppose that we could establish
\[
\frac{U_{n,k,D}-W_k}{\sigma}\xrightarrow{d}N(0,1),
\]
for an appropriate standard deviation \(\sigma=\sigma_{n,k,D}\). Additionally, if one can construct an estimator \(\hat{\sigma}\) such that \(\hat{\sigma}/\sigma \xrightarrow{p} 1\), then the usual Gaussian interval
\[
U_{n,k,D} \pm z_{1-\alpha/2}\hat{\sigma}
\]
is asymptotically valid.

\cite{mentch_hooker_2016} follows this general route for subsampled ensemble predictors. They show that an ensemble average over random subsamples can be represented as an incomplete U-statistic, and use this representation to obtain a normal approximation. This gives the asymptotic coverage guarantee indicated in the first row of Table~\ref{tab:lit_review_summary}. The scale \(\sigma\) in their limit depends on covariance components that quantify the dependence between base learners trained on overlapping subsamples, together with the variation of individual base-learner predictions. Since these quantities are unknown, they propose an estimator \(\hat{\sigma}\) based on additional repeated subsampling conditional on fixed observations. Thus, the construction of \(\hat{\sigma}\) introduces additional Monte Carlo budgets beyond the original number \(D\) of bagging replications. Although the consistency of \(\hat{\sigma}\) is argued, the precise computational effort required for this variance estimator to be accurate enough for the final confidence interval is not established, which is reflected by the ``Unknown'' computational-demand entry in Table~\ref{tab:lit_review_summary}.

\subsection{Influence-Function Based Variance Estimation} Another similar route is to estimate the asymptotic variance through influence-function based approaches, such as the infinitesimal jackknife (IJ). This approach is especially prominent in the
random forest literature. Consider the ideal, or
perfectly bagged estimator $U_{n,k,\infty}$, obtained by taking $D\to \infty$ in (\ref{eq:U_incomplete_def}). The infinitesimal jackknife $V_{\mathrm{IJ}}^{\infty}$ was introduced by \cite{IJ_first} as a way to estimate the variance of this ideal estimator $U_{n,k,\infty}$:
\[
    V_{\mathrm{IJ}}^{\infty} \approx \text{Var}(U_{n,k,\infty}).
\]

The difficulty is that \(V_{\mathrm{IJ}}^{\infty}\) is itself an idealized
quantity. Computing it exactly requires access to full realization of the resampling
distribution, which is not available when bagging is implemented with only
finitely many replications. In practice, one computes a finite-\(D\) bagged
estimator and uses the realized bagging replications to form a computable IJ
variance estimate \(\widehat V_{\mathrm{IJ}}^D\). Thus, the practical IJ-based route has two layers:
first, \(V_{\mathrm{IJ}}^{\infty}\) is used as an ideal approximation to
\(\operatorname{Var}(U_{n,k,\infty})\); second, \(\widehat V_{\mathrm{IJ}}^D\)
is used as a Monte Carlo approximation to \(V_{\mathrm{IJ}}^{\infty}\). 

A key contribution of~\cite{IJ_CLT} is to analyze this
second layer of Monte Carlo approximation. They show that, for the standard
jackknife and infinitesimal jackknife variance estimators, controlling the
simulation error in approximating \(V_{\mathrm{IJ}}^{\infty}\) may require as
many as
\(
    D=\Theta(n^{3/2})
\)
bagging replications. To reduce this computational burden, they further propose
bias-corrected jackknife and infinitesimal jackknife estimators, for which the
required number of bagging replications decreases to
\(
    D=\Theta(n).
\)
These two regimes are reflected in the ``IJ for RF'' rows of
Table~\ref{tab:lit_review_summary}, where the computational demand depends on
whether the uncorrected or bias-corrected IJ estimator is used.

\cite{athey_RF} provides a complementary result by establishing
asymptotic normality for random forests. Under suitable conditions, 
their theory yields a Gaussian limit of the form
\[
    \frac{ U_{n,k,\infty}-W_k}
    {\sqrt{\operatorname{Var}(U_{n,k,\infty})}}
    \xrightarrow[]{d} N(0,1),
\]
with the IJ estimator satisfying
\(
    { V_{\mathrm{IJ}}^{\infty}}/
    {\operatorname{Var}( U_{n,k,\infty})}
    \xrightarrow[]{p} 1
\), when $k = \Theta(n^{\beta})$ for some $\beta_{min}<\beta<1$.

Taken together, \cite{IJ_CLT} and \cite{athey_RF}  provide strong ingredients for inference with
bagged predictors: a central limit theorem for an ideal bagged object, and an IJ
estimator for its asymptotic variance. Thus, combining the above results suggest the possibility of having a valid confidence interval of the form:
\[
     U_{n,k,D}
    \pm
    z_{1-\alpha/2}\sqrt{\widehat V_{\mathrm{IJ}}^D}.
\]

However, this literature does not directly provide a unified coverage theorem
for the actually implemented statistic \(U_{n,k,D}\). The available normality results are mainly for the idealized infinite-ensemble object, while the finite-\(D\) analyses focus on the Monte Carlo error in estimating the IJ variance. This distinction is summarized in the two segments of the ``IJ for RF'' row of Table~\ref{tab:lit_review_summary}: depending on the required bagging budget, the computational demand can be characterized, but the end-to-end coverage guarantee for the finite-\(D\) estimator is not available in the low-cost regime.

Thus, although the IJ literature provides important tools for inference, it does
not directly yield the type of end-to-end confidence-interval guarantee we seek
for \(U_{n,k,D}\) under a prescribed computational budget. Such a guarantee
would require jointly controlling the approximation error between the finite and
infinite ensembles and the error in the variance estimator, under a common set of
assumptions and with an explicit growth condition on \(D\). Moreover, existing
results do not give a transparent account of how the required bagging budget
\(D\) should scale with the subsample size \(k\) in order to support valid
coverage.

% Our approach addresses these gaps by applying bootstrap directly to the bagged estimator, rather than using an IJ-based variance estimator. This provides a direct route to coverage-valid inference for \(U_{n,k,D}\). In addition, by separating the incomplete-bagging approximation from the outer resampling step, the analysis treats \(D\) as an explicit computational choice and gives conditions under which the finite-\(D\) error is asymptotically negligible, establishing the desired confidence-interval guarantee and clarifying how the required bagging computation depends on \(k\).

A related line of work is bagging-based inference for stochastic optimization.
\cite{lam2018bounding} studies bagged sample-average approximation
(SAA) optimal values to construct lower confidence bounds for the stochastic
program optimum \(H^* := \min_{x\in \X}\E[h(x,\xi)]\). In their setting,
\[
    W_k
    =
    \mathbb{E}[\hat H_k]
    =
    \mathbb{E}\left[
        \min_{x\in\mathcal X}
        \frac{1}{k}\sum_{i=1}^k h(x,\xi_i)
    \right]
\]
is an optimistic target satisfying \(W_k\leq H^*\), so valid lower confidence
bounds for \(W_k\) also provide valid lower confidence bounds for \(H^*\).
They prove central limit theorems for the corresponding bagged SAA estimators
and combine these results with infinitesimal-jackknife variance consistency to
obtain asymptotically valid coverage for \(W_k\).

This theory, however, is developed for the SAA structure rather than for a
general bagged kernel or base learner. Thus, while it provides strong guarantees
in stochastic optimization, a general coverage theory for computationally
implemented bagged estimators remains less settled. In addition, as indicated in the ``IJ for SAA'' row of Table~\ref{tab:lit_review_summary}, the variance-estimation route requires a large Monte Carlo bagging budget with \(D/n\to\infty\), and this requirement is not explicitly linked to the subsample size \(k\).

\subsection{Classical Bootstrap Inference} Instead of relying on
a variance estimator, we return to the classical bootstrap principle.
 Let
\(\xi_1,\ldots,\xi_n\) be i.i.d. observations from an unknown distribution \(P\),
and let \(P_n\) denote the empirical distribution. For a population functional
\(\theta:=t(P)\), the plug-in estimator is defined as \(\hat\theta_n:=t(P_n)\). The
nonparametric bootstrap approximates the sampling distribution of
\(\hat\theta_n\) by treating \(P_n\) as a surrogate for \(P\): bootstrap samples
are drawn from \(P_n\), the statistic \(t(\cdot)\) is recomputed on each
resampled dataset, and the resulting conditional distribution is used as an
estimate of the sampling distribution of \(\hat\theta_n\)
\cite{efron1979bootstrap,Efron1994_bootstrap_intro}.

This bootstrap approximation can be converted into confidence intervals in several standard ways, as discussed in for example \cite{diciccio_efron_1996}. The simplest construction is the ``normal'', or ``bootstrap-standard'',
interval: the bootstrap samples are used only to estimate the standard error of
\(\hat\theta_n\), and the interval is then formed by centering at
\(\hat\theta_n\) and using a normal critical value. The ``percentile interval'' uses the empirical quantiles of
\(\hat\theta_n^*\) directly as the interval endpoints. More refined constructions, such as ``BCa'' intervals, adjust the
percentile method for bias and skewness through bias-correction and acceleration
terms, while ``ABC'' and ``calibrated intervals'' use analytic or higher-order
corrections to improve coverage accuracy
\cite{hall1992bootstrap}.

Bootstrap confidence intervals are attractive because they avoid the need to derive and estimate a closed-form asymptotic variance. However, their validity is not guaranteed by resampling alone. Standard bootstrap arguments typically require suitable conditions such as asymptotic normality, consistency of the conditional bootstrap distribution, nondegeneracy of the limiting variance, or some other appropriate moment or regularity conditions. 

In the present setting, the statistic of interest is the incomplete bagged estimator $  U_{n,k,D}$ 
which estimates the population bagged target $W_k$.
Thus, an off-the-shelf bootstrap interval for \(U_{n,k,D}\) does not automatically establish coverage for \(W_k\). This is especially nontrivial when \(H_k\) depends on a growing subsample size \(k=k_n\), or when \(H_k\) is nonsmooth because it is defined through an optimization or learning algorithm. Again, the statistic \(U_{n,k,D}\) contains two sources of randomness:
    sampling noise from \(\xi_1,\ldots,\xi_n\), and Monte Carlo noise from using
    only \(D\) bagging replications. Classical bootstrap CI theory usually
    treats the statistic as exactly recomputed and does not directly control this finite-\(D\) error. Thus, while the error is potentially analyzable, as mentioned in Table~\ref{tab:lit_review_summary}, it is currently unknown whether the classical bootstrap literature is directly applicable to bagged estimators.

More importantly, there is a substantial computational limitation. A conventional bootstrap places an outer resampling loop on top of the statistic, which in our case is a bagged estimator. Thus, each bootstrap replication requires recomputing the bagged estimator using \(D\) subsamples. With \(B\) outer bootstrap replications, this leads to roughly \(B\times  D\) base-learner evaluations, where traditional bootstrap literature typically requires $B\to \infty$, making the procedures intractable. Moreover, because \(U_{n,k,D}\) includes Monte Carlo randomness from using only finitely many bagging subsamples, a standard bootstrap implementation does not specify how large \(D\) must be for this additional randomness to be negligible for coverage. In Table~\ref{tab:lit_review_summary}, the
entries \(D=\Omega(n^k)\) and total demand \(\Omega(n^{2k})\) represents the sufficient orders that recover the complete bagging and outer bootstrap behavior.

\section{Our Approach: The Cheap Bootstrap Bagging Interval}\label{sec:the_CB}

The preceding discussion highlights a gap in the literature: there is a need for an inference procedure for bagged estimators that is both statistically valid for the target \(W_k\) and computationally tractable under the finite resampling budgets used in practice.  For this, we build on the cheap bootstrap introduced by~\cite{lam2022cheap}, which was developed precisely for settings where repeatedly recomputing an estimator many times is computationally expensive. The method is still bootstrap-based, but it uses bootstrap recomputations in a different way from the conventional bootstrap.

\subsection{Main Procedure}

We first describe the general cheap bootstrap procedure introduced in \cite{Lam_cheap_bootstrap}, and then specialize it to our general bagged estimator. Let \(\psi(\P)\) denote a target functional of the data-generating distribution \(\P\), and let \(\hat\psi_n=\hat\psi_n(\xi_1,\ldots,\xi_n)\) be an estimator computed from i.i.d.\ observations \(\xi_1,\ldots,\xi_n\sim\P\). As in the ordinary nonparametric bootstrap, the empirical distribution \(\P_n\) is used as a proxy for \(\P\): for \(b=1,\ldots,B\), draw an independent bootstrap sample \(\xi^{*b}_1,\ldots,\xi^{*b}_n\stackrel{\mathrm{i.i.d.}}{\sim}\P_n\), and recompute the estimator on this resampled dataset, denoted by \(\hat\psi_n^{*b}:=\hat\psi_n(\xi^{*b}_1,\ldots,\xi^{*b}_n)\).

The distinction from the conventional bootstrap is that the number of bootstrap resamples \(B\) is not treated as a large Monte Carlo parameter. Instead, the cheap bootstrap allows \(B\) to remain fixed, possibly as small as $B = 1$, and uses the resulting bootstrap recomputations only to estimate the scale of the original estimator. Specifically, the scale estimate is formed by centering the bootstrap estimators at \(\hat\psi_n\):
\begin{equation}\label{eq:S_nb_def}
    S_{n,B}^2
    :=
    \frac{1}{B}\sum_{b=1}^B
    \left(\hat\psi_n^{*b}-\hat\psi_n\right)^2,
\end{equation}
One then outputs the interval:
\begin{equation}\label{eq:cheap_boot_CI_general}
    I_{n,B}(\psi)
    :=
    \Bigl[
    \hat\psi_n \pm t_{B,1-\alpha/2}S_{n,B}
    \Bigr],
\end{equation}
where \(t_{B,1-\alpha/2}\) is the \(1-\alpha/2\) quantile of the Student-\(t\) distribution with \(B\) degrees of freedom.

With the above, we now state the proposed interval for a bagged target. Specifically, the confidence interval construction \eqref{eq:cheap_boot_CI_general} can be generalized to the bagging regime by taking \(\psi(\P)=W_k\) and \(\hat\psi_n=U_{n,k,D}\) (from \eqref{eq:Wk_def} and \eqref{eq:U_incomplete_def}). For each bootstrap sample \(\xi^{*b}_{1:n}\), we recompute the same bagging procedure on the resampled data:
\begin{equation}\label{eq:bagged_boot_def}
    \bigl(U_{n,k,D}\bigr)^{*b}
    :=
    U_{n,k,D}\!\left(\xi^{*b}_{1:n}\right),
    \qquad b=1,\ldots,B.
\end{equation}
That is, within each bootstrap replication, the estimator is again formed by averaging \(D\) evaluations of \(H_k\) over subsamples of size \(k\), using the same bagging mechanism as in the original computation. The inner bagged subsamples are generated independently across bootstrap replications.

The cheap bootstrap scale estimate for the incomplete bagged estimator is therefore
\begin{equation}\label{eq:S_nkDB_def}
    S^2_{n,k,D,B}
    :=
    \frac{1}{B}\sum_{b=1}^B
    \left[
    \bigl(U_{n,k,D}\bigr)^{*b}
    -
    U_{n,k,D}
    \right]^2.
\end{equation}
Thus, the proposed confidence interval for \(W_k\) is given by:
\begin{equation}\label{eq:cheap_boot_CI_Wk}
    I_{n,k,D,B}
    :=
    \Bigl[
    U_{n,k,D}
    \pm
    t_{B,1-\alpha/2}S_{n,k,D,B}
    \Bigr].
\end{equation}
Algorithm~\ref{alg:cheap_bootstrap_bagging} summarizes this construction. 

\begin{algorithm}[t]
\caption{Cheap Bootstrap Bagging Interval for $W_k$}\label{alg:cheap_bootstrap_bagging}
\KwIn{Data $\xi_{1:n}=(\xi_1,\ldots,\xi_n)$; subsample size $k$; number of bagging resamples $D$; number of bootstrap replicates $B$; confidence level $1-\alpha$.}
\KwOut{Cheap bootstrap bagging interval $I_{n,k,D,B}$ for $W_k$.}

\BlankLine
\textbf{Step 0 (Original bagged estimate).}
Compute the bagged estimator $U_{n,k,D}$ from \eqref{eq:U_incomplete_def} using $D$ subsamples of size $k$ drawn without replacement from $\{1,\ldots,n\}$ (reuse across subsamples allowed).\\
\BlankLine

\textbf{Step 1 (Outer bootstrap loop).}
\For{$b=1,\ldots,B$}{
Draw a bootstrap resample $\xi^{*b}_{1:n}$ of size $n$ by sampling with replacement from $\xi_{1:n}$ (equivalently, i.i.d.\ from $\P_n$).\\
Compute the bagged estimate on the resampled data:
\[
\bigl(U_{n,k,D}\bigr)^{*b} := U_{n,k,D}\!\left(\xi^{*b}_{1:n}\right),
\]
using the same bagging mechanism as in Step 0 (i.e., $D$ subsamples of size $k$).
}

\BlankLine
\textbf{Step 2 (Bootstrap scale estimate).}
Compute
\[
S^2_{n,k,D,B}
:=
\frac{1}{B}\sum_{b=1}^B
\left(
\bigl(U_{n,k,D}\bigr)^{*b} - U_{n,k,D}
\right)^2,
\qquad
S_{n,k,D,B}:=\sqrt{S^2_{n,k,D,B}}.
\]

\BlankLine
\textbf{Step 3 (Cheap bootstrap bagging interval).}
Let $t_{B,1-\alpha/2}$ be the $1-\alpha/2$ quantile of the Student-$t$ distribution with $B$ degrees of freedom and return
\[
I_{n,k,D,B}
:=
\Bigl[
\,U_{n,k,D} \pm t_{B,1-\alpha/2}\, S_{n,k,D,B}\,
\Bigr].
\]
\end{algorithm}

\subsection{Explanation and Conceptual Distinctions with Classical Bootstraps}

With the above procedural descriptions, we now explain the underpinning principle of the cheap bootstrap bagging interval and its novelty relative to the conventional bootstrap. To begin with, in the latter approach, validity is typically understood through a two-stage argument. The first stage is the statistical approximation: the conditional distribution of the bootstrap estimator, given the observed data, is used as a surrogate for the sampling distribution of the original estimator. The second stage is the numerical approximation: because this conditional distribution is not available in closed form, its relevant summaries, such as a standard error or quantile, are estimated by Monte Carlo simulation over many bootstrap resamples. Thus, the statistical approximation error and the Monte Carlo error are treated separately. As a consequence, conventional bootstrap intervals generally require a sufficiently large number of bootstrap replications so that the Monte Carlo approximation of the resampling distribution is accurate. When only a small number of bootstrap replications is used, this second approximation can be too noisy, and the usual bootstrap validity argument no longer applies.

The cheap bootstrap takes a different, single-stage perspective. Rather than using many bootstrap resamples to approximate the full conditional resampling distribution, it works with the joint distribution of the original estimator and a fixed number of bootstrap recomputations. Under suitable regularity conditions, these quantities are asymptotically Gaussian and asymptotically independent after normalization. This joint limit makes it possible to construct a Studentized pivotal statistic whose limiting distribution is Student \(t\), with degrees of freedom determined by the number \(B\) of bootstrap recomputations. In this formulation, the statistical fluctuation of the original estimator and the Monte Carlo fluctuation from using only finitely many bootstrap recomputations are handled together, rather than in two separate approximation steps. Consequently, \(B\) need not grow with the sample size; it can remain fixed, and can in principle be as small as \(B=1\). A smaller \(B\) leads to a wider interval through the corresponding \(t\)-critical value, but does not harm the coverage guarantee.

This distinction is important for bagged estimators. Applying a conventional bootstrap to \(U_{n,k,D}\) would require recomputing the incomplete bagged estimator inside each bootstrap resample. Since each such recomputation already involves \(D\) evaluations of the base procedure \(H_k\), the resulting nested resampling scheme can be very costly. The cheap bootstrap approach instead constructs the interval from the original incomplete bagged estimate together with only a small fixed number of bootstrap recomputations of the same estimator, significantly reducing the computational burden.

The following sections establish conditions under which \eqref{eq:cheap_boot_CI_Wk} attains asymptotically nominal coverage for \(W_k\) with fixed \(B\), and characterize growth conditions on \(D\) that make the Monte Carlo error from incomplete bagging asymptotically negligible.

\section{Main Theory}\label{sec:main results}
This section establishes the assumptions and main theoretical results underlying the asymptotic validity of the cheap bootstrap bagging interval. In Section~\ref{sec:main_result_U_statistic}, we first introduce sufficient regularity conditions under which the procedure delivers asymptotically correct \((1-\alpha)\) coverage for the target estimand \(W_k\). Under these conditions, we have the main coverage guarantee for the interval \(I_{n,k,D,B}\) constructed in Algorithm~\ref{alg:cheap_bootstrap_bagging}. We then discuss in Section \ref{sec: main_result_V_statistic} that the same inference principle extends to incomplete V-statistic bagged estimators, corresponding to bagging with replacement. The proof of the main result uses the complete U-statistic as an idealized benchmark. Accordingly, we also collect the supporting results for the complete U-statistic in Section \ref{sec: main_results_complete_U}, including the asymptotic normality and bootstrap validity ingredients needed to transfer the complete-statistic guarantee to the incomplete estimator once the finite-\(D\) Monte Carlo error is controlled. In Section~\ref{sec:Applications}, we verify that the stated regularity conditions are satisfied for several commonly used classes of bagged estimators.

\subsection{Inference for Bagging without Replacement}\label{sec:main_result_U_statistic}

We begin with the following regularity conditions.
\begin{assumption}\label{asm: h_bounded}
   There exists a constant $K_0<\infty$ such that
   \begin{equation}\label{eqn: h_bounded_P}
       |H_k(\xi_1,\ldots,\xi_k)| \le K_0 \qquad \P\text{-a.s.}
   \end{equation}
   Moreover, under the bootstrap law induced by $\P_n$, we have
   \begin{equation}\label{eqn: h_bounded_Pn}
       \pr^*\!\bigl(|H_k(\xi_1^*,\ldots,\xi_k^*)|\le K_0\bigr)=1
       \qquad \P\text{-a.s.}
   \end{equation}
\end{assumption}
The second requirement (\ref{eqn: h_bounded_Pn}) in the above assumption 
does \emph{not} follow automatically from the unconditional bound
(\ref{eqn: h_bounded_P}). It is possible for an adversarially defined kernel $H_k$ to be bounded $\P$-a.s.\ while becoming unbounded on configurations that have $\P$-probability zero but may receive positive probability under $\P_n$ (e.g., a kernel that blows up whenever two of its arguments coincide). Assumption~\ref{asm: h_bounded} rules out such pathologies by requiring that, for $\P$-a.e.\ realized dataset, a bootstrap draw from $\P_n$ falls in the bounded region with conditional probability one.

The next assumption imposes a bounded-differences condition on the kernel.

\begin{assumption}[Bounded sensitivity of the kernel]\label{asm: sup bound on H diff}
There exists a constant $K_1>0$ such that, for every $k$ and every $1 \le c \le k$, the kernel $H_k$ satisfies the following stability property under replacement of its first $c$ arguments. If $\xi_i',\xi_i \sim \P$, then
\[
\pr\!\left(
\bigl|
H_k(\xi_1',\ldots,\xi_c',\xi_{c+1},\ldots,\xi_k)
-
H_k(\xi_1,\ldots,\xi_c,\xi_{c+1},\ldots,\xi_k)
\bigr|
\;\le\;
K_1\,\frac{c}{k}
\right)=1.
\]
Likewise, under the bootstrap law induced by $\P_n$, if $\xi_i^*,\xi_i^{*'} \sim \P_n$, then
\[
\pr^*\!\left(
\bigl|
H_k(\xi_1^{*'},\ldots,\xi_c^{*'},\xi_{c+1}^*,\ldots,\xi_k^*)
-
H_k(\xi_1^*,\ldots,\xi_c^*,\xi_{c+1}^*,\ldots,\xi_k^*)
\bigr|
\;\le\;
K_1\,\frac{c}{k}
\right)=1,
\qquad \P\text{-a.s.}
\]
\end{assumption}

This condition states that replacing up to $c$ arguments of the kernel perturbs its value by at most a constant multiple of $c/k$.  
It is a uniform bounded-difference assumption ensuring that $H_k$ is not overly sensitive to changes in a small subset of its inputs.  
The bootstrap version mirrors the requirement in Assumption~\ref{asm: h_bounded} and is needed to guarantee analogous stability when working under $\P_n$.

The following assumption restricts the growth of the kernel order relative to the sample size.

\begin{assumption}\label{asm: rate_of_k}
There exists a constant $\gamma>0$ such that
$k^{4+\gamma}/n \to 0$ as $n\to\infty$. For notational convenience, this implies the existence of some $0<\epsilon<1$ with $4\epsilon \le \gamma$, for which
\[
    \frac{k^{4(1+\epsilon)}}{n} \;\longrightarrow\; 0 
    \qquad \text{as } n\to\infty.
\]
\end{assumption}

From hereon, we allow the bagging budget to depend on the sample size and write it as \(D_n\). This is useful because the asymptotic validity of the incomplete bagged estimator depends on whether the Monte Carlo error from using finitely many subsamples vanishes sufficiently fast as \(n\) grows. Thus, although \(D\) denotes a fixed computational choice in the finite-sample implementation, in the theoretical analysis we consider a sequence \(D=D_n\). Suppose we construct the \emph{incomplete} U-statistic by evaluating \(H_k\) on only \(D_n\) randomly selected subsamples of size \(k\) and averaging the resulting values.  We make the following assumptions on the growth rate of $D_n$:

\begin{assumption}\label{asm: D_n growth}
We assume that
\[
    \frac{n}{D_n\cdot k^{1-2\epsilon}}  \to 0 \quad\text{ as } n \to \infty,
\]
where $\epsilon$ is as defined in Assumption \ref{asm: rate_of_k}.
\end{assumption}

The role of Assumption~\ref{asm: D_n growth} is to control the additional Monte Carlo error introduced by incomplete bagging. Our analysis will first establish the coverage guarantee for the complete U-statistic \(U_{n,k}\), which represents the idealized fully bagged estimator. We then show that the incomplete estimator \(U_{n,k,D_n}\) is sufficiently close to \(U_{n,k}\) under Assumption~\ref{asm: D_n growth}. Consequently, the incomplete bagged estimator inherits the first-order asymptotic behavior of the complete U-statistic, and the finite-\(D_n\) approximation does not affect the resulting coverage guarantee. 

To state the coverage result, we introduce a first-order component that governs the asymptotic fluctuations of the U-statistic. For the kernel \(H_k\), define:
\begin{equation}\label{eq: Hajek_H_k}
    g_k(\xi)
    :=
    \E\!\left[H_k(\xi,\xi_2,\ldots,\xi_k)\right]-W_k .
\end{equation}

This function is the first-order Hájek projection component of the kernel. It captures the effect of perturbing a single observation and determines the leading variance term in the asymptotic distribution of \(U_{n,k}\). 

From the output of Algorithm \ref{alg:cheap_bootstrap_bagging}, we have $$I_{n,k,D_n,B} = \Bigl[
\,U_{n,k,D_n} \pm t_{B,1-\alpha/2}\, S_{n,k,D_n,B}\,
\Bigr]$$
Additionally, we consider a relaxed version of the above:
\begin{equation}\label{eqn: I and I_tilde}
    \tilde I_{n,k,D_n,B} = \Bigl[
\,U_{n,k,D_n} \pm t_{B,1-\alpha/2}\, S_{n,k,D_n,B}+ o_p\!\left(n^{-1/2}\right)\,
\Bigr]
\end{equation}
where $S_{n,k,D_n,B}$ and $U_{n,k,D_n}$ are as constructed from Algorithm \ref{alg:cheap_bootstrap_bagging}. Note that, and perhaps quite obviously, in practice we would be able to implement $I_{n,k,D_n,B}$ but not $\tilde I_{n,k,D_n,B}$ since the latter has an implicit remainder term. However, $\tilde I_{n,k,D_n,B}$ is a useful artifact to obtain coverage guarantees as we will see momentarily. We now state the main coverage result for the incomplete bagged estimator. The result shows that, under the stated regularity and growth conditions, the cheap bootstrap bagging interval constructed from \(U_{n,k,D_n}\) attains the desired asymptotic coverage for the bagged target \(W_k\).

 % to accommodate for the variance degeneracy case
\begin{theorem}\label{thm: growing_k_main_CI_D_n}
Under the Assumptions \ref{asm: h_bounded}-\ref{asm: D_n growth}, for any $B\geq 1$, we have
\[
    \liminf_{n\to\infty}
    \pr\!\left(W_k \in \tilde I_{n,k,D_n,B}\right)
    \;\ge\; 1-\alpha.
\]
Moreover, if the non-degeneracy condition
$\liminf_{k\to\infty} k^2 V(g_k(\xi))>0$ holds, then
\[
    \lim_{n\to\infty}
    \pr\!\left(W_k \in I_{n,k,D_n,B}\right)
    \;=\; 1-\alpha.
\]
\end{theorem}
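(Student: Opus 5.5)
The proof reduces the incomplete bagged estimator to the complete $U$-statistic and then invokes the cheap-bootstrap pivotal limit. Write
\[
U_{n,k,D_n}-W_k=(U_{n,k}-W_k)+(U_{n,k,D_n}-U_{n,k}),\qquad
(U_{n,k,D_n})^{*b}-U_{n,k,D_n}=\bigl(U_{n,k}^{*b}-U_{n,k}\bigr)+R_b ,
\]
where $R_b$ collects the Monte Carlo errors from incomplete bagging in the original sample and in the $b$-th resample.

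\textbf{Step 1 (Monte Carlo error).} Conditional on the data, the $D_n$ subsample evaluations are i.i.d. with mean $U_{n,k}$. By Assumption~\ref{asm: h_bounded} their variance is at most $K_0^2$, so
\[
\E\bigl[(U_{n,k,D_n}-U_{n,k})^2\bigr]\le K_0^2/D_n .
\]
The same bound holds for each bootstrap replicate $U^{*b}_{n,k,D_n}-U^{*b}_{n,k}$, using \eqref{eqn: h_bounded_Pn}. The natural scale of $U_{n,k}-W_k$ is $\sqrt{k^2 V(g_k)/n}$. Under the nondegeneracy condition this scale is of order $k^{-1/2}n^{-1/2}$. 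The Monte Carlo error is then negligible relative to it when $D_n k^{-1}\cdot k^{\text{(slack)}}/n\to\infty$, which is exactly Assumption~\ref{asm: D_n growth}. The slack $k^{2\epsilon}$ will absorb the case where $k^2V(g_k)$ decays slowly, or the crude $n^{-1/2}$ comparison in the first (degenerate-allowed) claim.

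A sharper variance bound would help here. Assumption~\ref{asm: sup bound on H diff} with $c=k$ only gives bounded variation; what I would try first is to bound the conditional variance of $H_k$ on subsamples by $O(1)$ and accept the $1/D_n$ rate.

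\textbf{Step 2 (complete $U$-statistic and bootstrap).} I will use the results of Section~\ref{sec: main_results_complete_U}. They give a Hájek-projection CLT for $\sqrt n(U_{n,k}-W_k)/(k\sigma_{g})$, with remainder controlled by the bounded-difference Assumption~\ref{asm: sup bound on H diff} through the usual variance bound $\sum_c \binom{k}{c}^2\binom{n}{k}^{-1}\binom{n-k}{k-c}\zeta_c$ with $\zeta_c\le K_1^2c^2/k^2$. They also give the bootstrap analogue under $\P_n$, including consistency of the bootstrap Hájek variance, which requires $k^{4+\gamma}/n\to0$ (Assumption~\ref{asm: rate_of_k}). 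Together these yield the joint convergence
\[
\frac{\sqrt n}{k\sigma_g}\bigl(U_{n,k}-W_k,\;U^{*1}_{n,k}-U_{n,k},\dots,U^{*B}_{n,k}-U_{n,k}\bigr)\Rightarrow (Z_0,Z_1,\dots,Z_B)
\]
with i.i.d. standard normal limits. By the cheap-bootstrap argument of \cite{lam2022cheap}, the continuous mapping $Z_0/\sqrt{B^{-1}\sum Z_b^2}\sim t_B$ gives exact asymptotic coverage. Combined with Step~1 via Slutsky, this proves the second claim.

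\textbf{Step 3 (degenerate case).} Without nondegeneracy the normalization may blow up, so I will argue directly on the $n^{-1/2}$ scale. Both $U_{n,k}-W_k$ and the bootstrap deviations have standard deviations $\sigma_n=O(k\sigma_g/\sqrt n)+O(k/n)$. Take a subsequence. If $\sqrt n\,\sigma_n\to0$, the center error is $o_p(n^{-1/2})$ and is absorbed by the $o_p(n^{-1/2})$ widening in $\tilde I$, so coverage tends to $1$. Otherwise $\sqrt n\,\sigma_n$ is bounded away from zero along the subsequence and Step~2 applies, giving coverage $1-\alpha$. A subsequence argument then yields the $\liminf$.

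\textbf{Main obstacle.} I expect the main difficulty to be the joint asymptotic independence and normality of the bootstrap replicates with the original statistic when $k$ grows, namely showing that the bootstrap Hájek variance under $\P_n$ matches $k^2V(g_k)/n$ to relative precision $o(1)$. The bootstrap kernel evaluated on tuples with ties must be handled through \eqref{eqn: h_bounded_Pn}, and the $k^4/n$ rate arises there. I would lean on the complete-U section for this.
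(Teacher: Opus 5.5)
Your Step 2 and the cheap-bootstrap assembly follow the paper. The argument breaks in Step 1, which carries the theorem's new content.

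\textbf{The scale is misstated.} Under nondegeneracy, $k^2\V(g_k(\xi))$ is bounded away from zero, and by Lemma~\ref{lemma: g_bounds} it is also bounded above. Hence $\sqrt{k^2\V(g_k(\xi))/n}\asymp n^{-1/2}$, not $k^{-1/2}n^{-1/2}$.

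\textbf{The crude variance bound is too weak.} With conditional variance bounded by $K_0^2$, the Monte Carlo error is of order $D_n^{-1/2}$. Making it negligible relative to $n^{-1/2}$ forces $D_n/n\to\infty$. The relaxed claim needs even more, since there the center error must be beaten on a scale as small as $n^{-1/2}k^{-\epsilon/2}$. Assumption~\ref{asm: D_n growth} only gives $D_nk^{1-2\epsilon}/n\to\infty$. This permits $D_n=o(n)$, e.g.\ $D_n=nk^{-(1-2\epsilon)/2}$ when $\epsilon<1/2$, and that sub-linear budget is the point of the result (cf.\ Table~\ref{tab:lit_review_summary}). So your claim that the condition ``is exactly Assumption~\ref{asm: D_n growth}'' is false. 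The powers of $k$ do not match even under your own premises, which would give $D_n/(kn)\to\infty$.

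\textbf{The missing idea: the conditional variance is $O(1/k)$.} A single bagged evaluation has conditional variance $O(1/k)$ almost surely, uniformly in the data (Lemma~\ref{lemma: var_bound F_n}). You set aside Assumption~\ref{asm: sup bound on H diff} because $c=k$ gives only bounded variation, but the right use is $c=1$.
\begin{itemize}
\item Each coordinate moves $H_k$ by at most $K_1/k$. Efron--Stein (Lemma~\ref{lemma: var_bound_Tk}), applied under the bootstrap-law halves of Assumptions~\ref{asm: h_bounded}--\ref{asm: sup bound on H diff}, then gives $V_n(H_k^2)-V_n(H_k)^2=\V_{\P_n}(H_k)\le K_1^2/k$.
\item The variance under uniform sampling of $k$-subsets without replacement is $U_n(H_k^2)-U_n(H_k)^2$. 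By Theorem~\ref{thm: diff_U_V} this differs from the above by $O(k^2/n)$, and $k^2/n=o(1/k)$ by Assumption~\ref{asm: rate_of_k}.
\item Hence $\Es[(U_{n,k,D_n}-U_{n,k})^2]\lesssim 1/(kD_n)$. The requirement $nk^{2\epsilon}/(kD_n)\to0$ is then precisely Assumption~\ref{asm: D_n growth}, giving $\sqrt n\,k^{\epsilon}|U_{n,k,D_n}-U_{n,k}|=o_p(1)$.
\item The same bound under $F_n^*$ makes each $U^{*b}_{n,k,D_n}-U^{*b}_{n,k}$ negligible relative to $\sigma^*_{n,k}$.
\end{itemize}

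\textbf{Step 3 needs the paper's split.} The $k^{2\epsilon}$ slack is what handles the degenerate case, but your dichotomy ``$\sqrt n\sigma_n\to0$ versus bounded away from zero'' does not close with a widening fixed in advance. The paper's widening shrinks polynomially in $k$. Subsequences along which $k^2\V(g_k(\xi))\to0$ slowly, e.g.\ like $1/\log k$, land on your ``absorbed'' side, yet their center error is much larger than the widening. Split instead at $k^{2+\epsilon}\V(g_k(\xi))\gtrless1$, as the paper does.
\begin{itemize}
\item On the first side, Theorems~\ref{thm: clt_bagging_modified} and~\ref{thm: growing_k_main_theorem} still give the CLT and bootstrap consistency, even when $k^2\V(g_k(\xi))\to0$.
\item On the other side, a widening such as $n^{-1/2}k^{-\epsilon/4}$ dominates the center error, including the Monte Carlo part controlled above.
\end{itemize}
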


Theorem~\ref{thm: growing_k_main_CI_D_n} is the main coverage result for the incomplete bagged estimator. It shows that the cheap bootstrap bagging interval achieves asymptotically valid coverage for the population target \(W_k\), even though the estimator is computed using only \(D_n\) randomly selected subsamples rather than the complete collection of all subsamples. The first statement gives a conservative coverage guarantee without imposing a nondegeneracy condition on the leading variance term. This statement should be interpreted as that, without the nondegeneracy condition, the implementable interval $I_{n,k,D_n,B}$ covers the target $W_k$ up to an $o_p(n^{-1/2})$ error. In particular, this error is either negligible compared to the width of $I_{n,k,D_n,B}$ (if the width is of order $n^{-1/2}$), or else the interval width itself is negligible compared to the usual order $n^{-1/2}$. When the first-order variance is nondegenerate, in the sense that
\(
    \liminf_{k\to\infty} k^2 \V(g_k(\xi))>0,
\)
the implementable interval \(I_{n,k,D_n,B}\) attains exact asymptotic coverage \(1-\alpha\).

\subsection{Inference for Bagging with Replacement}\label{sec: main_result_V_statistic}

Theorem \ref{thm: growing_k_main_CI_D_n} was developed for bagged estimators expressed as U-statistics, where the bagged subsamples are drawn without replacement.
In practice, however, it is often more convenient to work with sampling \emph{with}
replacement, either for algorithmic simplicity or because the bagging procedure itself
is implemented via bootstrap resampling.
This naturally leads to V-statistics, which we introduce and analyze in this section. Recall that the complete U-statistic is associated with the kernel $H_k$ from \eqref{eq:U_nk_def} where each $k$-tuple consists of distinct indices. In contrast, the corresponding V-statistic averages $H_k$ over all ordered $k$-tuples,
allowing repeated indices:
\begin{equation}\label{eq:V_nk_def}
    V_{n,k}
:=
\frac{1}{n^k}
\sum_{i_1=1}^n \cdots \sum_{i_k=1}^n
H_k(\xi_{i_1},\ldots,\xi_{i_k})
\end{equation}
As in the U-statistic case, computing $V_{n,k}$ exactly is infeasible when $n$ and $k$
are large.
Accordingly, we consider an \emph{incomplete} V-statistic obtained by Monte Carlo
averaging over $D_n$ i.i.d.\ $k$-tuples drawn with replacement from $\{1,\ldots,n\}$:
\[
V_{n,k,D_n}
:=
\frac{1}{D_n}
\sum_{d=1}^{D_n}
H_k\!\bigl(\xi_{I^{(d)}_1},\ldots,\xi_{I^{(d)}_k}\bigr),
\qquad
(I^{(d)}_1,\ldots,I^{(d)}_k)
\overset{\text{i.i.d.}}{\sim} \text{Unif}\{1,\ldots,n\}^k .
\]

A key observation underlying our analysis is that, under the regularity conditions
imposed earlier, U- and V-statistics are asymptotically equivalent at the $\sqrt{n}$
scale. 
Formally, under boundedness and stability assumptions on $H_k$ and the growth condition
on $k$, the difference $V_{n,k}-U_{n,k}$ is $o_p(n^{-1/2})$, and the same holds for their
incomplete counterparts provided $D_n$ grows sufficiently fast. As a consequence, the asymptotic distribution of $V_{n,k,D_n}$ coincides with that of
$U_{n,k,D_n}$, and the bootstrap arguments developed for the U-statistics carry over without
modification.

Let $\bigl(V_{n,k,D_n}\bigr)^{*b}$ denote the incomplete V-statistic computed on the $b$th bootstrap
resample, and define the bootstrap variance estimator

\[
\left(S^{V}_{n,k,D_n,B}\right)^2
:=
\frac{1}{B}\sum_{b=1}^B
\left(
\bigl(V_{n,k,D_n}\bigr)^{*b} - V_{n,k,D_n}
\right)^2.
\]

We construct the cheap bootstrap bagging interval
\begin{equation}\label{eq: V_interval}
    I^{V}_{n,k,D_n,B}
:=
\left[
V_{n,k,D_n}
\pm
t_{B,1-\alpha/2}\, S^{V}_{n,k,D_n,B}
\right]
\end{equation}

\noindent with the analogous theoretical relaxation obtained by adding an $o_p(n^{-1/2})$
remainder term, as in the U-statistic case:
\begin{equation}\label{eq: V_tilde_interval}
    \tilde I^{V}_{n,k,D_n,B}
:=
\left[
V_{n,k,D_n}
\pm
t_{B,1-\alpha/2}\, S^{V}_{n,k,D_n,B}
+ o_p\!\left(n^{-1/2}\right)
\right].
\end{equation}

The following theorem establishes the asymptotic validity of the cheap bootstrap bagging interval based on the incomplete V-statistic.

\begin{theorem}\label{thm: V_stat_CI}
Suppose Assumptions \ref{asm: h_bounded}-\ref{asm: D_n growth} hold. Then for any $B\geq 1$, we have:
\[
\liminf_{n\to\infty}
\pr\!\left(W_k \in \tilde I^{V}_{n,k,D_n,B}\right)
\;\ge\; 1-\alpha .
\]
Moreover, if the non-degeneracy condition
$\liminf_{k\to\infty} k^2 V(g_k(\xi))>0$ holds, then
\[
\lim_{n\to\infty}
\pr\!\left(W_k \in I^{V}_{n,k,D_n,B}\right)
\;=\; 1-\alpha .
\]
\end{theorem}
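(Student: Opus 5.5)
The plan is to reduce Theorem~\ref{thm: V_stat_CI} to Theorem~\ref{thm: growing_k_main_CI_D_n}. We show that the incomplete V-statistic and its bootstrap replicates differ from the incomplete U-statistic counterparts by $o_p(n^{-1/2})$, jointly over the original and the $B$ bootstrap copies. Then the pivot
\[
T_n^V=\frac{V_{n,k,D_n}-W_k}{S^V_{n,k,D_n,B}}
\]
has the same limit as the U-pivot. We split
\[
V_{n,k,D_n}-U_{n,k,D_n}=(V_{n,k,D_n}-V_{n,k})+(V_{n,k}-U_{n,k})+(U_{n,k}-U_{n,k,D_n}).
\]
The third term is already controlled in the proof of Theorem~\ref{thm: growing_k_main_CI_D_n} under Assumption~\ref{asm: D_n growth}.

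For the first term we condition on the data. The summands are i.i.d.\ draws with mean $V_{n,k}$, so the conditional variance equals $D_n^{-1}\V(H_k(\xi_{I_1},\ldots,\xi_{I_k})\mid\xi_{1:n})$. Bounding this by $K_0^2/D_n$ is too crude. Instead we would reuse the sharper argument from the U-case: the conditional variance of a single random-subsample evaluation is $O(1/k)$ up to the $k^{-2\epsilon}$ slack. This comes from Assumption~\ref{asm: sup bound on H diff}, since the kernel is a bounded-differences function of $k$ i.i.d.\ (under $\P_n$) indices, each with influence $K_1/k$. McDiarmid/Efron--Stein then gives variance $\le K_1^2/k$. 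Hence $n\,\E(V_{n,k,D_n}-V_{n,k})^2\lesssim n/(D_n k)\to0$ by Assumption~\ref{asm: D_n growth}.

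For the second term we use the standard decomposition of $V_{n,k}$ by the pattern of tied indices. The fraction of $k$-tuples with at least one repeat is $1-\prod_{j<k}(1-j/n)\le k^2/(2n)$. We do not bound the tied part crudely by $K_0k^2/n$, since that is $o(n^{-1/2})$ only when $k=o(n^{1/4})$, which Assumption~\ref{asm: rate_of_k} does give. A tuple with $r$ repeats can be compared with a distinct-index tuple by replacing $r$ coordinates. By Assumption~\ref{asm: sup bound on H diff} this changes $H_k$ by at most $K_1r/k$. Thus
\[
|V_{n,k}-U_{n,k}|\le K_1\,\E[R]/k+\text{(reweighting error)}=O(k/n),
\]
where $R$ is the number of collisions in a uniform $k$-tuple. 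Either route gives $o(n^{-1/2})$ deterministically a.s. The comparison needs care because $U_{n,k}$ averages over distinct tuples uniformly, while the "repaired" tuples are not exactly uniform. I would handle this by a coupling: draw a uniform tuple with replacement and resample the colliding coordinates among the unused indices, which yields an exactly uniform distinct tuple.

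The same three bounds must be established for the bootstrap replicates, with $\P_n$ in place of $\P$. The bootstrap halves of Assumptions~\ref{asm: h_bounded} and~\ref{asm: sup bound on H diff} make every step above go through verbatim conditionally on $\xi_{1:n}$, giving $(V_{n,k,D_n})^{*b}-(U_{n,k,D_n})^{*b}=o_p(n^{-1/2})$. Hence $\sqrt n\,S^V-\sqrt n\,S=o_p(1)$. Substituting into the proof of Theorem~\ref{thm: growing_k_main_CI_D_n}, the $o_p(n^{-1/2})$ discrepancy is absorbed into the slack of $\tilde I^V$, giving the conservative statement. Under non-degeneracy, $\sqrt n\,S$ is bounded away from zero in probability, because its limit is a scaled $\chi_B$ with positive scale $\sim k^2\V(g_k)\cdot(n/k^2)\cdot n^{-1}$. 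So the perturbation is negligible relative to the width, and Slutsky gives the exact limit $1-\alpha$.

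The main obstacle is the first term in its bootstrap version. Bootstrap data contain ties, so a "distinct-index" subsample of the resample may still repeat original points. The variance bound must rely only on the bootstrap bounded-differences assumption, not on the distinctness of values.
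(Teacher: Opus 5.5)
Your proposal is correct. Its ingredients match the paper's: the same three-term split of $V_{n,k,D_n}-U_{n,k,D_n}$, Efron--Stein under $\P_n$ for the with-replacement Monte Carlo error as in Lemma~\ref{lemma: var_bound_Fn_V}, and a deterministic U--V gap. What differs is how the result is transferred.

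The paper does not reduce to Theorem~\ref{thm: growing_k_main_CI_D_n}. Along subsequences with $k^{2+\epsilon}\V(g_k(\xi))>1$, it re-verifies both conditions of Assumption~\ref{Asm: bootstrap_validity} for the V-estimator:
\begin{itemize}
\item it proves \eqref{eqn: thm4_CLT};
\item it proves \eqref{eqn: thm_4_conditional_CLT} by routing $V^*_{n,k,D_n}-V_{n,k,D_n}$ through the complete statistics $V^*_{n,k},U^*_{n,k},U_{n,k},V_{n,k}$, normalizing every piece by $\sigma^*_{n,k}$, and invoking Theorem~\ref{thm: growing_k_main_theorem} and conditional Slutsky;
\item it then reruns the subsequence argument of Corollary~\ref{corr: growing_k_main_CI}.
\end{itemize}

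You instead couple the V-procedure with the U-procedure on the same bootstrap resamples and show that all $B+1$ estimates differ by $o_p(n^{-1/2})$ unconditionally. The relaxed statement then follows by event inclusion: the new slack $r_n+|V_{n,k,D_n}-U_{n,k,D_n}|+t_{B,1-\alpha/2}|S^V_{n,k,D_n,B}-S_{n,k,D_n,B}|$ is still $o_p(n^{-1/2})$. So the degenerate/non-degenerate split is inherited rather than redone, and the exact statement follows by Slutsky on the pivot.

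Your route is more modular. Any scheme whose estimate and replicates are $o_p(n^{-1/2})$-close inherits the guarantees, and you avoid the $\sigma^*_{n,k}$-normalized conditional arguments. The price is that you draw on the proof of Theorem~\ref{thm: growing_k_main_CI_D_n}, not only its statement. You need the $t_B$ pivot limit and the $\chi_B$ behaviour of $\sqrt n\,S_{n,k,D_n,B}$, which is what keeps it bounded away from zero. The paper's route is self-contained at the level of the cheap-bootstrap ingredients.

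Your coupling bound $|V_{n,k}-U_{n,k}|=O(k/n)$ is valid and sharper than the $O(k^2/n)$ of Theorem~\ref{thm: diff_U_V}. It rests on the bootstrap half of Assumption~\ref{asm: sup bound on H diff}, because tuples with repeated data points are not covered by the $\P$-half. It gains nothing here, since $k^4/n\to0$ is assumed anyway.

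Two small slips:
\begin{itemize}
\item The scale of $\sqrt n\,S_{n,k,D_n,B}$ is $k\sqrt{\V(g_k(\xi))}$ times $\chi_B/\sqrt B$. The expression you wrote simplifies to $\V(g_k(\xi))\to0$, which would contradict your own claim.
\item The obstacle you flag at the end does not arise for the V-term. With replacement, the inner tuple on a bootstrap resample is i.i.d.\ from $\P_n^*$, so Efron--Stein applies directly. The ties issue concerns the without-replacement replicate $(U_{n,k,D_n})^{*b}$, which Lemma~\ref{lemma: subsampled D_n}(3) already handles.
\end{itemize}
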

Theorem~\ref{thm: V_stat_CI} shows that replacing the U-statistic by its V-statistic version does not
affect the asymptotic validity of the cheap bootstrap bagging interval.
% From a practical perspective, this provides additional flexibility: bagging procedures
% implemented via sampling with replacement can be analyzed directly, without requiring
% explicit symmetrization or conversion to U-statistics.

\subsection{Development Roadmap of Main Coverage Results}
\label{sec: main_results_complete_U}

This subsection develops the auxiliary results that underlie our main coverage theorems. We first study the complete bagged estimator \(U_{n,k}\),
which corresponds to the idealized case \(D=\infty\), where the base learner \(H_k\) is
averaged over all \(\binom{n}{k}\) subsamples. Although this complete U-statistic is
not the estimator computed in practice, it provides the natural benchmark for the
analysis. Once the cheap bootstrap bagging interval is shown to be valid for the fully bagged
estimator, the remaining task is to control the additional Monte Carlo error introduced
by using only \(D_n\) randomly selected subsamples, which we will elaborate later in this subsection.

The proof strategy for $D = \infty$ relies on the general cheap bootstrap principle from Theorem 1 of \cite{Lam_cheap_bootstrap}, adapted to the
present setting (detailed in Section \ref{sec: Cheap_Bootstrap_Validity} of the Appendix), and also borrowing and generalizing the approaches in deriving central limit results in \cite{lam2018bounding}. The key idea is that in order to justify a cheap bootstrap bagging interval, it is enough to establish two matching asymptotic
approximations. First, the original estimator must satisfy a central limit theorem around
its population target. Second, the bootstrap analogue must satisfy a conditional central
limit theorem, given the data, with an asymptotically equivalent scale. In our setting,
these two requirements amount to proving a CLT for \(U_{n,k}-W_k\) and a corresponding
conditional bootstrap CLT for \(U^*_{n,k}-U_{n,k}\).

The analysis therefore proceeds in three steps. We first establish the asymptotic
normality of the complete U-statistic \(U_{n,k}\) when the kernel order \(k\) is
allowed to grow with the sample size. We then show that the bootstrap complete
U-statistic reproduces the same first-order behavior conditionally on the observed
data. Finally, these two results are combined to obtain the coverage guarantee for the
cheap bootstrap bagging interval centered at \(U_{n,k}\).

 Recall the Hájek projection $g_k(\xi)$ defined in \eqref{eq: Hajek_H_k}. Our next result identifies the leading stochastic fluctuation of the complete bagged
estimator. Under the stated growth condition on \(k\), the Hájek projection governs the
first-order behavior of \(U_{n,k}\), and the remaining higher-order terms are
asymptotically negligible at the relevant scale.

\begin{theorem}
\label{thm: clt_bagging_modified}
Under Assumptions~\ref{asm: h_bounded} and \ref{asm: rate_of_k}, we have
\[
    \sqrt{n}\,(U_{n,k} - W_k)
    \;=\;
    k \sqrt{V(g_k(\xi))}\, Z_{n,k}
    \;+\; o_p\!\left(k^{-\epsilon}\right),
\]
where \(Z_{n,k}\) is a sequence of random variables satisfying
\(
    \E[Z_{n,k}] = 0 \text{ and }
    \Var(Z_{n,k}) = 1.
\)
Moreover, for any subsequence along which \(k^{2+\epsilon} V(g_k(\xi)) > 1\), we have
\[
    Z_{n,k} \;\xrightarrow{d}\; N(0,1).
\]
\end{theorem}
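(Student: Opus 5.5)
We will use the Hoeffding (ANOVA) decomposition of the complete U-statistic with growing order $k$. Write
\[
U_{n,k}-W_k=\frac{k}{n}\sum_{i=1}^n g_k(\xi_i)+R_{n,k},
\]
where the first term is the Hájek projection and $R_{n,k}$ collects the projections of order $c\ge 2$. The natural candidate for the normalized leading term is
\[
Z_{n,k}:=\frac{\sum_{i=1}^n g_k(\xi_i)}{\sqrt{n\,V(g_k(\xi))}},
\]
with the convention $Z_{n,k}:=$ an independent $N(0,1)$ variable (or any fixed mean-zero, unit-variance variable) when $V(g_k)=0$. Then
\[
\sqrt n\,(U_{n,k}-W_k)=k\sqrt{V(g_k)}\,Z_{n,k}+\sqrt n\,R_{n,k},
\]
and $\E Z_{n,k}=0$, $\Var(Z_{n,k})=1$ hold trivially. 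The first claim therefore reduces to showing $\sqrt n\,R_{n,k}=o_p(k^{-\epsilon})$. By Chebyshev, it suffices that $n\,k^{2\epsilon}\,\Var(R_{n,k})\to 0$.

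For the variance of the remainder, we use the classical formula
\[
\Var(U_{n,k})=\binom{n}{k}^{-1}\sum_{c=1}^k\binom{k}{c}\binom{n-k}{k-c}\zeta_c,
\qquad \zeta_c=\Cov\bigl(H_k(S),H_k(S')\bigr)\ \text{with } |S\cap S'|=c.
\]
Equivalently, $\Var(R_{n,k})=\Var(U_{n,k})-\tfrac{k^2}{n}\zeta_1$. Using only Assumption~\ref{asm: h_bounded}, we have $\zeta_c\le K_0^2$, and the hypergeometric-type weights satisfy
\[
p_c:=\binom{n}{k}^{-1}\binom{k}{c}\binom{n-k}{k-c}\lesssim \frac{(k^2/n)^c}{c!}.
\]
Expanding $p_1$ gives $\tfrac{k^2}{n}\bigl(1+O(k^2/n)\bigr)$. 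Hence
\[
\Var(R_{n,k})\le \Bigl|p_1-\tfrac{k^2}{n}\Bigr|\zeta_1+\sum_{c\ge2}p_c K_0^2=O\!\left(\frac{k^4}{n^2}\right).
\]
Consequently $n\,k^{2\epsilon}\Var(R_{n,k})=O\bigl(k^{4+2\epsilon}/n\bigr)\to0$ by Assumption~\ref{asm: rate_of_k}, since $k^{4(1+\epsilon)}/n\to0$. The main technical care lies here. We must bound the ratio $\binom{n-k}{k-c}/\binom{n}{k}$ uniformly in $c$ when $k^2/n\to0$. We must also control the correction $|p_1-k^2/n|$, which needs $\zeta_1\le K_0^2$ and the expansion $\binom{n-k}{k-1}\binom{n}{k}^{-1}k=\tfrac{k^2}{n}\prod(1-\cdot)$. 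We will do this with elementary ratio bounds of the form $\prod_{j}(1-\tfrac{k}{n-j})\ge 1-\tfrac{2k^2}{n}$.

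For the CLT statement, fix a subsequence with $k^{2+\epsilon}V(g_k)>1$. $Z_{n,k}$ is a normalized sum of i.i.d.\ mean-zero variables from a triangular array, because $g_k$ depends on $n$ through $k$. We verify the Lyapunov condition. Since $|g_k|\le 2K_0$,
\[
\frac{n\,\E|g_k|^3}{\bigl(n V(g_k)\bigr)^{3/2}}\le\frac{2K_0}{\sqrt{n\,V(g_k)}}\le \frac{2K_0\,k^{1+\epsilon/2}}{\sqrt n}\to0,
\]
where the final limit again follows from Assumption~\ref{asm: rate_of_k}. Lindeberg--Feller then gives $Z_{n,k}\xrightarrow{d}N(0,1)$ along the subsequence. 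The boundedness of $g_k$ is actually sharper via Assumption~\ref{asm: sup bound on H diff} ($|g_k|\le K_1/k$), but boundedness from Assumption~\ref{asm: h_bounded} already suffices, consistent with the theorem's hypotheses.
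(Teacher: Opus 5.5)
Your proof is correct and has the same skeleton as the paper's: the H\'ajek projection, an $L^2$ bound on the remainder plus Chebyshev, the same definition of $Z_{n,k}$, and then Lyapunov. Both technical steps, however, are done differently.

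For the remainder, the paper invokes Lemma~\ref{lem:athey_bound}, an ANOVA/orthogonality bound adapted from Wager--Athey. It gives $\E[(U_{n,k}-\mathring{U}_{n,k})^2]\le \frac{k^2}{n^2}\Var(H_k)$, so that step only needs $k^{2+2\epsilon}/n\to0$. You instead use Hoeffding's variance formula with the crude bound $\zeta_c\le K_0^2$ and the weight estimates $p_c\le \epsilon_n^c/c!$ and $0\le k^2/n-p_1\le \frac{k^2}{n}\epsilon_n$, where $\epsilon_n=k^2/(n-k+1)$. This loses a factor $k^2$ (you get $O(k^4/n^2)$), but it is completely elementary. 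It still suffices because Assumption~\ref{asm: rate_of_k} supplies $k^{4+2\epsilon}/n\to0$. Since $p_1\le k^2/n$ and $\zeta_1\ge 0$, the $c=1$ correction is nonpositive, so you can simply drop it rather than estimate it.

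For the CLT, the paper checks a $(2+\epsilon)$-moment Lyapunov ratio via Lemma~\ref{lemma: g_bounds}. That lemma relies on Assumption~\ref{asm: sup bound on H diff}, even though the theorem lists only Assumptions~\ref{asm: h_bounded} and~\ref{asm: rate_of_k}. Your third-moment check using $|g_k|\le 2K_0$ needs only those two assumptions, so your argument actually matches the stated hypotheses more faithfully than the paper's.
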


Theorem~\ref{thm: clt_bagging_modified} provides the estimator-side approximation
needed for cheap bootstrap inference. It shows that, along non-degenerate subsequences,
the complete bagged estimator is asymptotically Gaussian with first-order scale
determined by \(k^2 V(g_k(\xi))/n\). The form and argument for Theorem~\ref{thm: clt_bagging_modified} leverage those of Theorem 3 in  
\cite{lam2018bounding}. However, \cite{lam2018bounding} focuses on optimality bounds for stochastic optimization problems and their specific structures, while our Theorem~\ref{thm: clt_bagging_modified} is general.
The next theorem establishes the corresponding
bootstrap approximation. Namely, conditionally on the observed data, the bootstrap
fluctuation \(U^*_{n,k}-U_{n,k}\) has the same limiting Gaussian behavior and the same
asymptotic scale.

\begin{theorem}
\label{thm: growing_k_main_theorem}
Consider data \(\xi_1,\ldots,\xi_n\), and let \(P_n\) denote the empirical distribution.
Let \(\xi_1^*,\ldots,\xi_n^* \sim P_n\) be an i.i.d.\ bootstrap resample. Suppose
Assumptions~\ref{asm: h_bounded}--\ref{asm: rate_of_k} hold. Consider any
subsequence along which \(k^{2+\epsilon} V(g_k(\xi))>1\), where \(\epsilon\) is as in
Assumption~\ref{asm: rate_of_k}. Then, along that subsequence, there exists a
\(P_n\)-measurable \(\sigma^*_{n,k}\) such that
\begin{equation}
\label{eqn: Full_U bootstrap convergence}
    \left|
    \pr^*\!\left(
        \frac{U_{n,k}^* - U_{n,k}}{\sigma^*_{n,k}} \le x
    \right)
    - \Phi(x)
    \right|
    \;\xrightarrow[]{p}\; 0,
\end{equation}
and moreover,
\begin{equation}
\label{eqn: growing_k_main_theorem: variance limit}
    \frac{n\,\sigma^{*2}_{n,k}}{k^2 V(g_k(\xi))}
    \;\xrightarrow[]{p}\; 1.
\end{equation}
\end{theorem}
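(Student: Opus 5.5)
The plan is to treat the bootstrap world as a copy of the population world with $\P$ replaced by $\P_n$, and to rerun the argument behind Theorem~\ref{thm: clt_bagging_modified} conditionally on the data. Define the bootstrap Hájek projection and target
\[
g_k^*(\xi) := \E^*\!\left[H_k(\xi,\xi_2^*,\ldots,\xi_k^*)\right] - W_k^*, \qquad W_k^* := \E^*[H_k(\xi_1^*,\ldots,\xi_k^*)].
\]
The first step is to show that $\E^*[U^*_{n,k}] - U_{n,k} = o_p(n^{-1/2})$. The mean $W_k^*$ is a V-statistic, whereas $U_{n,k}$ is the corresponding U-statistic. The fraction of $k$-tuples with a repeated index is $O(k^2/n)$, and by Assumption~\ref{asm: sup bound on H diff} each such tuple changes $H_k$ by at most $O(1/k)$ relative to a distinct-index tuple. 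Hence the difference is $O(k/n) = o(n^{-1/2})$ by Assumption~\ref{asm: rate_of_k}. The second step is the Hoeffding decomposition of $U^*_{n,k}$ under $\P_n$:
\[
U^*_{n,k} - W_k^* = \frac{k}{n}\sum_{i=1}^n g_k^*(\xi_i^*) + R_n^*.
\]
Because Assumptions~\ref{asm: h_bounded} and~\ref{asm: sup bound on H diff} hold $\P$-a.s.\ under $\P_n$, the variance bound on the remainder used for Theorem~\ref{thm: clt_bagging_modified} transfers verbatim. That argument bounds the higher-order Hoeffding terms using $c$-replacement sensitivity $K_1 c/k$, giving $\V^*(R_n^*) = O(k^4/n^2)$ or similar. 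Therefore $\sqrt{n}\,R_n^* = o_{p^*}(k^{-\epsilon})$ in $\P$-probability.

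The third step sets $\sigma^{*2}_{n,k} := k^2\,\V^*(g_k^*(\xi^*))/n$ and applies a conditional Lindeberg (or Berry--Esseen) CLT to the i.i.d.\ sum $\sum_i g_k^*(\xi_i^*)$. Note that $|g_k^*| \le 2K_0$, and in fact $|g_k^*| \le K_1/k$ by the $c=1$ sensitivity bound. So the Berry--Esseen ratio is at most $(K_1/k)/\bigl(\sqrt{n}\,\sqrt{\V^*(g_k^*)}\bigr)$. Once the variance limit of the fourth step holds, this is $O(k^{\epsilon/2}/\sqrt{n}) \to 0$ on the subsequence where $k^{2+\epsilon}\V(g_k) > 1$. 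The remainder is negligible relative to $\sigma^*_{n,k}$ because $\sqrt{n}\,\sigma^*_{n,k} \gtrsim k^{-\epsilon/2}$ while the remainder is $o(k^{-\epsilon})$. Slutsky's lemma in conditional form then yields \eqref{eqn: Full_U bootstrap convergence}, using that convergence in $\P$-probability of conditional Kolmogorov distances is preserved.

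The main obstacle is the fourth step, the variance ratio \eqref{eqn: growing_k_main_theorem: variance limit}: we need $k^2 \V^*(g_k^*)/(k^2 \V(g_k)) \to 1$ in probability when $\V(g_k)$ may be as small as $k^{-2-\epsilon}$. The plan is to compare $g_k^*(\xi_j)$ with $g_k(\xi_j)$ at the data points. For fixed $\xi$, the quantity $\E^*[H_k(\xi,\xi_2^*,\ldots)]$ is a V-statistic of order $k-1$ in the data, and its fluctuation around $\E[H_k(\xi,\xi_2,\ldots)]$ is of order $\sqrt{k^2/n}\cdot(1/k)$ with an extra factor, again by a Hoeffding argument with sensitivity $K_1/k$. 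I expect
\[
\frac{1}{n}\sum_{j=1}^n \bigl(g_k^*(\xi_j) - g_k(\xi_j)\bigr)^2 = O_p\!\left(\frac{1}{n} + \frac{k^2}{n^2}\right)\cdot \text{(small)}
\]
— roughly $O_p(1/n)$ after the centering terms cancel. Separately, $\frac{1}{n}\sum_{j}(g_k(\xi_j) - \bar g)^2$ concentrates around $\V(g_k)$ with relative error $O_p\bigl((K_1/k)^2/(\sqrt{n}\,\V(g_k))\bigr) = O_p(k^{\epsilon}/\sqrt{n})$. Both errors are $o(\V(g_k)) = o(k^{-2-\epsilon})$ precisely because $k^{4+4\epsilon}/n \to 0$. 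This is where Assumption~\ref{asm: rate_of_k} is used in full strength, so I would carry out this bound carefully, tracking the dependence of $g_k^*$ on the same data points at which it is evaluated via a leave-one-out adjustment.
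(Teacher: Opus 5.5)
Your proposal is correct. Steps 1--3 follow the paper's skeleton: the same $\sigma^{*2}_{n,k}=k^2\V^*(g_k^*)/n$, a conditional Berry--Esseen bound for the bootstrap Hájek projection centered at $V_{n,k}$, and conditional Slutsky. There are two harmless variations.
\begin{itemize}
\item You bound the Hoeffding remainder directly against $\sigma^*_{n,k}$. The paper instead proves $\V^*(U^*_{n,k})/\V^*(\mathring U^*_{n,k})\to 1$. Note that its remainder bound $\E^*[(R_n^*)^2]\le (k^2/n^2)\V^*(H_k)$ comes from ANOVA orthogonality, not from sensitivity.
\item You use sensitivity to get an $O(k/n)$ U--V gap, where the paper settles for $O(k^2/n)$ from boundedness. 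A tuple with $r$ repeated positions moves $H_k$ by up to $K_1r/k$, not $O(1/k)$. So you need a coupling of i.i.d.\ indices with a uniform distinct-index tuple that alters about $k(k-1)/(2n)$ positions on average. Also, the target there is $o(\sigma^*_{n,k})$ rather than $o(n^{-1/2})$; it still holds because $k/n=o(k^{-\epsilon/2}n^{-1/2})$.
\end{itemize}

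The genuine difference is the variance ratio. The paper writes $\zeta^*_{k,1}$ as a product-kernel V-statistic of order $2k-1$ minus $(W_k^*)^2$. It then compares each piece separately with its population analogue. Each piece fluctuates at order $n^{-1/2}$ and must be $o(k^{-2-\epsilon})$, which is where the paper uses $k^{4+2\epsilon}/n\to0$.

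Your pointwise comparison works instead with $L^2(\P_n)$ norms of centered projections. The bound you anticipate, $\frac1n\sum_j(g_k^*(\xi_j)-g_k(\xi_j))^2=O_p(1/n)$, is right and needs no extra ``small'' factor. The pieces are as follows:
\begin{itemize}
\item Dropping index $j$ from the inner V-statistic costs at most $K_1/n$. The bootstrap sensitivity bound holds deterministically on data tuples, since $\P_n$ charges all of them.
\item The leave-one-out V-statistic of order $k-1$ is within $O(k/n)$ of its U-statistic.
\item That U-statistic has conditional variance $O(1/n)$, uniformly in $\xi_j$, by Efron--Stein and the paper's U-statistic variance bound.
\item $V_{n,k}-W_k=O_p(n^{-1/2})$.
\end{itemize}

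Then compare square roots rather than variances. Let $s_n^2$ be the empirical variance of $g_k(\xi_1),\dots,g_k(\xi_n)$. The triangle inequality gives $|\sqrt{\V^*(g_k^*)}-s_n|=O_p(n^{-1/2})$, which avoids the cross term. This is $o(k^{-1-\epsilon/2})$ as soon as $k^{2+\epsilon}/n\to0$, so Assumption~\ref{asm: rate_of_k} is not needed in full strength there, contrary to your remark.

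Your route buys a weaker rate requirement and a transparent plug-in reading: $g_k^*$ estimates $g_k$ to accuracy $n^{-1/2}$ in $L^2(\P_n)$. The paper's global representation buys a proof that reduces entirely to generic U/V-statistic bounds. It never has to handle $g_k^*$ being evaluated at the same data points it is built from.
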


Together, Theorems~\ref{thm: clt_bagging_modified} and
\ref{thm: growing_k_main_theorem} show that the sampling distribution of the complete
bagged estimator and its conditional bootstrap counterpart agree to first order. This
is precisely the structure needed for the cheap bootstrap. Now, define the scale estimator
\[
    S_{n,k,B}^2
    \;:=\;
    \frac{1}{B}\sum_{b=1}^B
    \left(U^{*b}_{n,k}-U_{n,k}\right)^2 ,
\]
where \(U^{*b}_{n,k}\) denotes the complete U-statistic computed from the \(b\)-th
bootstrap resample. The following corollary converts the preceding estimator and
bootstrap approximations into a coverage statement for \(W_k\).

\begin{corollary}
\label{corr: growing_k_main_CI}
Consider the cheap bootstrap bagging intervals
\[
\tilde I_{n,k,\infty,B}
:=
\left[
U_{n,k} \pm t_{B,1-\alpha/2}\, S_{n,k,B}
+ o_p\!\left(n^{-1/2}\right)
\right],
\qquad
I_{n,k,\infty,B}
:=
\left[
U_{n,k} \pm t_{B,1-\alpha/2}\, S_{n,k,B}
\right].
\]
Under Assumptions~\ref{asm: h_bounded}--\ref{asm: rate_of_k}, we have
\[
    \liminf_{n\to\infty}
    \pr\!\left(W_k \in \tilde I_{n,k,\infty,B}\right)
    \;\ge\; 1-\alpha.
\]
Furthermore, under the non-degeneracy condition
\(
    \liminf_{k\to\infty} k^2 V(g_k(\xi))>0,
\)
we have
\[
    \lim_{n\to\infty}
    \pr\!\left(W_k \in I_{n,k,\infty,B}\right)
    \;=\; 1-\alpha.
\]
\end{corollary}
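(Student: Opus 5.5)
The plan is to combine Theorems~\ref{thm: clt_bagging_modified} and \ref{thm: growing_k_main_theorem} through the general cheap bootstrap principle (Theorem 1 of \cite{Lam_cheap_bootstrap}). That principle says that if $\sqrt{n}(U_{n,k}-W_k)/\tau_n \Rightarrow N(0,1)$ and, conditionally on the data, $\sqrt{n}(U^*_{n,k}-U_{n,k})/\tau_n \Rightarrow N(0,1)$ in probability with the same scale $\tau_n$, then the resamples $U^{*b}_{n,k}$, $b=1,\ldots,B$, are conditionally independent given the data. Consequently the vector
\[
\sqrt{n}\bigl(U_{n,k}-W_k,\,U^{*1}_{n,k}-U_{n,k},\ldots,U^{*B}_{n,k}-U_{n,k}\bigr)/\tau_n
\]
converges jointly to $B+1$ i.i.d.\ standard normals. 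The argument for this is a characteristic-function/conditioning step: the unconditional limit of the first coordinate times the conditional limits of the rest. By the continuous mapping theorem, $(U_{n,k}-W_k)/S_{n,k,B}\Rightarrow t_B$, since the common scale $\tau_n$ cancels. Here I take $\tau_n = k\sqrt{V(g_k(\xi))}$. Theorem~\ref{thm: growing_k_main_theorem} supplies $n\sigma^{*2}_{n,k}/\tau_n^2\to 1$ in probability, so the bootstrap scale may be replaced by $\tau_n/\sqrt n$ via Slutsky in the conditional law.

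The exact-coverage statement follows directly. Under $\liminf k^2V(g_k)>0$ we have eventually $k^{2+\epsilon}V(g_k)\ge c k^{\epsilon}>1$, so both theorems apply along the full sequence. The $o_p(k^{-\epsilon})$ remainder in Theorem~\ref{thm: clt_bagging_modified} is then $o_p(\tau_n)$ and is absorbed by Slutsky. Continuity of the $t_B$ distribution then gives $\pr(|U_{n,k}-W_k|\le t_{B,1-\alpha/2}S_{n,k,B})\to 1-\alpha$.

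For the conservative statement without non-degeneracy, I will argue by subsequences. Given any subsequence, extract a further subsequence along which either (a) $k^{2+\epsilon}V(g_k)>1$ eventually, or (b) $k^{2+\epsilon}V(g_k)\le 1$ eventually. In case (a), the argument above still works along it, even if $\tau_n\to0$. One subtlety is that the remainder is $o_p(k^{-\epsilon})$ while $\tau_n$ may be as small as $k^{-1-\epsilon/2}$. This case-(a) step is the main obstacle: the remainder need not be negligible relative to $\tau_n$. I would handle it by absorbing the remainder into the $o_p(n^{-1/2})$ slack of $\tilde I$. Indeed, $o_p(k^{-\epsilon})/\sqrt n = o_p(n^{-1/2})$, so coverage of $\tilde I$ reduces to the event that $\sqrt n(U_{n,k}-W_k)-\text{remainder} = \tau_n Z_{n,k}$ is covered, where $Z_{n,k}$ is asymptotically normal. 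Similarly, I will pass to the bootstrap side: the proof of Theorem~\ref{thm: growing_k_main_theorem} presumably has the same projection-plus-remainder structure, and its remainder is likewise pushed into the slack. In case (b), $\sqrt n(U_{n,k}-W_k)=O_p(k^{-\epsilon/2})+o_p(k^{-\epsilon})=o_p(1)$, provided $k\to\infty$ (if $k$ stays bounded, $V(g_k)$ must vanish along the subsequence, and the same conclusion holds). So $U_{n,k}-W_k=o_p(n^{-1/2})$, and choosing the implicit remainder in $\tilde I$ to dominate it gives coverage tending to $1\ge 1-\alpha$. Since every subsequence has a further subsequence with $\liminf$ coverage at least $1-\alpha$, the claim follows for the full sequence.

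Finally, I will make precise the interpretation of $\tilde I$ as ``there exists an $o_p(n^{-1/2})$ enlargement.'' This is what permits the case-dependent choice of the enlargement, namely the remainder terms from Theorem~\ref{thm: clt_bagging_modified}, its bootstrap analogue, and $|U_{n,k}-W_k|$ in case (b), combined by taking their maximum.
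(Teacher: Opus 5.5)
Your architecture is the paper's. You split the index sequence according to whether $k^{2+\epsilon}V(g_k(\xi))>1$, feed Theorems~\ref{thm: clt_bagging_modified} and~\ref{thm: growing_k_main_theorem} into the cheap bootstrap theorem on the first part, and cover the second part by the slack. Non-degeneracy then puts the whole sequence eventually in the first part.

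\textbf{The case (a) obstacle is an arithmetic slip.} From $k^{2+\epsilon}V(g_k(\xi))>1$ you get $\sqrt{V(g_k(\xi))}>k^{-1-\epsilon/2}$, hence $\tau_n=k\sqrt{V(g_k(\xi))}>k^{-\epsilon/2}$; you dropped the factor $k$. So the remainder satisfies $o_p(k^{-\epsilon})/\tau_n=o_p(k^{-\epsilon/2})\to 0$, which is exactly the paper's display \eqref{eq: growing_k_CLT}. Along case (a) the cheap bootstrap theorem therefore already gives $\pr(W_k\in I_{n,k,\infty,B})\to 1-\alpha$ for the unenlarged interval, and hence $\liminf\ge 1-\alpha$ for $\tilde I_{n,k,\infty,B}\supseteq I_{n,k,\infty,B}$. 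No slack is needed there. There is also nothing to dig out of the proof of Theorem~\ref{thm: growing_k_main_theorem}, whose conclusion is already stated for the full statistic $U^*_{n,k}-U_{n,k}$. Your slack-based workaround is valid but superfluous, and the ``presumably'' step about a bootstrap-side remainder should simply be dropped.

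\textbf{Case (b).} Your estimate $\sqrt n(U_{n,k}-W_k)=O_p(k^{-\epsilon/2})+o_p(k^{-\epsilon})$ is correct. The parenthetical about bounded $k$ is false: $k=1$ with $V(g_1(\xi))=1/2$ satisfies $k^{2+\epsilon}V(g_k(\xi))\le 1$ without anything vanishing. Just assume $k\to\infty$, as the paper implicitly does; its slack $k^{-\epsilon/2}n^{-1/2}$ is $o(n^{-1/2})$ only then.

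Prefer one deterministic slack along the entire sequence over an enlargement built from the unknown $|U_{n,k}-W_k|$ and unknown remainders. Your version reads the $o_p(n^{-1/2})$ term so loosely that case (b) becomes trivial. Your own estimate also shows something about the paper's choice. The slack $1/(k^{\epsilon/2}\sqrt n)$ only yields $k^{\epsilon/2}\sqrt n\,|U_{n,k}-W_k|=O_p(1)$ when $k^{2+\epsilon}V(g_k(\xi))$ stays bounded away from $0$, not the $o_p(1)$ the paper asserts. A slack such as $1/(k^{\epsilon/4}\sqrt n)$ does give coverage tending to one in case (b), by Chebyshev on $k^{-\epsilon/4}Z_{n,k}$. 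Since case (a) needs no slack at all, this single deterministic choice works everywhere.
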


Corollary~\ref{corr: growing_k_main_CI} establishes the validity of the cheap bootstrap bagging interval in the complete-bagging benchmark. Importantly, the result allows the number of
bootstrap replicates \(B\) to remain fixed; the uncertainty from using only \(B\)
replicates is reflected in the Student-\(t\) critical value with \(B\) degrees of
freedom. The two intervals in Corollary~\ref{corr: growing_k_main_CI} correspond to two possible
asymptotic regimes. The distinction is governed by the variance term
\(k^2 V(g_k(\xi))\), which determines the first-order scale of the complete
U-statistic. If this term is non-degenerate, then the Hájek projection yields a
standard Gaussian first-order limit, and the ordinary interval
\(I_{n,k,\infty,B}\) achieves asymptotically exact \((1-\alpha)\) coverage. If, on the other hand, the first-order variance degenerates, the nominal interval may
collapse at the \(\sqrt n\) scale. The implementable interval $I_{n,k,\infty,B}$ then provides guaranteed valid coverage only up to an additional \(o_p(n^{-1/2})\) error.
% , which
% provides a conservative safeguard in this degenerate regime. 

Finally, we use the above complete-bagging results to obtain the finite-ensemble guarantee in Theorem \ref{thm: growing_k_main_CI_D_n}. Conditional on the data, the finite ensemble averages \(D_n\) independent kernel evaluations whose mean is the complete bagged estimator. Under our assumptions, each evaluation has conditional variance \(O(1/k)\), so the resulting Monte Carlo error has conditional mean square \(O(1/(kD_n))\), with both bounds uniform in the data almost surely. Assumption \ref{asm: D_n growth}, namely \(n/(D_n k^{1-2\epsilon})\to0\), therefore ensures that this approximation error is negligible at the scale required by the coverage argument.

The same error bound applies to each bootstrap ensemble. Since the interval's scale is computed from the root mean square of the differences between the bootstrap estimates and the original estimate, we must control the finite-bagging error in both. With a fixed number of bootstrap replications, these errors are asymptotically negligible in both the interval's center and its bootstrap scale. The coverage conclusions of Corollary \ref{corr: growing_k_main_CI} therefore extend to the finite ensemble: exact asymptotic coverage under the stated non-degeneracy condition, and the relaxed coverage guarantee otherwise. This yields an implementable procedure with an explicit sufficient condition relating the ensemble size, subsample size, and original data size. The detailed arguments are provided in the proofs of Theorems \ref{thm: growing_k_main_CI_D_n} and \ref{thm: V_stat_CI} in Section \ref{sec:proof main} of the Appendix.
% This finite-\(D_n\) approximation is
% handled  (presented in Section \ref{sec:proof main} of the Appendix). Thus, in the reminder of this subsection, we focus on the complete bagged estimator.

\section{Applications}\label{sec:Applications}
In this section, we illustrate the applicability of our cheap bootstrap bagging intervals 
across two representative settings, namely random forests and stochastic optimization that we mentioned in our introduction. 
We examine how the general framework developed in
Section~\ref{sec:main results} applies in these concrete settings under natural and
interpretable conditions.
Together with the preceding theoretical results, we demonstrate how our proposed interval yields asymptotically valid inference for a broad class of bagged estimators. 
% We specifically demonstrate the applicability of this framework to random forests and stochastic optimization  below.

\subsection{Random Forests}
Random forests \cite{Breiman2001_Random_Forest, Tim_RF} provide a canonical example of bagging for machine learning. A decision tree partitions the covariate space into leaves and, for a fixed prediction point \(x\), outputs the average response of the training observations that fall in the leaf containing \(x\). A random forest aggregates many randomized decision trees built on resampled data, thereby stabilizing the highly variable predictions of a single tree. This aggregation viewpoint is central both to their empirical success and to their statistical analysis.

To place random forests within our framework, fix a prediction point \(x\in\mathcal X\), and let \(Z_i=(X_i,Y_i)\) denote the training observations. We write
\begin{equation}\label{eq: Hk_RF}
H_k(x;Z_1,\ldots,Z_k)    
\end{equation}
for the prediction at \(x\) produced by a randomized decision tree trained on a subsample of size \(k\). The associated bagged target is then
\[
W_k(x):=\E\!\left[H_k(x;Z_1,\ldots,Z_k)\right].
\]

It is important to distinguish the target \(W_k(x)\) from the underlying regression function
\[
\mu(x):=\E[Y\mid X=x].
\]
The relationship between \(W_k(x)\) and \(\mu(x)\) is nontrivial, because a random-forest prediction is built from data-adaptive neighborhoods whose geometry depends on the tree-growing mechanism. Establishing consistency and asymptotic normality for \(\mu(x)\) therefore requires additional arguments controlling bias, honesty, and regularity of the tree construction; this is developed in detail, for example, by \cite{athey_RF, IJ_CLT}. In the present paper, we do not pursue that additional layer of analysis. Instead, we work directly with the quantity \(W_k(x)\) and construct confidence intervals for this bagged target within our general cheap bootstrap framework.

To verify the assumptions of Section~\ref{sec:main results}, we impose the following idealized leaf-size condition at the fixed prediction point \(x\).

\begin{assumption}\label{asm:rf_leaf_size}
There exist constants \(M<\infty\) and \(\lambda\in(0,1]\) such that the following hold for every \(k\):
\begin{enumerate}
    \item The response is uniformly bounded almost surely:
    \[
    |Y|\le M \qquad \P\text{-a.s.}
    \]
    \item The tree predictor at \(x\) is a leaf average of the form
    \[
    H_k(x;Z_1,\ldots,Z_k)
    =
    \frac{1}{N_k(x)}
    \sum_{i\in L_k(x)} Y_i,
    \]
    where \(L_k(x)\subseteq \{1,\ldots,k\}\) is the set of training points that fall in the prediction leaf containing \(x\), and \(N_k(x):=|L_k(x)|\).
    \item The prediction leaf size is uniformly nonvanishing:
    \[
    N_k(x)\ge \lambda k
    \qquad \text{almost surely.}
    \]
    \item If the first \(c\) training observations are replaced, yielding a new leaf index set \(L_k'(x)\), then
    \[
    |L_k(x)\,\triangle\,L_k'(x)|\le c
    \qquad \text{almost surely,}
    \]
    where \(\triangle\) denotes symmetric difference.
\end{enumerate}
\end{assumption}

Assumption~\ref{asm:rf_leaf_size} imposes a set of simple structural conditions on the tree predictor at the fixed prediction point \(x\). The first two conditions are general regularity requirements: the response is uniformly bounded, and the prediction is given by an average of responses over the leaf containing \(x\). The third condition requires that this leaf not be too small, specifically that its size is at least a fixed fraction of the subsample size. This prevents the leaf average from becoming overly sensitive to the contribution of a small number of observations. The final condition controls how much the prediction leaf can change when \(c\) training observations are replaced. The symmetric-difference bound means that the new leaf differs from the original one in at most \(c\) indices, so only a small portion of the average can be affected. Together, these conditions ensure both boundedness of the kernel and the \(O(c/k)\) sensitivity property required by our general theory.
For the remainder of this subsection, we fix \(x\) and suppress its dependence, writing \(H_k\) and \(W_k\) for \(H_k(x;\cdot)\) and \(W_k(x)\), respectively.

\begin{proposition}\label{prop:rf_assump23}
Suppose Assumption~\ref{asm:rf_leaf_size} holds. Then, for each fixed prediction point \(x\), the kernel \(H_k\) defined in  (\ref{eq: Hk_RF}) satisfies Assumptions~\ref{asm: h_bounded} and~\ref{asm: sup bound on H diff}.
\end{proposition}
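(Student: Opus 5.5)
The plan is to verify the two assumptions directly from the leaf-average structure in Assumption~\ref{asm:rf_leaf_size}, handling the population law $\P$ and the bootstrap law $\P_n$ simultaneously. The key observation is that every condition in Assumption~\ref{asm:rf_leaf_size} is stated almost surely for arbitrary input configurations. We will read them as holding for every realized training set, including sets with tied points, which is the only thing $\P_n$-draws can add. Under this reading, any bound we derive pointwise on configurations transfers to the $\pr^*$ statements for $\P$-a.e.\ dataset. The reason is that $\P_n$ is supported on the observed $Z_i$, which satisfy $|Y_i|\le M$ for $\P$-a.e.\ dataset.

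\textbf{Assumption~\ref{asm: h_bounded}.} Take $K_0=M$. Since $H_k$ is an average of $N_k(x)\ge\lambda k>0$ responses, each bounded by $M$, we get $|H_k|\le M$ under $\P$. For the bootstrap version, every resampled point is one of $Z_1,\ldots,Z_n$. Off a $\P$-null set all these have $|Y_i|\le M$, so $|H_k(\xi_1^*,\ldots,\xi_k^*)|\le M$ with $\pr^*$-probability one.

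\textbf{Assumption~\ref{asm: sup bound on H diff}.} Write $L=L_k(x)$ and $L'=L_k'(x)$, with sizes $N=|L|$ and $N'=|L'|$. Let $C=L\cap L'$, and let $Y_i'$ denote the responses after replacement. Only the indices $1,\ldots,c$ change responses, so indices in $C\cap\{1,\ldots,c\}$ may carry different $Y$ values. We handle this by splitting the difference as
\[
\frac{1}{N}\sum_{i\in L}Y_i-\frac{1}{N'}\sum_{i\in L'}Y_i'
=\Bigl(\frac1N-\frac1{N'}\Bigr)\sum_{i\in C\setminus[c]}Y_i
+\frac1N\sum_{i\in L\setminus(C\setminus[c])}Y_i-\frac1{N'}\sum_{i\in L'\setminus(C\setminus[c])}Y_i'.
\]
The first term is bounded by $|N-N'|\,M/\max(N,N')\le cM/(\lambda k)$, using $|N-N'|\le|L\triangle L'|\le c$. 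For the remaining two sums, each ranges over indices in $L\triangle L'$ or in $[c]$. The sets $L\setminus(C\setminus[c])$ and $L'\setminus(C\setminus[c])$ are contained in $(L\triangle L')\cup[c]$ intersected with $L$ or $L'$ respectively. Each therefore has at most $2c$ elements; indeed $|L\setminus L'|+|[c]|\le 2c$. This bounds those two terms by $2cM/(\lambda k)$ each. Altogether,
\[
|H_k-H_k'|\le \frac{5M}{\lambda}\cdot\frac{c}{k},
\]
so $K_1=5M/\lambda$ works; the constant is unimportant. The same pointwise computation applies to bootstrap draws, giving the $\pr^*$ statement $\P$-a.s.

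\textbf{Main obstacle.} The delicate point is the interplay between the changed responses at replaced indices and the changing normalizer $N$ versus $N'$. The split above isolates the unchanged indices, where only the normalizer differs, from at most $O(c)$ affected indices, each contributing $O(M/k)$. A second subtlety is justifying that the almost-sure structural conditions also apply to configurations with ties under $\P_n$. We would state this explicitly as the interpretation of Assumption~\ref{asm:rf_leaf_size}, namely that the conditions hold for all inputs of the tree algorithm, mirroring the remark following Assumption~\ref{asm: h_bounded}.
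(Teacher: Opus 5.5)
Your proof is correct and follows essentially the paper's route: $K_0=M$ from the leaf-average form, then a pointwise split of the difference of leaf averages into $O(c/k)$ pieces. The paper's split is instead $\frac1N\sum_{S}(Y_i-Y_i')$ over $S=L\cap L'$, plus the symmetric-difference terms, plus the normalizer term $\bigl(\frac1N-\frac1{N'}\bigr)\sum_{L'}Y_i'$, which gives $K_1=4M/\lambda$ rather than your $5M/\lambda$; your explicit reading that the structural conditions must hold for all input configurations, including ties under $\P_n$, is more careful than the paper's bare assertion that the bootstrap case follows ``in exactly the same way.''
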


Thus, for the decision tree kernel it follows from the above Proposition \ref{prop:rf_assump23} and Theorem \ref{thm: growing_k_main_CI_D_n} that:

\begin{theorem}\label{thm: RF_main} Suppose the decision tree kernel $H_k$ defined in \eqref{eq: Hk_RF} satisfies Assumption \ref{asm:rf_leaf_size}. Additionally, suppose the subsample size $k$ and the bagging size $D_n$ satisfy Assumptions \ref{asm: rate_of_k} and \ref{asm: D_n growth} respectively. Then, for any $B\geq 1$, we have
\[
    \liminf_{n\to\infty}
    \pr\!\left(W_k \in \tilde I_{n,k,D_n,B}\right)
    \;\ge\; 1-\alpha.
\]
Moreover, if the non-degeneracy condition
$\liminf_{k\to\infty} k^2 V(g_k(\xi))>0$ holds, then
\[
    \lim_{n\to\infty}
    \pr\!\left(W_k \in I_{n,k,D_n,B}\right)
    \;=\; 1-\alpha.
\]
where $W_k = \E[H_k]$ is the population target,   $I_{n,k,D_n,B}$ is the confidence interval obtained from Algorithm \ref{alg:cheap_bootstrap_bagging}, and $\tilde I_{n,k,D_n,B}$ is the relaxed version as defined in \eqref{eqn: I and I_tilde}.
% to accommodate for the degenerate  case when $\liminf_{k\to\infty} k^2 V(g_k(\xi))=0$.
\end{theorem}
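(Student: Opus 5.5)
The plan is to reduce Theorem~\ref{thm: RF_main} directly to Theorem~\ref{thm: growing_k_main_CI_D_n}. That theorem requires Assumptions~\ref{asm: h_bounded}--\ref{asm: D_n growth}. Assumptions~\ref{asm: rate_of_k} and~\ref{asm: D_n growth} are imposed in the statement, so the only work is checking Assumptions~\ref{asm: h_bounded} and~\ref{asm: sup bound on H diff}, which is the content of Proposition~\ref{prop:rf_assump23}. I would therefore prove Proposition~\ref{prop:rf_assump23} and then invoke Theorem~\ref{thm: growing_k_main_CI_D_n} verbatim, taking $\xi=Z=(X,Y)$. Both the relaxed coverage conclusion and, under non-degeneracy, the exact coverage conclusion then transfer unchanged.

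\textbf{Boundedness.} A leaf average of responses with $|Y_i|\le M$ satisfies $|H_k|\le M$, so $K_0=M$ works $\P$-a.s. For the bootstrap law, every draw $Z_i^*$ is one of the observed $Z_j$, and $|Y_j|\le M$ holds for all $j$ $\P$-a.s. Hence $|H_k(Z_1^*,\ldots,Z_k^*)|\le M$ with $\pr^*$-probability one, provided the leaf structure (condition 2) is also available on resampled data.

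\textbf{Sensitivity.} Write $L=L_k(x)$ and $L'=L_k'(x)$ for the leaves before and after replacing the first $c$ observations, with sizes $N=|L|\ge\lambda k$ and $N'=|L'|\ge\lambda k$. I would decompose
\[
\frac{1}{N}\sum_{i\in L}Y_i-\frac{1}{N'}\sum_{i\in L'}Y_i' ,
\]
noting that $Y_i'=Y_i$ for $i>c$. Let $C=L\cap L'\cap\{c+1,\ldots,k\}$, the common unreplaced indices. The contributions from indices outside $C$ fall into two groups. Indices in $L\triangle L'$ number at most $c$. Indices among the first $c$ lying in $L\cap L'$ number at most $c$, so the combined count is at most $2c$. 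Each such term is bounded by $M/(\lambda k)$ in absolute value, so together they contribute $O(c/k)$. The common part $\sum_{i\in C}Y_i$ carries the coefficient mismatch $|1/N-1/N'|=|N-N'|/(NN')\le c/(\lambda k)^2$. Since $|\sum_{i\in C}Y_i|\le Mk$, this part is at most $Mc/(\lambda^2 k)$. Altogether the difference is at most $K_1 c/k$ with $K_1=2M/\lambda+M/\lambda^2$.

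The main subtlety is the bootstrap-law half of Assumptions~\ref{asm: h_bounded} and~\ref{asm: sup bound on H diff}. There, duplicated points and configurations of $\P$-measure zero may appear, and the almost-sure statements in Assumption~\ref{asm:rf_leaf_size} need not apply. I would handle this by reading conditions 2--4 of Assumption~\ref{asm:rf_leaf_size} as holding for every input configuration, in the sense of deterministic properties of the tree-growing rule. This is the natural reading of "for every $k$" together with the algorithmic nature of the leaf structure. Since every bootstrap draw has $|Y|\le M$, the same bounds hold $\pr^*$-a.s. for $\P$-a.e. dataset. If a purely $\P$-a.s. reading were insisted upon, I would argue that the countably many events "a configuration of observed points violates 2--4" each have $\P$-probability zero, so the empirical support avoids them almost surely.
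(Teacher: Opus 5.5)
Your reduction is the paper's: Theorem~\ref{thm: RF_main} follows from Proposition~\ref{prop:rf_assump23} combined with Theorem~\ref{thm: growing_k_main_CI_D_n}, and you prove the proposition with the same leaf-average decomposition. The paper places the normalization mismatch on all of $L'$, so that $\bigl|(1/N-1/N')\sum_{i\in L'}Y_i'\bigr|\le M|N-N'|/N$; this gives $K_1=4M/\lambda$ without your $1/\lambda^2$ factor.

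There are two small corrections. First, an index $i\le c$ lying in $L\cap L'$ contributes both $Y_i/N$ and $-Y_i'/N'$. Its contribution is therefore bounded by $2M/(\lambda k)$, not $M/(\lambda k)$, and your constant should be $3M/\lambda+M/\lambda^2$. This is harmless, since only $O(c/k)$ is needed.

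Second, your fallback for a purely $\P$-a.s.\ reading of Assumption~\ref{asm:rf_leaf_size} does not work. Consider bootstrap tuples with repeated indices, or with some $\xi_i^{*\prime}$ equal to some $\xi_j^*$. These lie on a diagonal set that is $\P^{\otimes k}$-null when $\P$ is non-atomic, so the a.s.\ leaf-size and symmetric-difference conditions say nothing about them. The countable-union argument covers only tie-free index tuples. This is precisely the pathology flagged after Assumption~\ref{asm: h_bounded}. The deterministic reading you adopt as your main route is therefore genuinely needed, not just convenient. It is also what the paper implicitly uses when it asserts that under the bootstrap law ``the leaf size is still bounded below by $\lambda k$, and the symmetric-difference condition continues to hold.''
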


\subsection{Optimal Value Estimation in Stochastic Optimization}

The estimation of the optimal value in stochastic optimization provides another important setting in which bagging arises naturally. Consider the problem
\begin{equation}\label{eq:stoch_opt_Hstar}
H^*
\;:=\;
\min_{x\in\mathcal X}
\E_{\P}\!\left[h(x,\xi)\right],
\end{equation}
where \(\mathcal X\) is a feasible decision set and \(\xi\in\Xi\) is a random input with unknown distribution \(\P\). In practice, \(\P\) is not observed directly; instead, one observes an i.i.d.\ sample \(\xi_1,\ldots,\xi_n\sim\P\).

A standard data-driven approximation to \eqref{eq:stoch_opt_Hstar} is the sample average approximation (SAA) objective value
\begin{equation}\label{eq:SAA_Hn}
H_n
\;:=\;
\min_{x\in\mathcal X}
\frac{1}{n}\sum_{i=1}^n h(x,\xi_i),
\end{equation}
which is computed from the observed data. A basic and well-known property of \(H_n\) is its optimistic bias:
\begin{equation}\label{eq:jensen_H}
W_n
\;:=\;
\E[H_n]
\;\le\;
H^*,
\end{equation}
where the expectation is taken with respect to the sampling distribution of the data. This inequality follows from Jensen's inequality and has long been used in stochastic programming and simulation optimization to construct lower bounds on the true optimal value \cite{mak1999monte,shapiro2003monte,glasserman2004monte}. Moreover, under standard regularity conditions, \(W_n\to H^*\) as \(n\to\infty\).

The inequality \eqref{eq:jensen_H} suggests reducing inference for \(H^*\) to inference for the mean of SAA objective values. This motivates resampling-based approaches that repeatedly solve smaller SAA problems and aggregate the resulting objective values. This approach can be contrasted to batching \cite{mak1999monte} that partitions the dataset into disjoint groups and solves one SAA problem per batch, and has been shown to provide a more efficient alternative by allowing data reuse and hence tighter confidence bounds \cite{lam2018bounding} for certain classes of optimization problems. 

% While simple, batching prevents reuse of observations across groups and can therefore be statistically inefficient.

% Bagging provides a more efficient alternative. Rather than partitioning the data once, it repeatedly forms subsamples and aggregates the resulting SAA objective values. This preserves the lower-bound structure in \eqref{eq:jensen_H} while allowing substantial data reuse, and has been shown to yield tighter bounds in stochastic-optimization settings \cite{lam2018bounding}.

Let \(k\) denote a subsample size. Define the SAA kernel
\begin{equation}\label{eq:SO_kernel_Hk}
H_{k}(\xi_1,\ldots,\xi_{k})
\;:=\;
\min_{x\in\mathcal X}
\frac{1}{k}\sum_{i=1}^{k} h(x,\xi_i),
\end{equation}
and the corresponding population target
\[
W_{k}
\;:=\;
\E\!\left[H_{k}(\xi_1,\ldots,\xi_k)\right].
\]
By construction, \(W_{k}\le H^*\) for every \(k\), and if \(k\to\infty\), this lower bound tightens asymptotically. We now verify that this stochastic-optimization setting falls within the scope of the general theory developed in Section~\ref{sec:main results}.

\begin{assumption}\label{asm: h_bounded_SO}
The loss function \(h(x,\xi)\) is uniformly bounded almost surely; that is, there exists \(M<\infty\) such that
\[
\sup_{x\in\mathcal X}|h(x,\xi)| \le M
\qquad \P\text{-a.s.}
\]
\end{assumption}

Under this condition, the induced SAA kernel satisfies the boundedness and stability requirements used in the main theory:
\begin{proposition}\label{prop: SO_bagging_validity}
Suppose Assumption~\ref{asm: h_bounded_SO} holds. Then the kernel \(H_k\) defined in \eqref{eq:SO_kernel_Hk} satisfies Assumptions~\ref{asm: h_bounded} and \ref{asm: sup bound on H diff}.
\end{proposition}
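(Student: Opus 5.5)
The plan is to verify the two assumptions directly from the structure of the SAA kernel \eqref{eq:SO_kernel_Hk}. The key tool is the elementary inequality $|\min_{x} f(x)-\min_{x} g(x)|\le \sup_{x}|f(x)-g(x)|$. It holds for any two real functions on $\mathcal X$ whose infima are finite, and infima may be used in place of minima if they are not attained. A preliminary observation handles the bootstrap clauses: every statement we prove will hold pointwise on the set $\Xi_0:=\{\xi:\sup_{x}|h(x,\xi)|\le M\}$, which has $\P(\Xi_0)=1$. Hence, $\P$-a.s., all observed data points $\xi_1,\ldots,\xi_n$ lie in $\Xi_0$ (a countable intersection of probability-one events). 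Since $\P_n$ is supported on $\{\xi_1,\ldots,\xi_n\}$, any bootstrap draw $\xi^*_i\sim\P_n$ also lies in $\Xi_0$ with $\pr^*$-probability one, $\P$-a.s. This is exactly what neutralizes the pathology discussed after Assumption~\ref{asm: h_bounded}. For the SAA kernel no such pathology can occur, because the bound on $h$ is pointwise in $\xi$ rather than joint in the arguments.

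For Assumption~\ref{asm: h_bounded}, take any $\xi_1,\ldots,\xi_k\in\Xi_0$, not necessarily distinct. For every $x$, $\bigl|\frac1k\sum_i h(x,\xi_i)\bigr|\le M$, so the infimum over $x$ lies in $[-M,M]$, and $|H_k|\le M=:K_0$. Applying this on the probability-one events just described gives both \eqref{eqn: h_bounded_P} and \eqref{eqn: h_bounded_Pn}.

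For Assumption~\ref{asm: sup bound on H diff}, take $\xi_1,\ldots,\xi_k,\xi'_1,\ldots,\xi'_c\in\Xi_0$. By the min-difference inequality,
\[
\bigl|H_k(\xi'_{1:c},\xi_{c+1:k})-H_k(\xi_{1:k})\bigr|\le \sup_{x\in\mathcal X}\frac1k\sum_{i=1}^c|h(x,\xi'_i)-h(x,\xi_i)|\le \frac{2Mc}{k}.
\]
We therefore set $K_1:=2M$, which is independent of $k$ and $c$. Under $\P$, all $2c+(k-c)$ arguments lie in $\Xi_0$ almost surely. Under $\P_n$, they do so with $\pr^*$-probability one on the $\P$-a.s. event above. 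This yields both clauses.

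The only step needing care is the measure-theoretic bookkeeping for the bootstrap clauses, namely that the a.s. bound transfers to the empirical law. That is resolved by the support argument above. Any measurability concerns about the infimum are implicitly assumed by the paper when it treats $H_k$ as a random variable, so they need no separate treatment.
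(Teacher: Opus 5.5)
Your proof is correct and follows essentially the same route as the paper. The paper derives the min-difference bound directly from an $\varepsilon$-optimal solution $x_\varepsilon$ plus a symmetry step, which is exactly your inequality $|\inf f-\inf g|\le\sup|f-g|$, and it gets the same constants $K_0=M$ and $K_1=2M$. Your $\Xi_0$ support argument, which also covers tuples with repeated entries, simply makes rigorous the paper's one-line remark that bootstrap draws take values in $\{\xi_1,\ldots,\xi_n\}$.
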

Thus, for the bagging kernel it follows from the above Proposition \ref{prop: SO_bagging_validity} and Theorem \ref{thm: growing_k_main_CI_D_n} that:

\begin{theorem}\label{thm: SAA_main} Suppose the SAA kernel $H_k$ defined in \eqref{eq:SO_kernel_Hk} satisfies Assumption \ref{asm: h_bounded_SO}. Additionally, suppose the subsample size $k$ and the bagging size $D_n$ satisfy Assumptions \ref{asm: rate_of_k} and \ref{asm: D_n growth} respectively. 
Then, for any $B\geq 1$, we have
\[
    \liminf_{n\to\infty}
    \pr\!\left(W_k \in \tilde I_{n,k,D_n,B}\right)
    \;\ge\; 1-\alpha.
\]
Moreover, if the non-degeneracy condition
$\liminf_{k\to\infty} k^2 V(g_k(\xi))>0$ holds, then
\[
    \lim_{n\to\infty}
    \pr\!\left(W_k \in I_{n,k,D_n,B}\right)
    \;=\; 1-\alpha.
\]
where $W_k = \E[H_k]$ is the population target,   $I_{n,k,D_n,B}$ is the confidence interval obtained from Algorithm \ref{alg:cheap_bootstrap_bagging}, and $\tilde I_{n,k,D_n,B}$ is the relaxed version as defined in \eqref{eqn: I and I_tilde}.
% to accommodate for the degenerate  case when $\liminf_{k\to\infty} k^2 V(g_k(\xi))=0$.
\end{theorem}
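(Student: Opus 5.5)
The plan is to reduce Theorem~\ref{thm: SAA_main} to Theorem~\ref{thm: growing_k_main_CI_D_n}. That theorem's conclusions are exactly the two displayed coverage statements, and it requires Assumptions~\ref{asm: h_bounded}--\ref{asm: D_n growth}. Assumptions~\ref{asm: rate_of_k} and~\ref{asm: D_n growth} are hypotheses here. Assumptions~\ref{asm: h_bounded} and~\ref{asm: sup bound on H diff} follow from Proposition~\ref{prop: SO_bagging_validity} under Assumption~\ref{asm: h_bounded_SO}. So the proof is mainly a verification of that proposition, followed by a direct invocation. The non-degeneracy clause passes through unchanged.

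For completeness, I would sketch the proposition's verification.

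\textbf{Boundedness.} For any $k$ points, each $\frac1k\sum_i h(x,\xi_i)$ lies in $[-M,M]$ for every $x$. Hence the infimum over $x$ lies in $[-M,M]$, giving $K_0=M$.

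\textbf{Sensitivity.} Replace the first $c$ arguments and use the elementary inequality $|\inf_x f(x)-\inf_x g(x)|\le \sup_x|f(x)-g(x)|$. The two averaged objectives differ by $\frac1k\sum_{i\le c}(h(x,\xi_i')-h(x,\xi_i))$, which is at most $2Mc/k$ uniformly in $x$. This gives $K_1=2M$.

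\textbf{Bootstrap versions.} The same bounds must hold $\P$-a.s.\ under $\P_n$. Bootstrap draws take values among the observed $\xi_1,\ldots,\xi_n$. Almost surely, every observed point lies in the full-measure set where $\sup_x|h(x,\xi)|\le M$, since a countable union of null events is null. Both bounds are pointwise statements in the arguments, so they hold with $\pr^*$-probability one, $\P$-a.s.

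The main subtlety is measurability and attainment of the minimum in \eqref{eq:SO_kernel_Hk}. I would interpret the minimum as an infimum, which is bounded whenever $h$ is bounded, so none of the bounds need attainment. Measurability of $H_k$ is implicitly assumed throughout the paper. With the proposition in hand, Theorem~\ref{thm: growing_k_main_CI_D_n} applies verbatim, yielding both the relaxed coverage bound for $\tilde I_{n,k,D_n,B}$ and exact coverage for $I_{n,k,D_n,B}$ under non-degeneracy.
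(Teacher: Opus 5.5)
Your proposal is correct and follows the paper's route: the paper obtains this theorem directly from Proposition~\ref{prop: SO_bagging_validity} (boundedness with $K_0=M$, and sensitivity with $K_1=2M$ via an $\varepsilon$-optimal-solution argument that amounts to your inequality $|\inf_x f-\inf_x g|\le\sup_x|f-g|$) together with Theorem~\ref{thm: growing_k_main_CI_D_n}. Your handling of the bootstrap versions is slightly more careful than the paper's one-line justification: you note that almost surely every observed point lies in the full-measure set where $\sup_x|h(x,\xi)|\le M$.
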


\section{Numerical Examples}\label{sec:Numerical_Examples}
We consider two numerical examples that illustrate the proposed procedure in two different settings. 
Section~\ref{sec: decision_tree} considers a random forest regression problem, where each observation is a covariate--response pair and the kernel \(H_k\) is the prediction at a fixed test point from a decision tree trained on \(k\) observations. 
Section~\ref{sec: SLO} considers a stochastic linear optimization problem over a simplex, where the kernel \(H_k\) is the sample-average-approximation (SAA) optimal value computed from \(k\) observations.

In both examples, the underlying data-generating distribution and the symmetric kernel \(H_k\) together determine the population target
\[
    W_k=\E[H_k(\xi_1,\ldots,\xi_k)].
\]
Our objective is to evaluate the finite-sample performance of the proposed cheap bootstrap bagging intervals for this target.

We use a common simulation framework across the two examples. We vary the sample size \(n\), the subsample size \(k\), the number of bagging draws \(D\), and the number of outer bootstrap resamples \(B\). We parameterize the bagging budget as
\[
    D=n\times D_{\mathrm{mult}},
\]
where \(D_{\mathrm{mult}}\) controls the computational effort devoted to incomplete bagging. For each parameter configuration, we compute the bagged estimator, construct the corresponding cheap bootstrap bagging interval, and record its coverage and length.

Since \(W_k\) is generally unavailable in closed form, we approximate it using a sufficiently large number of independent Monte Carlo evaluations of \(H_k\). Empirical coverage probabilities and average interval lengths are then estimated by independently repeating each simulation setting. The specific data-generating distribution, kernel, parameter grid, and simulation details are described separately for the random forest example in Section~\ref{sec: decision_tree} and the stochastic optimization example in Section~\ref{sec: SLO}.

\subsection{Random Forest Regression}\label{sec: decision_tree}
We consider a random forest regression example. In this example, each observation is
\(
    \xi=(X,Y),
\)
where \(X\in[0,1]^5\) and \(Y\in\mathbb R\). The covariates have independent uniform coordinates,
\[
    X_i \sim \mathrm{Unif}(0,1), \qquad i=1,\ldots,5,
\]
and the response is generated according to
\begin{equation}\label{eq:decision_tree_dgp}
    Y
    =
    \sin(2\pi X_1)
    +
    X_2^2
    +
    0.5X_3
    -
    X_4
    +
    \varepsilon,
    \qquad
    \varepsilon\sim N(0,0.5^2).
\end{equation}
Thus, the conditional mean function is
\[
    m(x)
    =
    \E[Y\mid X=x]
    =
    \sin(2\pi x_1)
    +
    x_2^2
    +
    0.5x_3
    -
    x_4 .
\]
We evaluate the fitted decision tree at the fixed test point $x_0=(0.5,0.5,0.5,0.5,0.5).$
At this point, the true regression value is \(m(x_0)=0\). However, the target of our inference procedure is not the true regression function \(m(x_0)\), but rather the population bagged target induced by the decision-tree learning rule.

For a subsample of size \(k\), we define the kernel \(H_k\) as follows. Given
\(
    \xi^{(1)},\ldots,\xi^{(k)}
    =
    (X^{(1)},Y^{(1)}),\ldots,(X^{(k)},Y^{(k)}),
\)
we fit a decision tree regressor to the \(k\) observations and return its prediction at \(x_0\). More precisely,
\[
    H_k(\xi^{(1)},\ldots,\xi^{(k)})
    =
    \widehat m_k(x_0),
\]
where \(\widehat m_k\) is the decision tree fitted on the subsample. In the implementation, the maximum depth is fixed at \(4\), and the minimum number of observations required to split an internal node is set to $\lceil k/10\rceil$. The target estimand is
\[
    W_k
    =
    \E\left[
    H_k(\xi_1,\ldots,\xi_k)
    \right]
    =
    \E\left[
    \widehat m_k(x_0)
    \right],
\]
where the expectation is over an independent training subsample of size \(k\). This quantity is the expected prediction of a decision tree trained on \(k\) observations and evaluated at \(x_0\). The unknown $W_k$ is approximated numerically using \(3.2\times 10^5\) independent Monte Carlo evaluations of \(H_k\). For this example, we consider sample sizes up to \(n=4000\), and \(D_{\mathrm{mult}}\) ranges between \(0.01\) to \(1.1\). Each simulation setting is repeated 480 times to estimate the empirical coverage probability and the average interval length.

% The main qualitative patterns are similar to those observed in the previous examples. 
Table~\ref{tab:DT_plot_summary} summarizes the numerical comparisons for the random forest regression problem in \eqref{eq:decision_tree_dgp}. First, Figure~\ref{fig:DT_Btrend_U} studies the effect of the number of outer cheap bootstrap resamples \(B\). Empirical coverage approaches the nominal \(95\%\) level for sufficiently large \(n\), including small values of \(B\). Increasing \(B\) reduces the variability of the bootstrap variance estimate and tends to shorten the resulting intervals.

\renewcommand{\arraystretch}{1.2}
\begin{table}[t]
\centering
\small
\begin{tabular}{
>{\centering\arraybackslash}m{0.15\textwidth}
>{\centering\arraybackslash}m{0.11\textwidth}
>{\centering\arraybackslash}m{0.22\textwidth}
>{}m{0.40\textwidth}
}
\toprule
Figure & Trend on & Fixed parameters & \multicolumn{1}{c}{Comments} \\
\midrule
Figure~\ref{fig:DT_Btrend_U}
& \(B\)
& \shortstack{\(\text{stat}=U\)\\[7pt] \(D_{\mathrm{mult}}=1.1\)}
& Coverage approaches \(95\%\) as \(n\) grows, even for small \(B\). Interval length decreases mainly when \(B\) is increased from very small values; the additional gain is modest after a few bootstrap resamples. \\
\midrule

Figure~\ref{fig:DT_Dtrend_U}
& \(D_{\mathrm{mult}}\)
& \shortstack{\(\text{stat}=U\)\\[7pt] \(B=5\)}
& Very small \(D_{\mathrm{mult}}\) gives wider and more conservative intervals. Once \(D_{\mathrm{mult}}\) is moderately large, interval lengths become much smaller and further increases in \(D_{\mathrm{mult}}\) have limited effect. \\
\midrule

Figure~\ref{fig:DT_ktrend}
& \(k, D_{\mathrm{mult}}\)
& \shortstack{\(n=4000\)\\[7pt] \(B=5\)\\[7pt] \(\text{stat}=U\)}
& Coverage remains close to \(95\%\) for most configurations. The smallest bagging budget gives wider intervals, while the dependence on \(k\) is nonmonotone because \(k\) is varied with \(n\) fixed rather than in the ideal growing-\(n,k\) regime. \\
\midrule

Figure~\ref{fig:DT_stat_trend}
& stat
& \shortstack{\(B=5\)\\[7pt] \(D_{\mathrm{mult}}=1.1\)}
& The incomplete U- and V-statistic implementations have similar coverage and interval length for this decision-tree regression problem. \\
\bottomrule
\end{tabular}
\caption{Summary of the numerical plots for the random forest regression problem in \eqref{eq:decision_tree_dgp}.}
\label{tab:DT_plot_summary}
\end{table}

Figure~\ref{fig:DT_Dtrend_U} studies the effect of the bagging budget \(D_{\mathrm{mult}}\). When \(D_{\mathrm{mult}}\) is very small, the implemented bagged estimator contains substantial incomplete-bagging Monte Carlo error. This additional computational noise leads to wider intervals and can make the empirical coverage conservative. As \(D_{\mathrm{mult}}\) increases, the interval length decreases until the contribution of bagging Monte Carlo error becomes relatively small.

Figure~\ref{fig:DT_ktrend} examines the effect of the subsample size \(k\) in the decision-tree example, fixing \(n=4000\). The coverage remains close to the nominal \(95\%\) level across most configurations, and the intervals are substantially shorter once \(D_{\mathrm{mult}}\) is moderately large. The smallest bagging budget, \(D_{\mathrm{mult}}=0.01\), produces wider and more conservative intervals, which is consistent with the additional Monte Carlo variation from incomplete bagging.

The dependence on \(k\) is not monotone, especially for the average interval length. Here \(k\) is increased while \(n\) is fixed, whereas the theory requires \(n\) to grow with \(k\) under suitable rate conditions. Thus, for large \(k\), the experiment begins to leave the ideal theoretical regime. The main empirical message is that, for a fixed sample size, stable performance is obtained when \(Dk\) is sufficiently large and \(k\) is not too large relative to \(n\).

Finally, Figure~\ref{fig:DT_stat_trend} compares the incomplete U- and V-statistic implementations. The U-statistic version samples subsamples without replacement, while the V-statistic version samples with replacement. In this decision-tree example, the two constructions produce similar empirical behavior, indicating that the choice between sampling with and without replacement has limited impact on the reported finite-sample performance for this setting.

\begin{figure}[t]
    \centering
    \includegraphics[width=\textwidth]{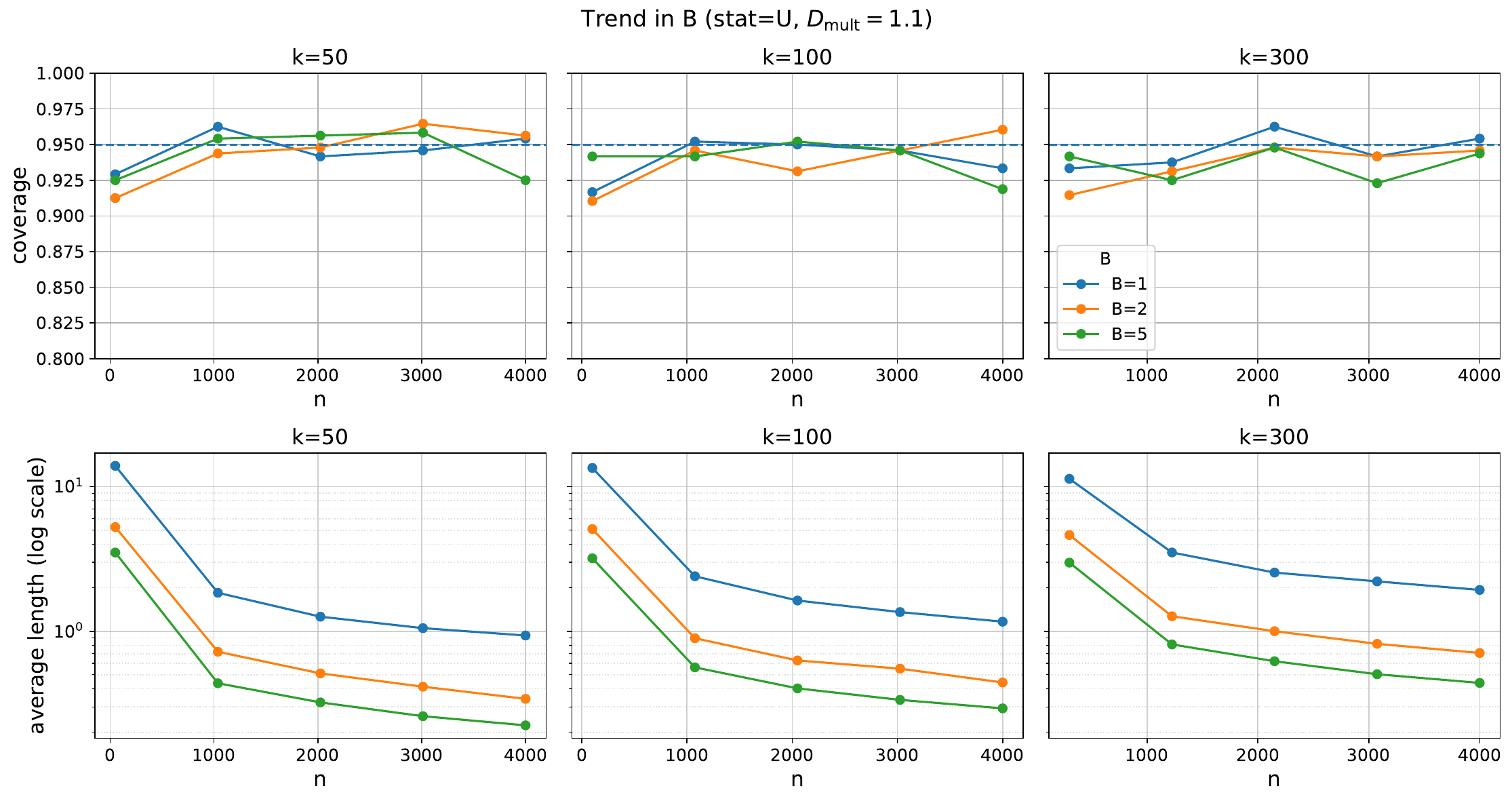}
    \caption{Trend in the number of outer bootstrap resamples \(B\) for the random forest regression problem in \eqref{eq:decision_tree_dgp}, using the incomplete U-statistic estimator and fixing \(D_{\mathrm{mult}}=1.1\). The top row reports empirical coverage, and the bottom row reports average interval length on a log scale.}
    \label{fig:DT_Btrend_U}
\end{figure}

\begin{figure}[t]
    \centering
    \includegraphics[width=\textwidth]{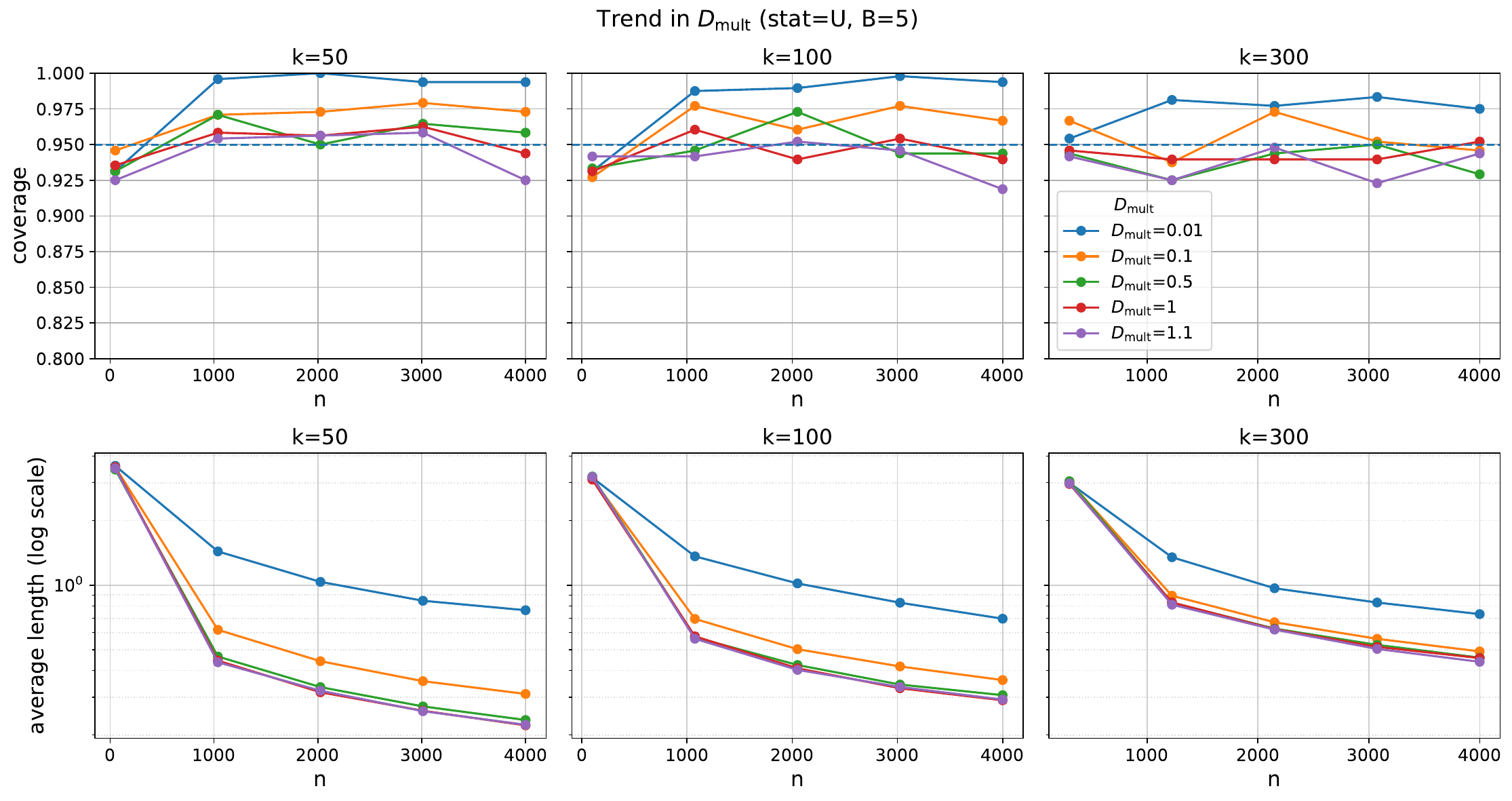}
    \caption{Trend in the bagging budget \(D_{\mathrm{mult}}\) for the random forest regression problem in \eqref{eq:decision_tree_dgp}, using the incomplete U-statistic estimator and fixing \(B=5\). The top row reports empirical coverage, and the bottom row reports average interval length on a log scale.}
    \label{fig:DT_Dtrend_U}
\end{figure}

\begin{figure}[t]
    \centering
    \includegraphics[width=\textwidth]{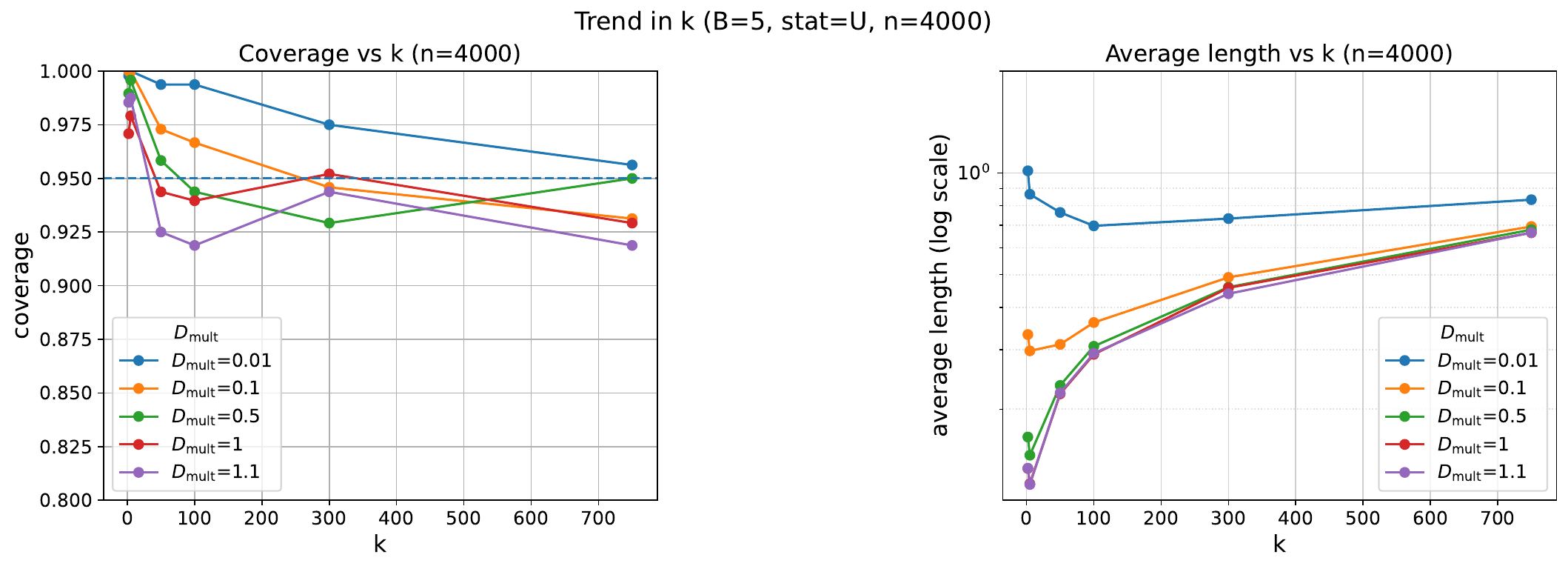}
    \caption{Trend in the subsample size \(k\) for various choices of \(D_{\mathrm{mult}}\) in the random forest regression problem in \eqref{eq:decision_tree_dgp}, fixing \(n=4000\), \(B=5\), and stat \(=U\). The left panel reports empirical coverage, and the right panel reports average interval length on a log scale.}
    \label{fig:DT_ktrend}
\end{figure}

\begin{figure}[t]
    \centering
    \includegraphics[width=\textwidth]{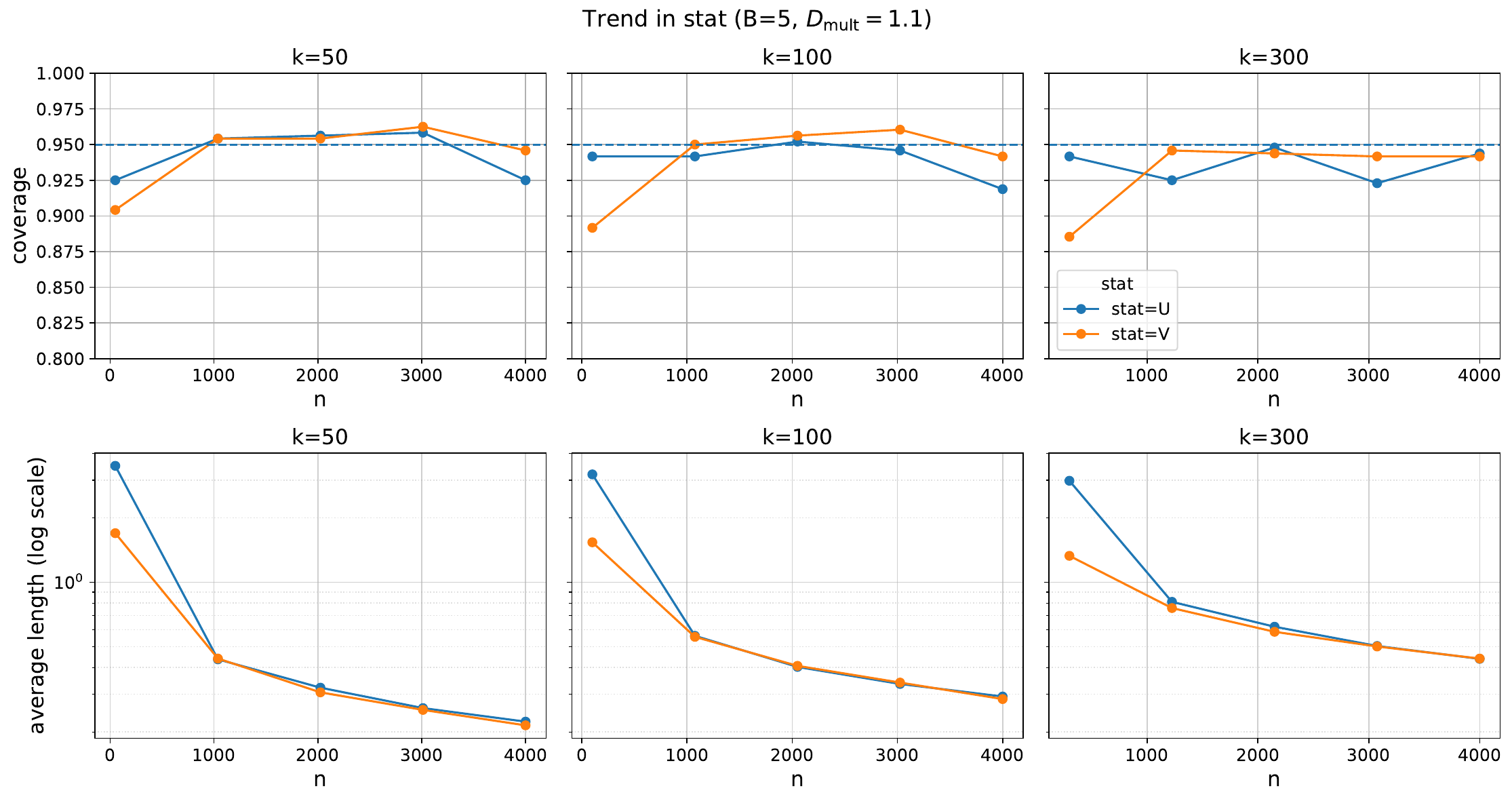}
    \caption{Comparison of the incomplete U- and V-statistic estimators for the random forest regression problem in \eqref{eq:decision_tree_dgp}, with \(B\) and \(D_{\mathrm{mult}}\) fixed. The top row reports empirical coverage, and the bottom row reports average interval length on a log scale.}
    \label{fig:DT_stat_trend}
\end{figure}

\subsection{Stochastic Optimization: Linear Program Over a Simplex}\label{sec: SLO}

We consider the stochastic linear optimization problem
\begin{equation}\label{eq:SLO_problem}
\begin{aligned}
    H^*
    =
    \min_{x\in\mathbb R^{10}} \quad
    & \E[\xi^T x] \\
    \text{s.t.} \quad
    & \sum_{i=1}^{10} x_i = 1, \\
    & x_i \geq 0, \qquad i=1,\ldots,10 .
\end{aligned}
\end{equation}
Here \(\xi=(\xi_1,\ldots,\xi_{10})\), where the coordinates are independent normal random variables satisfying
\[
    \xi_i \sim N(0,1), \qquad i\leq 5,
    \qquad\text{and}\qquad
    \xi_i \sim N(0.1,1), \qquad i\geq 6.
\]
Since the first five coordinates have mean zero and the last five coordinates have mean \(0.1\), every feasible solution satisfying \(x_i=0\) for all \(i\geq 6\) is population-optimal. Thus, problem~\eqref{eq:SLO_problem} has multiple optimal solutions. This makes it a useful stochastic optimization example for studying the behavior of bagging-based inference, since the SAA optimal value can be sensitive to sampling noise among several nearly competitive coordinates.

For a subsample of size \(k\), we define the SAA-value kernel
\[
    H_k(\xi^{(1)},\ldots,\xi^{(k)})
    =
    \min_{x\in\Delta_{10}}
    \frac{1}{k}\sum_{j=1}^k (\xi^{(j)})^T x,
\]
where
\[
    \Delta_{10}
    =
    \left\{
    x\in\mathbb R^{10}:
    \sum_{i=1}^{10}x_i=1,\ x_i\geq 0
    \right\}.
\]
Equivalently, because the feasible region is the simplex,
\[
    H_k(\xi^{(1)},\ldots,\xi^{(k)})
    =
    \min_{1\leq i\leq 10}
    \frac{1}{k}\sum_{j=1}^k \xi_i^{(j)} .
\]
The target estimand is therefore
\[
    W_k
    =
    \E\left[
    \min_{1\leq i\leq 10}
    \frac{1}{k}\sum_{j=1}^k \xi_i^{(j)}
    \right].
\]
Unlike the population optimal value \(H^*=0\), the quantity \(W_k\) is the expected SAA optimal value based on a subsample of size \(k\). It is generally below the population optimum because of the optimistic bias induced by minimizing over noisy sample averages. As in the other numerical examples, we approximate \(W_k\) using independent Monte Carlo evaluations of the kernel.

For this example, we consider sample sizes up to \(n=5000\). For each configuration, the bagging budget is set as $D = n \times D_{\mathrm{mult}}$, where \(D_{\mathrm{mult}}\) ranges from \(0.01\) to \(1.1\). Thus, the number of bagging draws \(D\) ranges from approximately \(1\%\) to \(110\%\) of the sample size. For each choice of \(n\), \(k\), \(D_{\mathrm{mult}}\), \(B\), and bagging statistic, we construct the bagged estimator, apply the cheap bootstrap procedure, and record the resulting confidence interval. The target \(W_k\) is approximated numerically using \(3.2\times 10^6\) independent Monte Carlo evaluations of \(H_k\). Each simulation setting is repeated \(800\) times to estimate the empirical coverage probability and the average interval length.

Table~\ref{tab:SLO_plot_summary} summarizes the numerical comparisons for problem~\eqref{eq:SLO_problem}. The overall behavior is consistent across the different parameter trends. First, Figure~\ref{fig:SLO_Btrend_U} shows that empirical coverage approaches the nominal \(95\%\) level for sufficiently large \(n\), including when the number of outer cheap bootstrap resamples is as small as \(B=1\). Increasing \(B\) tends to reduce the average interval length, while the coverage remains broadly stable.

\renewcommand{\arraystretch}{1.2}
\begin{table}[t]
\centering
\small
\begin{tabular}{
>{\centering\arraybackslash}m{0.15\textwidth}
>{\centering\arraybackslash}m{0.11\textwidth}
>{\centering\arraybackslash}m{0.22\textwidth}
>{}m{0.40\textwidth}
}
\toprule
Figure & Trend on & Fixed parameters & \multicolumn{1}{c}{Comments} \\
\midrule
Figure~\ref{fig:SLO_Btrend_U}
& \(B\)
& \shortstack{\(\text{stat}=U\)\\[7pt] \(D_{\mathrm{mult}}=1.1\)}
& Coverage reaches the nominal \(95\%\) level for large \(n\), including at \(B=1\). Interval length decreases as \(B\) increases. \\
\midrule

Figure~\ref{fig:SLO_Dtrend_U}
& \(D_{\mathrm{mult}}\)
& \shortstack{\(\text{stat}=U\)\\[7pt] \(B=5\)}
& Very small \(D_{\mathrm{mult}}\) gives wider intervals and conservative coverage because of additional incomplete-bagging Monte Carlo error. Larger \(D_{\mathrm{mult}}\) reduces interval length until the Monte-Carlo error from bagging is negligible.  \\
\midrule
Figure~\ref{fig:SLO_ktrend}
& \(k, D_{\mathrm{mult}}\)
& \shortstack{\(n=5000\)\\[7pt] \(B=2\)\\[7pt] \(\text{stat}=U\)}
& For small \(Dk\), intervals are wider and coverage is conservative. Coverage near \(95\%\) is observed once \(Dk\gg n\). \\
\midrule
Figure~\ref{fig:SLO_stat_trend}
& stat
& \shortstack{\(B=5\)\\[7pt] \(D_{\mathrm{mult}}=1.1\)}
& The incomplete U- and V-statistic constructions have very similar empirical behavior. \\
\bottomrule
\end{tabular}
\caption{Summary of the numerical plots for the stochastic linear optimization problem in \eqref{eq:SLO_problem}.}
\label{tab:SLO_plot_summary}
\end{table}

Figure~\ref{fig:SLO_Dtrend_U} studies the effect of the bagging budget \(D_{\mathrm{mult}}\). For very small values of \(D_{\mathrm{mult}}\), the intervals are wider and the empirical coverage can be conservative. This behavior is expected because a small number of bagging draws introduces additional randomization error into the implemented estimator. Increasing \(D_{\mathrm{mult}}\) initially reduces interval length, with the improvement becoming less pronounced once the incomplete-bagging Monte Carlo error is sufficiently small. 

Figure~\ref{fig:SLO_ktrend} shows the effect of the subsample size \(k\). The numerical results again support the role of the condition \(Dk\gg n\) in controlling incomplete-bagging error. When \(Dk\) is sufficiently large, the interval length decreases and coverage settles near the nominal level. When \(Dk\) is small, the additional computational noise remains visible, leading to wider intervals and coverage above \(95\%\).

Finally, Figure~\ref{fig:SLO_stat_trend} compares the incomplete U- and V-statistic implementations. In this example, the two constructions produce very similar empirical coverage and average interval lengths, suggesting that the distinction between sampling with and without replacement has little effect on the reported finite-sample behavior for problem~\eqref{eq:SLO_problem}.

\begin{figure}[t]
    \centering
    \includegraphics[width=\textwidth]{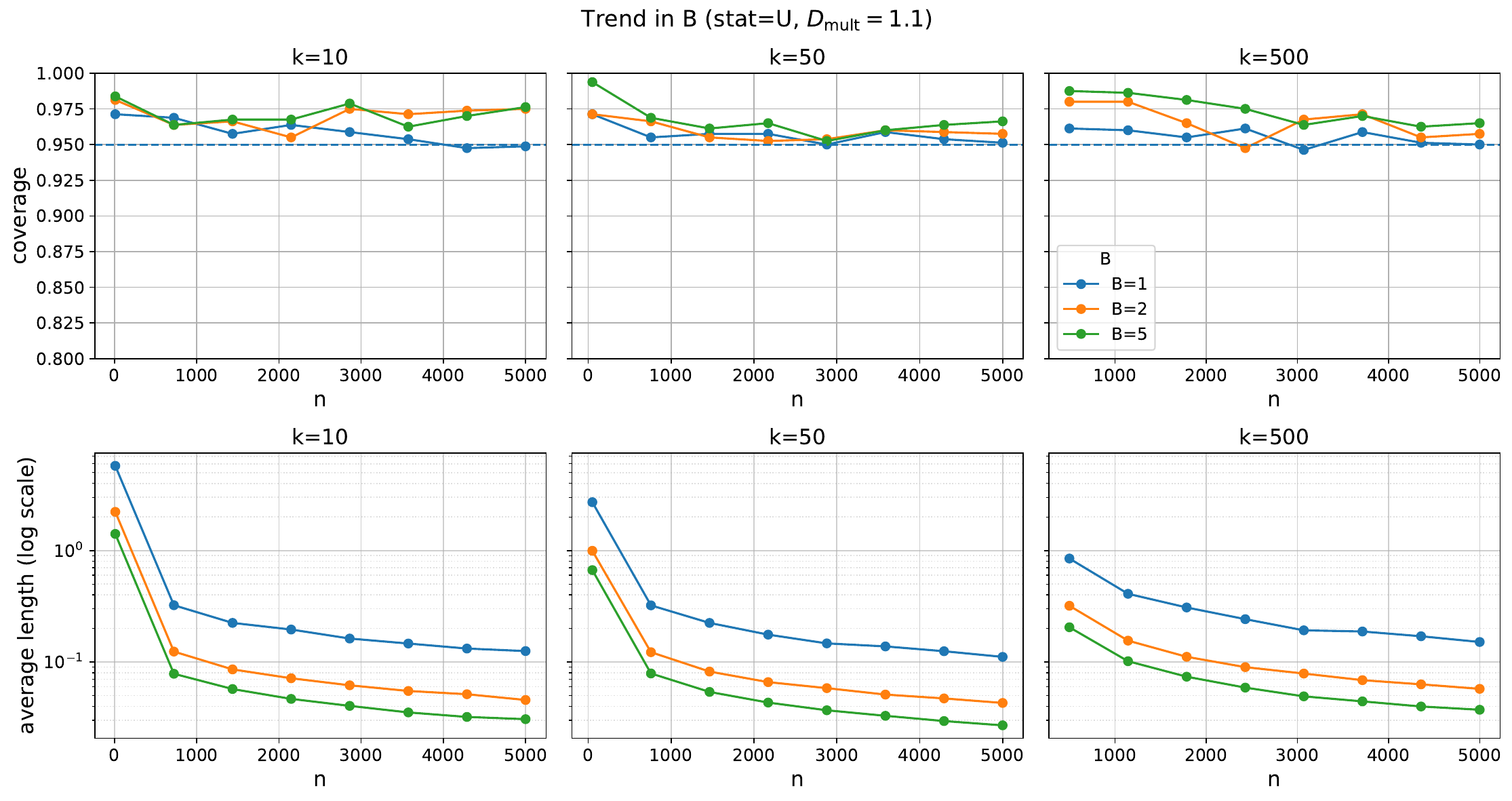}
    \caption{Trend in the number of outer bootstrap resamples \(B\) for the stochastic linear optimization problem in \eqref{eq:SLO_problem}, using the incomplete U-statistic estimator and fixing \(D_{\mathrm{mult}}=1.1\). The top row reports empirical coverage, and the bottom row reports average interval length on a log scale.}
    \label{fig:SLO_Btrend_U}
\end{figure}

\begin{figure}[t]
    \centering
    \includegraphics[width=\textwidth]{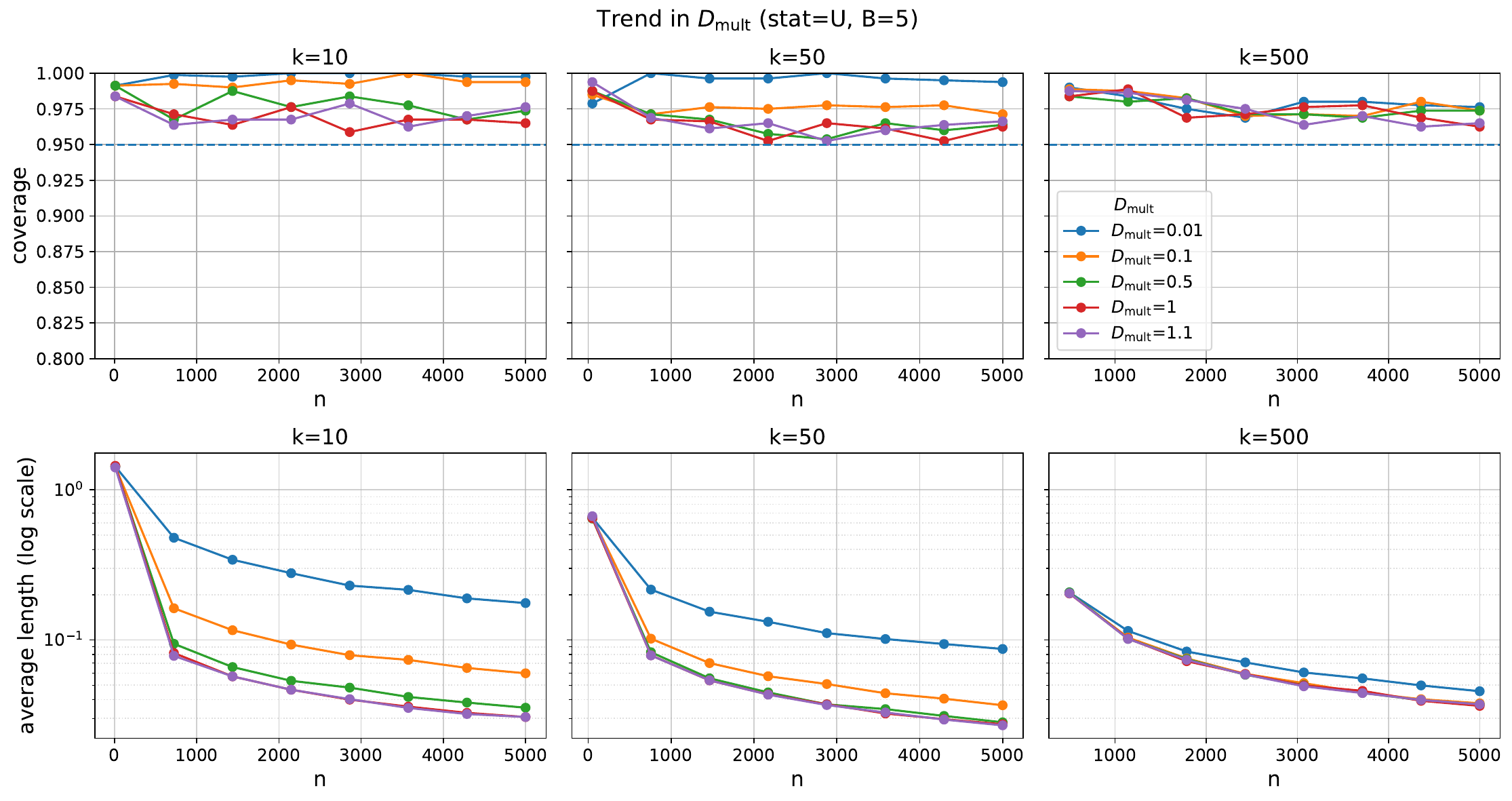}
    \caption{Trend in the bagging budget \(D_{\mathrm{mult}}\) for the stochastic linear optimization problem in \eqref{eq:SLO_problem}, using the incomplete U-statistic estimator and fixing \(B=5\). The top row reports empirical coverage, and the bottom row reports average interval length on a log scale.}
    \label{fig:SLO_Dtrend_U}
\end{figure}

\begin{figure}[t]
    \centering
    \includegraphics[width=\textwidth]{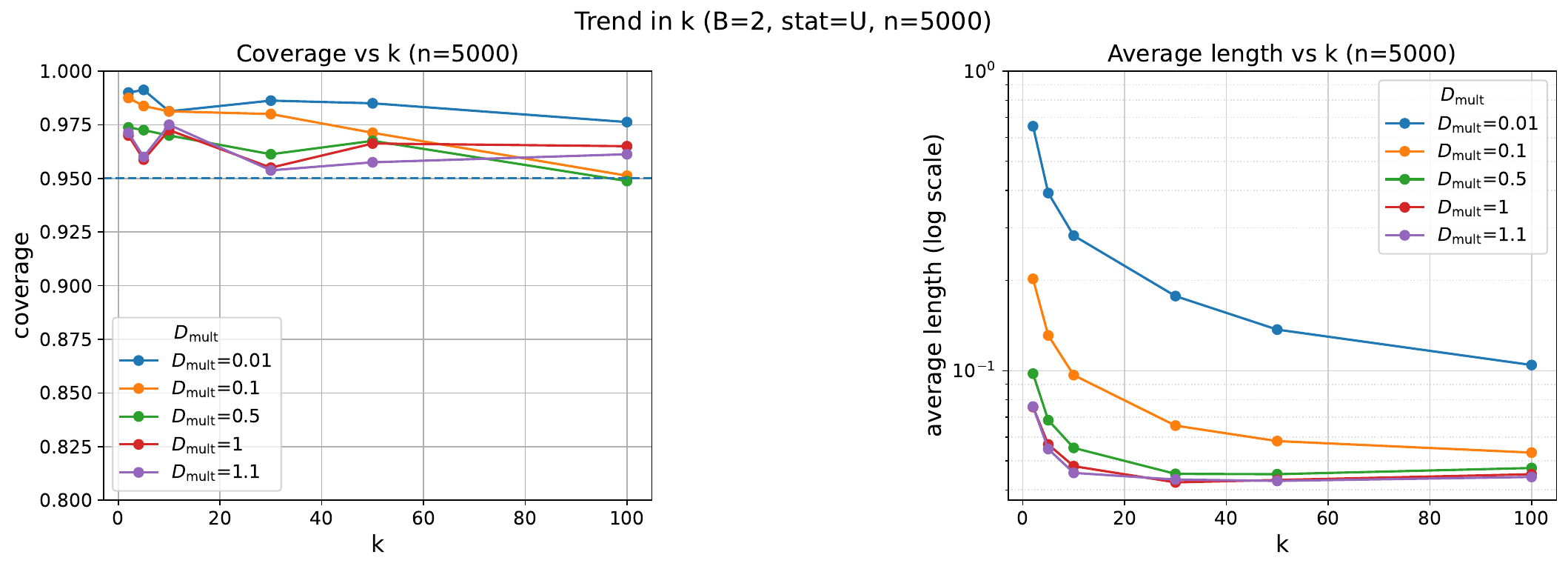}
    \caption{Trend in the subsample size \(k\) for various choices of \(D_{\mathrm{mult}}\) in the stochastic linear optimization problem in \eqref{eq:SLO_problem}, fixing \(n=5000\), \(B=2\), and stat \(=U\). The left panel reports empirical coverage, and the right panel reports average interval length on a log scale.}
    \label{fig:SLO_ktrend}
\end{figure}

\begin{figure}[t]
    \centering
    \includegraphics[width=\textwidth]{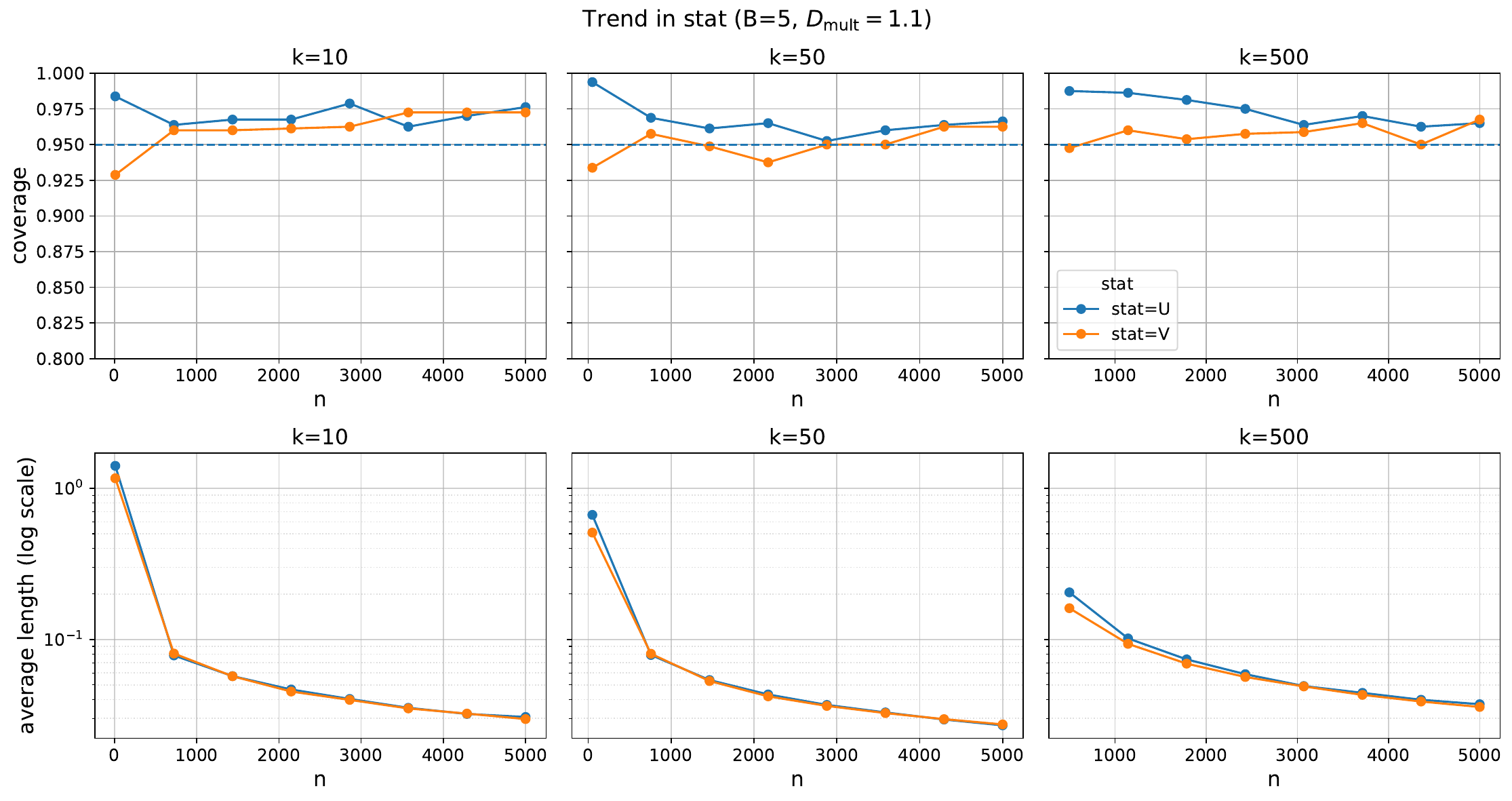}
    \caption{Comparison of the incomplete U- and V-statistic estimators for the stochastic linear optimization problem in \eqref{eq:SLO_problem}, with \(B\) and \(D_{\mathrm{mult}}\) fixed. The top row reports empirical coverage, and the bottom row reports average interval length on a log scale.}
    \label{fig:SLO_stat_trend}
\end{figure}

\subsection{Summary of Findings and Discussion}
\label{sec:Summary}

The numerical experiments in Sections~\ref{sec: decision_tree} and~\ref{sec: SLO} give a broadly consistent picture of the behavior of the proposed cheap bootstrap bagging intervals. Across both the random forest regression and stochastic optimization examples, the empirical coverage approaches the nominal \(95\%\) level as the sample size increases, including when the number of outer bootstrap resamples \(B\) is very small. In particular, the experiments show that useful coverage can be obtained even with \(B=1\), consistent with the main motivation of the cheap bootstrap construction.

The main effect of increasing \(B\) is on interval length and stability rather than on coverage. For very small \(B\), the Student-\(t\) critical value is larger and the scale estimate based on the bootstrap recomputations is more variable, leading to wider intervals. Increasing \(B\) reduces the average interval length, but most of this improvement occurs over the first few bootstrap resamples. In both examples, the additional gains become comparatively modest once \(B\) reaches a small value such as \(B=5\). Thus, the experiments suggest that a small fixed number of outer bootstrap resamples can already provide a favorable statistical--computational tradeoff.

The theoretical requirement on the bagging budget is that the incomplete-bagging Monte Carlo error be negligible relative to the sampling error. Thus, \(D\) remains a user-chosen computational parameter, and our approach is applicable as long as \(Dk\) is sufficiently large relative to \(n\). In our numerical experiments, the empirical performance was already satisfactory when \(Dk\) was approximately of order \(n\). Moreover, increasing \(D\) beyond a moderate fraction of \(n\) produced only limited additional reduction in interval length. Based on these results, we recommend choosing \(D\) to be a moderate fraction of \(n\), with \(D \approx 0.75n\) as a practical default when computationally feasible. Smaller values of \(D\) can still be used when computation is constrained, but they may lead to wider and more conservative intervals. The numerical results also show little practical difference between the incomplete U- and V-statistic implementations. Both versions produced similar coverage and interval lengths across the examples considered. Therefore, the procedure can be applied with either subsampling without replacement or resampling with replacement, depending on the convention most natural for the application.

Finally, our inference target throughout this paper is \(W_k\), the population expectation of the base kernel at subsample size \(k\). In some applications, \(W_k\) may itself be an approximation to a more fundamental population quantity, such as the true optimal value in stochastic optimization or the limiting performance of an ideal learning procedure. The confidence intervals developed here quantify uncertainty around \(W_k\) and the finite-\(D\) bagged estimator of \(W_k\). They do not, by themselves, quantify the gap between \(W_k\) and such downstream targets. Analyzing that additional approximation or bias requires separate problem-specific arguments. 

Overall, the proposed cheap bootstrap procedure provides a simple and scalable route to uncertainty quantification for bagged estimators. It preserves the conceptual appeal of bootstrap inference, avoids the large computational burden of nested resampling, and delivers asymptotically valid confidence intervals for the bagged target under explicit conditions on the bagging budget. The method is especially useful in settings where base-learner evaluations are expensive and the practitioner seeks a principled confidence interval using only a limited number of additional resampling computations.

\clearpage

\printbibliography

@ARTICLE{lam2018bounding,
  title        = "Bounding optimality gap in stochastic optimization via
                  bagging: Statistical efficiency and stability",
  author       = "Lam, Henry and Qian, Huajie",
  year         =  2018,
  primaryClass = "math.OC",
  eprint       = "1810.02905"
}

@incollection{hoeffding1992,
  title={A class of statistics with asymptotically normal distribution},
  author={Hoeffding, Wassily},
  booktitle={Breakthroughs in statistics: Foundations and basic theory},
  pages={308--334},
  year={1992},
  publisher={Springer}
}

@ARTICLE{athey_RF,
  title        = "Estimation and inference of heterogeneous treatment effects
                  using random forests",
  author       = "Wager, Stefan and Athey, Susan",
  year         =  2015,
  primaryClass = "stat.ME",
  eprint       = "1510.04342"
}

@article{Lam_cheap_bootstrap,
  title={A Cheap Bootstrap Method for Fast Inference},
  author={Lam, Henry},
  journal={arXiv preprint arXiv:2202.00090},
  year={2022}
}

@book{Vaart1998,
  title = {Asymptotic Statistics},
  ISBN = {9780521784504},
  url = {http://dx.doi.org/10.1017/CBO9780511802256},
  DOI = {10.1017/cbo9780511802256},
  publisher = {Cambridge University Press},
  author = {van der Vaart,  A. W. },
  year = {1998},
  month = oct 
}

@article{IJ_CLT,
  author    = {Stefan Wager and Trevor Hastie and Bradley Efron},
  title     = {Confidence Intervals for Random Forests: The Jackknife and the Infinitesimal Jackknife},
  journal   = {Journal of Machine Learning Research},
  volume    = {15},
  pages     = {1625--1651},
  year      = {2014},
  url       = {http://jmlr.org/papers/v15/wager14a.html}
}

@article{IJ_first,
  title = {Estimation and Accuracy After Model Selection},
  volume = {109},
  ISSN = {1537-274X},
  url = {http://dx.doi.org/10.1080/01621459.2013.823775},
  DOI = {10.1080/01621459.2013.823775},
  number = {507},
  journal = {Journal of the American Statistical Association},
  publisher = {Informa UK Limited},
  author = {Efron,  Bradley},
  year = {2014},
  month = jul,
  pages = {991–1007}
}

@book{Shapiro2009,
  title = {Lectures on Stochastic Programming: Modeling and Theory},
  ISBN = {9780898718751},
  url = {http://dx.doi.org/10.1137/1.9780898718751},
  DOI = {10.1137/1.9780898718751},
  publisher = {Society for Industrial and Applied Mathematics},
  author = {Shapiro,  Alexander and Dentcheva,  Darinka and Ruszczyński,  Andrzej},
  year = {2009},
  month = jan 
}

@article{Kleywegt2002,
  title = {The Sample Average Approximation Method for Stochastic Discrete Optimization},
  volume = {12},
  ISSN = {1095-7189},
  url = {http://dx.doi.org/10.1137/S1052623499363220},
  DOI = {10.1137/s1052623499363220},
  number = {2},
  journal = {SIAM Journal on Optimization},
  publisher = {Society for Industrial & Applied Mathematics (SIAM)},
  author = {Kleywegt,  Anton J. and Shapiro,  Alexander and Homem-de-Mello,  Tito},
  year = {2002},
  month = jan,
  pages = {479–502}
}

@article{Higle1991,
  title = {Stochastic Decomposition: An Algorithm for Two-Stage Linear Programs with Recourse},
  volume = {16},
  ISSN = {1526-5471},
  url = {http://dx.doi.org/10.1287/moor.16.3.650},
  DOI = {10.1287/moor.16.3.650},
  number = {3},
  journal = {Mathematics of Operations Research},
  publisher = {Institute for Operations Research and the Management Sciences (INFORMS)},
  author = {Higle,  Julia L. and Sen,  Suvrajeet},
  year = {1991},
  month = aug,
  pages = {650–669}
}

@article{Breiman2001_Random_Forest,
  title = {Random Forests},
  volume = {45},
  ISSN = {0885-6125},
  url = {http://dx.doi.org/10.1023/A:1010933404324},
  DOI = {10.1023/a:1010933404324},
  number = {1},
  journal = {Machine Learning},
  publisher = {Springer Science and Business Media LLC},
  author = {Breiman,  Leo},
  year = {2001},
  pages = {5–32}
}

@article{mak1999monte,
  title={Monte Carlo Bounding Techniques for Determining Solution Quality in Stochastic Programs},
  author={Mak, Wai-Kei and Morton, David P and Wood, R Kevin},
  journal={Operations research letters},
  volume={24},
  number={1-2},
  pages={47--56},
  year={1999},
  publisher={Elsevier}
}

@article{shapiro2003monte,
  title={Monte Carlo Sampling Methods},
  author={Shapiro, Alexander},
  journal={Handbooks in operations research and management science},
  volume={10},
  pages={353--425},
  year={2003},
  publisher={Elsevier}
}

@book{glasserman2004monte,
  title={Monte Carlo Methods in Financial Engineering},
  author={Glasserman, Paul},
  volume={53},
  year={2004},
  publisher={Springer}
}

@article{buhlmann2002analyzing,
  title={Analyzing {B}agging},
  author={B{\"u}hlmann, Peter and Yu, Bin},
  journal={The Annals of Statistics},
  volume={30},
  number={4},
  pages={927--961},
  year={2002},
  publisher={Institute of Mathematical Statistics},
  doi={10.1214/aos/1031689014}
}

@inproceedings{lam2023bootstrap,
  title={Bootstrap in high dimension with low computation},
  author={Lam, Henry and Liu, Zhenyuan},
  booktitle={International Conference on Machine Learning},
  pages={18419--18453},
  year={2023},
  organization={PMLR}
}

@article{huang2023efficient,
  title={Efficient uncertainty quantification and reduction for over-parameterized neural networks},
  author={Huang, Ziyi and Lam, Henry and Zhang, Haofeng},
  journal={Advances in neural information processing systems},
  volume={36},
  pages={64428--64467},
  year={2023}
}

@BOOK{Boucheron2013,
  title     = "Concentration inequalities",
  author    = "Boucheron, Stephane and Lugosi, Gabor and Massart, Pascal",
  publisher = "Oxford University Press",
  month     =  mar,
  year      =  2013,
  address   = "London, England",
  language  = "en"
}

@article{effron_stein_1981,
 author = {B. Efron and C. Stein},
 journal = {The Annals of Statistics},
 number = {3},
 pages = {586--596},
 publisher = {Institute of Mathematical Statistics},
 title = {The Jackknife Estimate of Variance},
 urldate = {2026-02-25},
 volume = {9},
 year = {1981}
}

@INPROCEEDINGS{Tim_RF,
  author={Tin Kam Ho},
  booktitle={Proceedings of 3rd International Conference on Document Analysis and Recognition}, 
  title={Random decision forests}, 
  year={1995},
  volume={1},
  number={},
  pages={278-282 vol.1},
  doi={10.1109/ICDAR.1995.598994}}

@article{mentch_hooker_2016,
  author  = {Lucas Mentch and Giles Hooker},
  title   = {Quantifying Uncertainty in Random Forests via Confidence Intervals and Hypothesis Tests},
  journal = {Journal of Machine Learning Research},
  year    = {2016},
  volume  = {17},
  number  = {26},
  pages   = {1--41},
  url     = {http://jmlr.org/papers/v17/14-168.html}
}

@article{diciccio_efron_1996,
author = {Thomas J. DiCiccio and Bradley Efron},
title = {{Bootstrap confidence intervals}},
volume = {11},
journal = {Statistical Science},
number = {3},
publisher = {Institute of Mathematical Statistics},
pages = {189 -- 228},
year = {1996},
doi = {10.1214/ss/1032280214},
URL = {https://doi.org/10.1214/ss/1032280214}
}

@ARTICLE{efron1979bootstrap,
  title     = "Bootstrap methods: Another look at the jackknife",
  author    = "Efron, B",
  journal   = "Ann. Stat.",
  publisher = "Institute of Mathematical Statistics",
  volume    =  7,
  number    =  1,
  pages     = "1--26",
  month     =  jan,
  year      =  1979
}

@BOOK{Efron1994_bootstrap_intro,
  title     = "An introduction to the bootstrap",
  author    = "Efron, Bradley and Tibshirani, R J",
  publisher = "Chapman and Hall/CRC",
  month     =  may,
  year      =  1994
}

@book{hall1992bootstrap,
  title     = {The Bootstrap and Edgeworth Expansion},
  author    = {Hall, Peter},
  year      = {1992},
  publisher = {Springer},
  address   = {New York},
  series    = {Springer Series in Statistics}
}

@article{breiman1996bagging,
  title   = {Bagging Predictors},
  author  = {Breiman, Leo},
  journal = {Machine Learning},
  volume  = {24},
  number  = {2},
  pages   = {123--140},
  year    = {1996},
  doi     = {10.1007/BF00058655}
}

@article{lam2023resampling,
  title={Cheap bootstrap for efficient uncertainty quantification of stochastic gradient descent},
  author={Lam, Henry and Wang, Zitong},
  journal={To appear in Journal of Machine Learning Research},
  year={2026}
}

@article{ohlendorff2025cheap,
  title={Cheap subsampling bootstrap confidence intervals for fast and robust inference},
  author={Ohlendorff, Johan Sebastian and Munch, Anders and S{\o}rensen, Kathrine Kold and Gerds, Thomas Alexander},
  journal={arXiv preprint arXiv:2501.10289},
  year={2025}
}

@inproceedings{lam2022cheap,
  title={Cheap bootstrap for input uncertainty quantification},
  author={Lam, Henry},
  booktitle={2022 Winter Simulation Conference (WSC)},
  pages={2318--2329},
  year={2022},
  organization={IEEE}
}

\section{Appendix}

\subsection{Notation}\label{sec: Notation}

In this section we introduce the main notation used throughout the paper and, for ease of readability, summarize the key quantities that will appear repeatedly in the theoretical developments. We begin with the classical U- and V-statistics, which serve as the population-level analogues of complete bagging.

Let $T_k:\Xi^k \to \mathbb{R}$ be a symmetric base learner acting on $k$ observations. Symmetry means that $T_k(\xi_{i_1},\dots,\xi_{i_k})$ is invariant under permutations of its arguments.

\paragraph{U-statistic.}
The associated U-statistic is defined as
\[
U_n(T_k)
:= \frac{1}{\binom{n}{k}}
\sum_{1 \le i_1 < \cdots < i_k \le n}
T_k(\xi_{i_1},\dots,\xi_{i_k}),
\]
which averages the kernel over all distinct subsets of size $k$. This corresponds to \emph{complete bagging without replacement}, and satisfies
\[
\E[U_n(T_k)] = W_k := \E\!\left[T_k(\xi_1,\dots,\xi_k)\right].
\]

\paragraph{V-statistic.}
The corresponding V-statistic is defined as
\[
V_n(T_k)
:= \frac{1}{n^k}
\sum_{1 \le i_1,\dots,i_k \le n}
T_k(\xi_{i_1},\dots,\xi_{i_k}),
\]
which averages over all ordered $k$-tuples, allowing repeated indices. This corresponds to \emph{complete bagging with replacement}. 

\medskip

If the argument $T_k$ is omitted, we write $U_{n,k}$ (and $V_{n,k}$) to denote the U-statistic (and V-statistic) corresponding to the canonical kernel $H_k$ introduced in Section~\ref{sec:problem_setup}. In particular,
\[
U_{n,k} := U_n(H_k),
\qquad
V_{n,k} := V_n(H_k),
\]
and both aim to estimate
\[
W_k = \E\!\left[ H_k(\xi_1,\dots,\xi_k) \right].
\]

\paragraph{Incomplete bagging.}
Since computing $U_{n,k}$ or $V_{n,k}$ is typically infeasible when $n$ and $k$ are moderate (due to the combinatorial growth of $\binom{n}{k}$ or $n^k$), we also consider the incomplete bagged estimator. Let $I^{(1)},\dots,I^{(D_n)}$ denote $D_n$ independently sampled index sets of size $k$ (with or without replacement, depending on the scheme). The incomplete bagged estimator is
\[
U_{n,k,D_n}\ (\text{ or } V_{n,k,D_n})
:= \frac{1}{D_n}
\sum_{d=1}^{D_n}
T_k\big(\xi_{i^{(d)}_1},\dots,\xi_{i^{(d)}_k}\big),
\]
which approximates $U_{n,k}$ (or $V_{n,k}$) depending on the sampling design.

\medskip

We next summarize the notation used for the general bagged estimators and for the proofs in Section~\ref{sec:main results} in Table~\ref{tab:notation_summary}.
\renewcommand{\arraystretch}{2} % Increases row height
\begin{table}[httb]
\centering
\begin{tabular}{|c|c|}
\hline
\textbf{Symbol} & \textbf{Definition}  \\
\hline
$H_k(\xi_1, \ldots, \xi_k)$ & The baseline predictor \\
\hline
$W_k$ & $\mathbb{E}[H_k(\xi_1, \ldots, \xi_k)]$  \\
\hline
$g_k(\xi)$ & $\mathbb{E}[H_k(\xi_1, \ldots, \xi_k) \mid \xi_1 = \xi]-W_k$  \\
\hline
$U_{n,k}$ & $\frac{1}{\binom{n}{k}}\sum_{\beta\in B} H_k(\xi_{\beta_1},...,\xi_{\beta_k})$ \\
\hline
$V_{n,k}$ & $\frac{1}{n^k}\sum_{\beta\in \tilde B} H_k(\xi_{\beta_1},...,\xi_{\beta_k})$  \\
\hline
$\mathring{H}_{n,k}$ & $W_k + \sum_{i = 1}^k g_k(\xi_i)$  \\
\hline
$\U_{n,k}$ & $W_k + \frac kn\sum_{i = 1}^n g_k(\xi_i)$  \\
\hline
$\zeta_{k,r}$ & $ cov\left[H_k(\xi_1,...,\xi_r,\xi_{r+1}...,\xi_k), \ H_{k}(\xi_1,...,\xi_r,\tilde \xi_{r+1},..., \tilde \xi_k)\right]$ \\
\hline
${\rho_{n,k}}$ &$\E\left[|g_k(\xi)  |^3\right]$\\
\hline
\end{tabular}
\caption{Summary of Notation for sample quantities}
\label{tab:notation_summary}
\end{table}
\renewcommand{\arraystretch}{1}

\subsection{Cheap Bootstrap Validity}\label{sec: Cheap_Bootstrap_Validity}
In this section, we present the general framework for establishing the validity of the cheap bootstrap. Theorem~\ref{thm: main_cheap_boot} is a version of the main cheap bootstrap validity result, adapted from \cite{Lam_cheap_bootstrap} to suit the present setting.

\begin{assumption}[Bootstrap validity]\label{Asm: bootstrap_validity}
There exist positive scaling sequences $\sigma_n$ and $\sigma_n^*$ (data-dependent) such that,
as $n\to\infty$:
\begin{enumerate}
\item[(i)] (Estimator CLT)
\begin{equation}\label{eq:estimator_CLT_star}
\frac{\hat\psi_n-\psi(\P)}{\sigma_n}\ \xrightarrow[]{d}\ N(0,1).
\end{equation}

\item[(ii)] (Conditional bootstrap CLT)
For each fixed $b$,
\begin{equation}\label{eq:bootstrap_CLT_star}
\frac{\hat\psi_n^{*b}-\hat\psi_n}{\sigma_n^*}\ \xrightarrow[]{d}\ N(0,1)
\end{equation}
\emph{conditionally on the data in probability}, i.e., for every $x\in\R$,
\begin{equation}\label{eq:conditional_CLT_meaning}
P\!\left(\left.\frac{\hat\psi_n^{*b}-\hat\psi_n}{\sigma_n^*}\le x\ \right|\ \P_n\right)
\ \xrightarrow[]{p}\ \Phi(x),
\end{equation}
where $\Phi$ is the standard normal cdf and $\sigma_n/\sigma_n^*\ \xrightarrow[]{p}\ 1$
\end{enumerate}

\end{assumption}
For notational convenience, any explicit $\sqrt{n}$ scaling is absorbed into $\sigma_n$ and $\sigma_n^*$. The above assumption is a sufficient condition that ensures \eqref{eq:cheap_boot_CI_general} achieves asymptotically exact coverage even when $B$ is kept fixed:

\begin{theorem}\label{thm: main_cheap_boot}
Under Assumption~\ref{Asm: bootstrap_validity}, for any fixed integer $B\ge 1$,
\begin{equation}\label{eq:t_limit_general}
\frac{\hat\psi_n-\psi(\P)}{S_{n,B}}\ \xrightarrow[]{d}\ t_B,
\end{equation}
where $t_B$ denotes a Student-$t$ distribution with $B$ degrees of freedom and $S_{n,B}$ is defined in
\eqref{eq:S_nb_def}. Consequently, the interval $I_{n,B}(\psi)$ in \eqref{eq:cheap_boot_CI_general} satisfies
\begin{equation}\label{eq:coverage_limit_general}
\pr\bigl(\psi(\P)\in I_{n,B}(\psi)\bigr)\ \to\ 1-\alpha.
\end{equation}
\end{theorem}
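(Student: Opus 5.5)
The statement to prove is Theorem~\ref{thm: main_cheap_boot}, the general cheap bootstrap validity result under Assumption~\ref{Asm: bootstrap_validity}. The plan is to establish the joint weak limit
\[
\left(\frac{\hat\psi_n-\psi(\P)}{\sigma_n},\frac{\hat\psi_n^{*1}-\hat\psi_n}{\sigma_n},\ldots,\frac{\hat\psi_n^{*B}-\hat\psi_n}{\sigma_n}\right)\xrightarrow{d}(Z_0,Z_1,\ldots,Z_B),
\]
with $Z_0,\ldots,Z_B$ i.i.d.\ $N(0,1)$. Once this holds, the continuous mapping theorem applied to $(z_0,\ldots,z_B)\mapsto z_0/\sqrt{B^{-1}\sum_{b=1}^B z_b^2}$ gives \eqref{eq:t_limit_general}. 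This map is continuous off the null set $\{z_1=\cdots=z_B=0\}$, and the ratio does not depend on the normalization $\sigma_n$. The limit $Z_0/\sqrt{\chi^2_B/B}$ is exactly $t_B$. Since the $t_B$ distribution is continuous, $\pr(|\hat\psi_n-\psi(\P)|\le t_{B,1-\alpha/2}S_{n,B})\to 1-\alpha$, which is \eqref{eq:coverage_limit_general}.

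To prove the joint limit, I would first replace $\sigma_n$ by $\sigma_n^*$ in the bootstrap coordinates. Because $\sigma_n^*/\sigma_n\xrightarrow{p}1$, Slutsky's lemma (applied jointly) reduces the claim to the vector with bootstrap coordinates $T_b:=(\hat\psi_n^{*b}-\hat\psi_n)/\sigma_n^*$. I would then work with the joint CDF. Fix $x_0,x_1,\ldots,x_B$. The bootstrap resamples are drawn independently given the data, so
\[
\pr\Bigl(T_0\le x_0,\,T_1\le x_1,\ldots,T_B\le x_B\Bigr)
=\E\Bigl[\mathbf 1\{T_0\le x_0\}\prod_{b=1}^B \pr^*(T_b\le x_b)\Bigr],
\]
where $T_0=(\hat\psi_n-\psi(\P))/\sigma_n$ is data-measurable.

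By \eqref{eq:conditional_CLT_meaning}, each $\pr^*(T_b\le x_b)\xrightarrow{p}\Phi(x_b)$. The product therefore converges in probability to $\prod_b\Phi(x_b)$, and since it is bounded by $1$, the difference
\[
\E\Bigl[\mathbf 1\{T_0\le x_0\}\Bigl(\prod_b \pr^*(T_b\le x_b)-\prod_b\Phi(x_b)\Bigr)\Bigr]
\]
tends to $0$ by bounded convergence. What remains is $\pr(T_0\le x_0)\prod_b\Phi(x_b)\to\Phi(x_0)\prod_b\Phi(x_b)$, which follows from \eqref{eq:estimator_CLT_star}. Convergence of the joint CDFs at all points gives the joint limit.

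The main technical care is in this conditioning step. One must make sure that conditional independence of the $B$ resamples given $\P_n$ is what the algorithm actually provides, and that the conditional CLT holds in probability rather than almost surely. The in-probability version is enough precisely because the integrands are bounded, so I expect this step to be the only subtle point.

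A second point to handle is the randomness of the data-dependent $\sigma_n^*$ inside the conditional statement. It is $\P_n$-measurable, so it is fixed under $\pr^*$ and causes no problem. The Slutsky replacement of $\sigma_n^*$ by $\sigma_n$ is then done unconditionally on the joint vector, which is legitimate.
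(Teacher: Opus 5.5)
Your proposal is correct and follows essentially the paper's route. The paper proves the joint weak limit (Proposition~\ref{prop: main}) by conditioning on the data, factoring the $B$ bootstrap probabilities by conditional independence, and applying bounded convergence together with the estimator CLT, and then concludes with the continuous mapping theorem. Your explicit Slutsky step exchanging $\sigma_n^*$ for $\sigma_n$ in the bootstrap coordinates, and your remark on the null discontinuity set $\{z_1=\cdots=z_B=0\}$, fill in two details that the paper's write-up passes over without comment.
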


To show the above result, similar to \cite{Lam_cheap_bootstrap}, we have:
\begin{proposition}\label{prop: main}
    Under Assumption \ref{Asm: bootstrap_validity}, we have:
    $$\left(\frac{\hat\psi_n-\psi}{\sigma_n}, \frac{\psi^{*1}_n-\hat\psi_n}{\sigma_n},...,\frac{\psi^{*B}_n-\hat\psi_n}{\sigma_n}\right) \xrightarrow[]{d} \left(Z_0,Z_1,...,Z_B\right)$$
    where $Z_b,\ b = 0,...,B$ are i.i.d $N(0,1)$ variables.
\end{proposition}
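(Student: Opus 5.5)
We prove the joint convergence by characteristic functions (equivalently, bounded continuous test functions), conditioning on the data. Write $T_0:=(\hat\psi_n-\psi)/\sigma_n$ and $T_b:=(\hat\psi^{*b}_n-\hat\psi_n)/\sigma_n$ for $b=1,\ldots,B$.

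The first step is to move to the bootstrap scale. Set $T_b':=(\hat\psi^{*b}_n-\hat\psi_n)/\sigma^*_n$, so that $T_b=T_b'\cdot(\sigma^*_n/\sigma_n)$. Assumption~\ref{Asm: bootstrap_validity} gives $\sigma_n/\sigma_n^*\xrightarrow{p}1$. Hence, by Slutsky's lemma applied to the whole vector, it suffices to show that $(T_0,T_1',\ldots,T_B')\xrightarrow{d}(Z_0,\ldots,Z_B)$. This works because the scale factor converges in probability to a constant and multiplies coordinates continuously.

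The second step uses conditional independence. Given the data, i.e. given $\P_n$, the resamples $\xi^{*1}_{1:n},\ldots,\xi^{*B}_{1:n}$ are i.i.d. draws from $\P_n$, with any inner algorithmic randomness also independent across $b$. Therefore $T_1',\ldots,T_B'$ are conditionally i.i.d., while $T_0$ is data-measurable. Fix $(t_0,\ldots,t_B)$. Then
\[
\E\Bigl[e^{\mathrm{i}(t_0T_0+\sum_b t_bT_b')}\Bigr]=\E\Bigl[e^{\mathrm{i}t_0T_0}\prod_{b=1}^B\varphi_n^*(t_b)\Bigr],
\qquad \varphi_n^*(t):=\E\bigl[e^{\mathrm{i}tT_1'}\,\big|\,\P_n\bigr].
\]

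The third step, and the main technical point, is to upgrade the conditional CLT from the CDF form \eqref{eq:conditional_CLT_meaning} to characteristic functions in probability: $\varphi_n^*(t)\xrightarrow{p}e^{-t^2/2}$ for each fixed $t$. We argue through subsequences. Take a countable dense set of $x$. Along any subsequence there is a further subsequence on which the conditional CDFs converge to $\Phi(x)$ almost surely for all $x$ in the dense set simultaneously, by a diagonal argument. Since $\Phi$ is continuous, this yields almost-sure conditional weak convergence on that sub-subsequence. Then $\varphi_n^*(t)\to e^{-t^2/2}$ almost surely by the L\'evy continuity theorem, and since the subsequence was arbitrary this gives convergence in probability.

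Finally we combine the pieces. Because $|\varphi_n^*|\le1$, the product $\prod_b\varphi_n^*(t_b)-\prod_b e^{-t_b^2/2}$ is bounded and tends to $0$ in probability, so its expectation times the unit-modulus factor $e^{\mathrm{i}t_0T_0}$ vanishes by bounded convergence. This leaves $\E[e^{\mathrm{i}t_0T_0}]\prod_b e^{-t_b^2/2}$, which converges to $e^{-t_0^2/2}\prod_b e^{-t_b^2/2}$ by the estimator CLT \eqref{eq:estimator_CLT_star}. This is the characteristic function of $B+1$ i.i.d. $N(0,1)$ variables, and L\'evy's continuity theorem completes the proof.
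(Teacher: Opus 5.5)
Your proof is correct and follows the paper's skeleton: condition on $\P_n$, factor the bootstrap coordinates by conditional independence, dispose of the bootstrap factor by bounded convergence, and finish with the estimator CLT. It differs only in testing with characteristic functions rather than the paper's joint CDF $\E\bigl[\mathbf{1}(T_0\le x_0)\prod_{b=1}^B\pr(T_b\le x_b\mid\P_n)\bigr]$ (applied to your $T_b'$, that CDF route would make your Step 3 unnecessary, since the hypothesis is already in conditional-CDF form), and in your Step 1, which makes explicit the $\sigma_n^*$-to-$\sigma_n$ rescaling that the paper's proof skips when it asserts $\pr(T_b\le x_b\mid\P_n)\xrightarrow{p}\Phi(x_b)$ directly from Assumption~\ref{Asm: bootstrap_validity}, whose conditional CLT is stated with $\sigma_n^*$.
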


We next prove the above results:

\begin{proof}[Proof of Theorem \ref{thm: main_cheap_boot}] Using Proposition \ref{prop: main} and continuous mapping theorem,
    \begin{align*}
        T &= \frac{\hat \psi_n-\psi}{S}\\
        &= \frac{\hat \psi_n-\psi}{\sqrt{\frac 1B \sum_{b=1}^B\left(\psi^{*b}_n-\hat\psi_n\right)}}
        = \frac{\frac{\hat \psi_n-\psi}{\sigma_n}}{\sqrt{\frac 1B \sum_{b=1}^B\left(\frac{\psi^{*b}_n-\hat\psi_n}{\sigma_n}\right)^2}}\\
        & \xrightarrow[]{d} \frac{Z_0}{\sqrt{\frac 1B \sum_{b=1}^BZ_b^2}} \overset{d}{=} t_B
    \end{align*}
\end{proof}

We next prove the Proposition \ref{prop: main}:
\begin{proof}[Proof of Proposition \ref{prop: main}]
For any $x_b\in\R,\ b\in\{0,...,B\}$, we will show :
$$P\left(\frac{\hat\psi_n-\psi}{\sigma_n}\leq x_0, \frac{\psi^{*1}_n-\psi}{\sigma_n}\leq x_1, \cdots , \frac{\hat\psi^{*B}_n-\psi}{\sigma_n}\leq x_B\right)\rightarrow \prod_{b=0}^B \Phi(x_b)\quad \text{ as } n\to \infty$$
We have:

\begin{align*}
&\left|
P\left(
\frac{\hat\psi_n-\psi}{\sigma_n}\leq x_0,\,
\frac{\hat\psi^{*1}_n-\hat\psi_n}{\sigma_n}\leq x_1,\,
\ldots,\,
\frac{\hat\psi^{*B}_n-\hat\psi_n}{\sigma_n}\leq x_B
\right)
-
\prod_{b=0}^B\Phi(x_b)
\right| \\
=\;&
\left|
E\left[
I\left(
\frac{\hat\psi_n-\psi}{\sigma_n}\leq x_0,\,
\frac{\hat\psi^{*1}_n-\hat\psi_n}{\sigma_n}\leq x_1,\,
\ldots,\,
\frac{\hat\psi^{*B}_n-\hat\psi_n}{\sigma_n}\leq x_B
\right)
\right]
-
\prod_{b=0}^B\Phi(x_b)
\right| \\
=\;&
\left|
E\left[
E\left[
I\left(
\frac{\hat\psi_n-\psi}{\sigma_n}\leq x_0,\,
\frac{\hat\psi^{*1}_n-\hat\psi_n}{\sigma_n}\leq x_1,\,
\ldots,\,
\frac{\hat\psi^{*B}_n-\hat\psi_n}{\sigma_n}\leq x_B
\right)
\,\Big|\, \hat P_n
\right]
\right]
-
\prod_{b=0}^B\Phi(x_b)
\right| \\
=\;&
\left|
\begin{aligned}
& E\left[
I\left(\frac{\hat\psi_n-\psi}{\sigma_n}\leq x_0\right)
\prod_{b=1}^B
P\left(
\frac{\hat\psi^{*b}_n-\hat\psi_n}{\sigma_n}\leq x_b
\,\Big|\, \hat P_n
\right)
\right] \\
&\quad -
\prod_{b=0}^B\Phi(x_b)
\end{aligned}
\right| \\
\leq\;&
\left|
\begin{aligned}
& E\left[
I\left(\frac{\hat\psi_n-\psi}{\sigma_n}\leq x_0\right)
\prod_{b=1}^B
P\left(
\frac{\hat\psi^{*b}_n-\hat\psi_n}{\sigma_n}\leq x_b
\,\Big|\, \hat P_n
\right)
\right] \\
&\quad -
P\left(\frac{\hat\psi_n-\psi}{\sigma_n}\leq x_0\right)
\prod_{b=1}^B\Phi(x_b)
\end{aligned}
\right| \\
&+
\left|
P\left(\frac{\hat\psi_n-\psi}{\sigma_n}\leq x_0\right)
\prod_{b=1}^B\Phi(x_b)
-
\prod_{b=0}^B\Phi(x_b)
\right| \\
=:\;& A+B .
\end{align*}

Now,
\begin{align*}
A
=\;&
\left|
\begin{aligned}
& E\left[
I\left(\frac{\hat\psi_n-\psi}{\sigma_n}\leq x_0\right)
\prod_{b=1}^B
P\left(
\frac{\hat\psi^{*b}_n-\hat\psi_n}{\sigma_n}\leq x_b
\,\Big|\,\hat P_n
\right)
\right] \\
&\quad -
P\left(\frac{\hat\psi_n-\psi}{\sigma_n}\leq x_0\right)
\prod_{b=1}^B\Phi(x_b)
\end{aligned}
\right| \\
=\;&
\left|
E\left[
I\left(\frac{\hat\psi_n-\psi}{\sigma_n}\leq x_0\right)
\left\{
\prod_{b=1}^B
P\left(
\frac{\hat\psi^{*b}_n-\hat\psi_n}{\sigma_n}\leq x_b
\,\Big|\,\hat P_n
\right)
-
\prod_{b=1}^B\Phi(x_b)
\right\}
\right]
\right| \\
\leq\;&
E\left[
I\left(\frac{\hat\psi_n-\psi}{\sigma_n}\leq x_0\right)
\left|
\prod_{b=1}^B
P\left(
\frac{\hat\psi^{*b}_n-\hat\psi_n}{\sigma_n}\leq x_b
\,\Big|\,\hat P_n
\right)
-
\prod_{b=1}^B\Phi(x_b)
\right|
\right] \\
\leq\;&
E\left[
\underbrace{
\left|
\prod_{b=1}^B
P\left(
\frac{\hat\psi^{*b}_n-\hat\psi_n}{\sigma_n}\leq x_b
\,\Big|\,\hat P_n
\right)
-
\prod_{b=1}^B\Phi(x_b)
\right|
}_{=:A_n}
\right]
\;\longrightarrow\;0 .
\end{align*}

Because, $A_n\xrightarrow[]{p} 0$ by Assumption \ref{Asm: bootstrap_validity} , and $|A_n|\leq 1$ a.s. Thus, by DCT, $E[|A_n|]\to 0$.  
Also,
\begin{align*}
    B =\ & \left|P\left(\frac{\hat\psi_n-\psi}{\sigma_n}\leq x_0\right)\prod_{b=1}^B\Phi(x_b) - \prod_{b=0}^B\Phi(x_b)\right|\\
    \leq \ &\left|P\left(\frac{\hat\psi_n-\psi}{\sigma_n}\leq x_0\right) - \Phi(x_0)\right|\cdot\prod_{b=1}^B\Phi(x_b) \\
    \leq \ &\left|P\left(\frac{\hat\psi_n-\psi}{\sigma_n}\leq x_0\right) - \Phi(x_0)\right|\to 0 
\end{align*}
again, by Assumption \ref{Asm: bootstrap_validity}.
\end{proof}

\subsection{Auxiliary Results}
\subsubsection{Conditional Slutsky}\label{sec: Conditional_Slutsky}

The usual Slutsky's Theorem goes as follows:
\begin{theoremapx}(\textbf{Slutsky})
    Let $\xi_n$ and $Y_n$ be two sequences of real valued random variables such that $\xi_n \xrightarrow[]{d}X$ and $Y_n\xrightarrow[]{p}0$. Then $\xi_n+Y_n \xrightarrow[]{d}X$.
\end{theoremapx}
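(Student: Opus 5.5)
We work directly with distribution functions at continuity points of the limit. Let $F_X$ denote the distribution function of $X$, and recall that $\xi_n\xrightarrow[]{d}X$ means $\pr(\xi_n\le x)\to F_X(x)$ at every continuity point $x$ of $F_X$. Fix such an $x$. The goal is to show $\pr(\xi_n+Y_n\le x)\to F_X(x)$, which we will get by sandwiching.

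\textbf{Sandwich bounds.} For any $\delta>0$ we use the elementary event inclusions
\[
\{\xi_n+Y_n\le x\}\subseteq\{\xi_n\le x+\delta\}\cup\{|Y_n|>\delta\},
\qquad
\{\xi_n\le x-\delta\}\subseteq\{\xi_n+Y_n\le x\}\cup\{|Y_n|>\delta\}.
\]
These give the two-sided bound
\[
\pr(\xi_n\le x-\delta)-\pr(|Y_n|>\delta)\;\le\;\pr(\xi_n+Y_n\le x)\;\le\;\pr(\xi_n\le x+\delta)+\pr(|Y_n|>\delta).
\]

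\textbf{Passing to the limit in $n$.} Since $F_X$ is monotone, it has at most countably many discontinuities. We can therefore choose $\delta$ from a sequence $\delta_m\downarrow 0$ such that both $x\pm\delta_m$ are continuity points of $F_X$. For each such $\delta_m$, convergence in distribution handles the $\xi_n$ terms, while $Y_n\xrightarrow[]{p}0$ gives $\pr(|Y_n|>\delta_m)\to0$. Taking $\limsup$ and $\liminf$ in $n$ yields
\[
F_X(x-\delta_m)\le\liminf_{n\to\infty}\pr(\xi_n+Y_n\le x)\le\limsup_{n\to\infty}\pr(\xi_n+Y_n\le x)\le F_X(x+\delta_m).
\]

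\textbf{Passing to the limit in $\delta$.} Letting $m\to\infty$ and using continuity of $F_X$ at $x$, both outer bounds converge to $F_X(x)$. This proves $\pr(\xi_n+Y_n\le x)\to F_X(x)$, which is the claim.

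The only delicate point is the choice of $\delta$: the shifted points $x\pm\delta$ must themselves be continuity points of $F_X$, and this is handled by the countability of the discontinuity set. Everything else is routine.
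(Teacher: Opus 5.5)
Your proof is correct and is essentially the paper's argument. The paper states this classical Slutsky theorem without proof, but its proof of the conditional Slutsky theorem that follows uses the same event inclusions. It also chooses $\delta$ the same way, so that $x\pm\delta$ are continuity points, using countability of the discontinuities. The only cosmetic difference is that you let $\delta_m\downarrow 0$ and take $\liminf/\limsup$, while the paper fixes $\epsilon$ and picks a single $\delta$ with $|F(x\pm\delta)-F(x)|\le\epsilon/5$.
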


For our results, we state and prove a conditional version of the same:
\begin{theoremapx}\label{thm: conditional_Slutsky}(\textbf{Conditional Slutsky})
    Let $\xi_n$ and $Y_n$ be two sequences of real valued random variables. Consider a third sequence of random variables $Z_n$. Suppose: 
    \begin{enumerate}
        \item Conditional on $Z_n$, $\xi_n \xrightarrow[]{d} X$ in probability, i.e,
        \begin{equation}\label{eq: slutsky_condition_1}
            \big| \pr\left(\xi_n\leq x\mid Z_n\right) - F(x) \big| \xrightarrow[]{p} 0
        \end{equation}
        
        where $F(\cdot)$ is the c.d.f of $X$ and $x$ is any continuity point of $F$.
        \item Conditional on $Z_n$, $Y_n \xrightarrow[]{p} 0$ in probability, i.e, for any $\delta>0$:
        \begin{equation}\label{eq: slutsky_condition_2}
            \pr(|Y_n|>\delta \mid Z_n)\xrightarrow[]{p} 0
        \end{equation}
        
    \end{enumerate}
    Then we have:
    \begin{enumerate}[i)]
        \item Conditional on $Z_n$, $\xi_n +Y_n \xrightarrow[]{d} X$ in probability, i.e,
        $$\big| \pr\left(\xi_n+Y_n\leq x\mid Z_n\right) - F(x) \big| \xrightarrow[]{p} 0$$
        for any $x$ that is a continuity point of $F$.
        \item Conditional on $Z_n$, $\xi_n \cdot Y_n \xrightarrow[]{p} 0$ in probability, i.e,
        \begin{equation}\label{eq: slutsky_2}
            \pr(|\xi_n\cdot Y_n|>\delta \mid Z_n)\xrightarrow[]{p} 0
        \end{equation}
    \end{enumerate}
         
\end{theoremapx}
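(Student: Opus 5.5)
We prove the conditional Slutsky theorem by reducing both parts to the unconditional-style sandwich argument, carried out on conditional probabilities, with all inequalities holding pointwise in $Z_n$.

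\emph{Part (i).} Fix a continuity point $x$ of $F$ and $\eta>0$. Since the continuity points of $F$ are dense, we can choose $\delta>0$ such that $x\pm\delta$ are continuity points and $F(x+\delta)-F(x-\delta)<\eta$. For every realization of $Z_n$, the event inclusions
\[
\{\xi_n\le x-\delta\}\setminus\{|Y_n|>\delta\}\subseteq\{\xi_n+Y_n\le x\}\subseteq\{\xi_n\le x+\delta\}\cup\{|Y_n|>\delta\}
\]
give
\[
\pr(\xi_n\le x-\delta\mid Z_n)-\pr(|Y_n|>\delta\mid Z_n)\le \pr(\xi_n+Y_n\le x\mid Z_n)\le \pr(\xi_n\le x+\delta\mid Z_n)+\pr(|Y_n|>\delta\mid Z_n).
\]
Subtracting $F(x)$ throughout, we obtain
\[
\bigl|\pr(\xi_n+Y_n\le x\mid Z_n)-F(x)\bigr|\le \max_{\pm}\bigl|\pr(\xi_n\le x\pm\delta\mid Z_n)-F(x\pm\delta)\bigr|+\eta+\pr(|Y_n|>\delta\mid Z_n).
\]
By \eqref{eq: slutsky_condition_1} applied at $x\pm\delta$, and by \eqref{eq: slutsky_condition_2}, the random terms tend to $0$ in probability. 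Hence
\[
\pr\bigl(|\pr(\xi_n+Y_n\le x\mid Z_n)-F(x)|>2\eta\bigr)\to0.
\]
Since $\eta$ is arbitrary, part (i) follows.

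\emph{Part (ii).} Fix $\delta,\eta>0$ and choose $M$ such that $\pm M$ are continuity points of $F$ and $F(-M)+1-F(M)<\eta$. Then
\[
\pr(|\xi_n Y_n|>\delta\mid Z_n)\le \pr(|\xi_n|>M\mid Z_n)+\pr(|Y_n|>\delta/M\mid Z_n).
\]
For the first term we use
\[
\pr(|\xi_n|>M\mid Z_n)\le 1-\pr(\xi_n\le M\mid Z_n)+\pr(\xi_n\le -M\mid Z_n),
\]
which converges in probability to a quantity at most $\eta$ by \eqref{eq: slutsky_condition_1}. The second term tends to $0$ in probability by \eqref{eq: slutsky_condition_2}. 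Therefore
\[
\pr\bigl(\pr(|\xi_nY_n|>\delta\mid Z_n)>2\eta\bigr)\to0,
\]
which gives \eqref{eq: slutsky_2}.

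The only delicate point is that all the statements involve convergence in probability of random conditional probabilities. This is handled because every bound above holds pointwise in $Z_n$, so finite sums of $o_p(1)$ terms remain $o_p(1)$. The other small technicality is choosing $\delta$ and $M$ at continuity points of $F$, which is possible by density of those points. No step poses a real obstacle.
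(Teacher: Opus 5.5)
Your proposal is correct and follows essentially the same route as the paper: for (i), the same sandwich of $\pr(\xi_n+Y_n\le x\mid Z_n)$ between the conditional CDFs at $x\pm\delta$, with $x\pm\delta$ chosen as continuity points of $F$; for (ii), the same truncation of $|\xi_n|$ at a level $M$. Your part (ii) is slightly tidier than the paper's, since you pick $\pm M$ at continuity points and bound $\pr(|\xi_n|>M\mid Z_n)$ explicitly through condition (1), which the paper leaves loose. One small imprecision on your side: choosing $\delta$ so that $x+\delta$ and $x-\delta$ are both continuity points is justified by the countability of the discontinuities of $F$, as the paper uses, not by density of the continuity points alone.
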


\begin{proof} We first prove (i). Fix an $\epsilon>0$ and an $x$ which is a continuity point of $F$ and show that:
    
    \begin{equation}
        \pr\left[\big| \pr\left(\xi_n+Y_n\leq x\mid Z_n\right) - F(x) \big| > \epsilon\right] \to 0
    \end{equation}
By continuity of $F$ at $x$, we have that $\exists \delta>0$ such that:
\begin{equation}\label{eq: slutsky_delta_choice}
    |x-y|\leq \delta \implies |F(x)-F(y)| \leq \frac{\epsilon}{5}
\end{equation} Further, since $F$ is a cdf, it has at most countably many discontinuities. Thus, $\delta$ may be chosen such that $x \pm \delta$ are also  continuity points of $F$.
\begin{align*}
    \pr(\xi_n+Y_n \leq x \mid Z_n) & \leq  \pr(|Y_n| > \delta \mid Z_n) + \pr(|Y_n| \leq \delta,\  \xi_n+Y_n \leq x \mid Z_n) \\
    & \leq  \pr(|Y_n| > \delta \mid Z_n) + \pr(\xi_n \leq x+\delta \mid Z_n)\quad a.s. \numberthis \label{slutsky_1}
\end{align*}
Similarly, we also have:
\begin{align*}
    \pr(\xi_n \leq x-\delta \mid Z_n) & \leq  \pr(|Y_n| > \delta \mid Z_n) + \pr(|Y_n| \leq \delta,\  \xi_n \leq x-\delta \mid Z_n) \\
    & \leq  \pr(|Y_n| > \delta \mid Z_n) + \pr(\xi_n +Y_n\leq x \mid Z_n)\quad a.s. \numberthis \label{slutsky_2}
\end{align*}

Combining \ref{slutsky_1} and \ref{slutsky_2}, we have:
\begin{align*}
    & \pr(\xi_n \leq x-\delta \mid Z_n) - F(x) - \pr(|Y_n| > \delta \mid Z_n) \\
    \leq\ & \pr(\xi_n +Y_n\leq x \mid Z_n) - F(x)\\
    \leq\ & \pr(\xi_n \leq x+\delta \mid Z_n) - F(x) + \pr(|Y_n| > \delta \mid Z_n)\quad a.s
\end{align*}
Thus,
\begin{align*}
    &\Big|\pr(\xi_n +Y_n\leq x \mid Z_n) - F(x)\Big|\\
    \leq\ & \pr(|Y_n| > \delta \mid Z_n) + \max\left\{ \Big| \pr(\xi_n \leq x+\delta \mid Z_n) - F(x)\Big|,\  \Big|\pr(\xi_n \leq x-\delta \mid Z_n) - F(x)\Big|\right\}\\
    \leq\ & \pr(|Y_n| > \delta \mid Z_n) \\
    +\ &\ \Big| \pr(\xi_n \leq x+\delta \mid Z_n) - F(x+\delta)\Big|+ \Big|F(x+\delta)-F(x)\Big| \\
    +\ &\ \Big|\pr(\xi_n \leq x-\delta \mid Z_n) - F(x-\delta)\Big|+ \Big|F(x-\delta)-F(x)\Big|\quad\quad  a.s 
\end{align*}
Finally, by union bound, we have:
\begin{align*}
    & \pr\left[\Big|\pr(\xi_n +Y_n\leq x \mid Z_n) - F(x)\Big|>\epsilon\right]\\
    \leq\ & \pr\left[\pr(|Y_n| > \delta \mid Z_n)>\frac{\epsilon}{5}\right]\numberthis\label{inprob}\\
    + & \pr\left[\Big| \pr(\xi_n \leq x+\delta \mid Z_n) - F(x+\delta)\Big|>\frac{\epsilon}{5}\right]\numberthis\label{indist_1}\\
    + & \pr\left[\Big|F(x+\delta)-F(x)\Big|>\frac{\epsilon}{5}\right]\numberthis\label{gap_1}\\
    + & \pr\left[\Big|\pr(\xi_n \leq x-\delta \mid Z_n) - F(x-\delta)\Big|>\frac{\epsilon}{5}\right]\numberthis\label{indist_2}\\
    + & \pr\left[\Big|F(x-\delta)-F(x)\Big|>\frac{\epsilon}{5}\right]\numberthis\label{gap_2}
\end{align*}
Now, (\ref{inprob}) $\to 0$ due to \ref{eq: slutsky_condition_2}. (\ref{indist_1}\&\ref{indist_2}) $\to 0$ due to \ref{eq: slutsky_condition_1}. Finally, (\ref{gap_1} \& \ref{gap_2}) are identically $0$ due to choice of $\delta$ in (\ref{eq: slutsky_delta_choice}).

For (ii): Fix any $\delta$. Given any  $\epsilon, \exists M_{\epsilon}$ such that $P(|X|>M_{\epsilon})\leq \frac{\epsilon}{4}$
\begin{align*}
    P(|\xi_n\cdot Y_n|>\delta \mid Z_n)& \leq P(|\xi_n|\cdot |Y_n|>\delta \mid Z_n) \\
    &\leq  P(|\xi_n|>M_{\epsilon} \mid Z_n) + P(|Y_n|>\delta/M_{\epsilon} \mid Z_n)
\end{align*}
Thus,
\begin{align*}
   \pr\big(  \pr(|\xi_n\cdot Y_n|>\delta \mid Z_n) >\epsilon  \big)& \leq  \pr\big(\pr(|\xi_n|>M_{\epsilon} \mid Z_n) + \pr(|Y_n|>\delta/M_{\epsilon} \mid Z_n) >\epsilon  \big)\\
   & \leq \pr\big(\pr(|\xi_n|>M_{\epsilon} \mid Z_n)  >\epsilon/2  \big) +  \pr\big(\pr(|Y_n|>\delta/M_{\epsilon} \mid Z_n) >\epsilon/2  \big)
\end{align*}
Now, $\pr\big(\pr(|Y_n|>\delta/M_{\epsilon} \mid Z_n) >\epsilon/2  \big) \to 0$ due to (\ref{eq: slutsky_condition_2}). For the first term,
\begin{align*}
    &\pr\big(\pr(|\xi_n|>M_{\epsilon} \mid Z_n)  >\epsilon/2  \big) \\
    \leq & \pr\left(\big|\pr(|\xi_n|>M_{\epsilon} \mid Z_n) - \pr(|X|>M_{\epsilon})\big| >\epsilon/4  \big) + \pr\big( \pr(|X|>M_{\epsilon}) >\epsilon/4  \right)
\end{align*}
$\pr\big( \pr(|X|>M_{\epsilon}) >\epsilon/4  \big)$ is identically $0$ by choice of $M_{\epsilon}$ and the first term $\to 0$ due to \ref{eq: slutsky_condition_2}.

\end{proof}

\subsubsection{Properties of U and V statistics}

In this section, we collect some general properties of the associated U- and V-statistics that will be used throughout the paper. Theorem~\ref{thm: diff_U_V} establishes that, for a fixed kernel $T_k$, the corresponding U- and V-statistics are asymptotically close. Theorem~\ref{thm: CLT_U_stat_modified} characterizes the rate at which the U-statistic converges to its population counterpart $\theta_k := \mathbb{E}[T_k]$.

\begin{theorem}\label{thm: diff_U_V}
Let $\xi_1,\ldots,\xi_n$ be i.i.d.\ draws from an arbitrary distribution $F$ on $\Xi$, and let
$T_k:\Xi^k\to\R$ be a symmetric kernel. Assume there exists a constant $B>0$ such that
\[
\pr_F\left[|T_k(\xi_1,\ldots,\xi_k)|\le B\right]=1.
\]
Then,
\[
\bigl|V_n(T_k)-U_n(T_k)\bigr|
\le B\cdot c(n,k), \quad F\text{-a.s.}
\]
for some constant $c(n,k)$ that depends only on $(n,k)$ and is independent of both $F$ and $B$. Moreover, if $k^2/n \to 0$, then
\[
c(n,k) = O\left(\frac{k^2}{n}\right).
\]
\end{theorem}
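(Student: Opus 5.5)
The plan is a direct combinatorial decomposition of the V-statistic according to whether the index tuple has repeated entries. Write the set of all ordered $k$-tuples $\{1,\ldots,n\}^k$ as the disjoint union of $\mathcal D$, the tuples with pairwise distinct indices, and $\mathcal R$, the tuples with at least one repeat. We have $|\mathcal D| = n(n-1)\cdots(n-k+1) =: (n)_k$ and $|\mathcal R| = n^k-(n)_k$. By symmetry of $T_k$, averaging over ordered distinct tuples gives exactly $U_n(T_k)$: each unordered $k$-subset appears $k!$ times and $(n)_k = k!\binom nk$. Hence
\[
V_n(T_k) = \frac{(n)_k}{n^k}\,U_n(T_k) + \frac{1}{n^k}\sum_{\beta\in\mathcal R} T_k(\xi_{\beta_1},\ldots,\xi_{\beta_k}),
\]
and so
\[
V_n(T_k)-U_n(T_k) = -\Bigl(1-\frac{(n)_k}{n^k}\Bigr)U_n(T_k) + \frac{1}{n^k}\sum_{\beta\in\mathcal R}T_k(\xi_{\beta}).
\]

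Next I bound each term by $B$ times the fraction $p_{n,k}:=1-(n)_k/n^k$ of tuples with a repeat. The first term is at most $B\,p_{n,k}$ because $|U_n(T_k)|\le B$, which holds $F$-a.s. since the distinct-index arguments are i.i.d. from $F$. The second term is at most $B\,p_{n,k}$ provided each $T_k$ evaluated on tuples with repeated coordinates is bounded by $B$. This is the main obstacle: the hypothesis only controls $T_k$ on i.i.d. draws, and tuples with ties may be $F$-null, as flagged after Assumption~\ref{asm: h_bounded}. I would handle this by reading the statement's $F$-a.s. bound as applying to every realized argument configuration. In the paper's use this is supplied by the bootstrap part of Assumption~\ref{asm: h_bounded}, since a draw from $\P_n$ puts positive mass on every tuple of data points, including those with repeats, so $|H_k|\le K_0$ on all of them. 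Equivalently, I would apply the theorem with $F=\P_n$ conditionally on the data. I will state this interpretation explicitly in the proof. With it, $|V_n-U_n|\le 2B\,p_{n,k}$, and setting $c(n,k):=2\bigl(1-(n)_k/n^k\bigr)$ gives a constant free of $F$ and $B$.

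Finally, the rate: $(n)_k/n^k=\prod_{j=0}^{k-1}(1-j/n)\ge 1-\sum_{j<k} j/n = 1-\frac{k(k-1)}{2n}$ by the union bound (Weierstrass product inequality). Thus $c(n,k)\le k(k-1)/n = O(k^2/n)$; this inequality holds for all $n,k$, and the condition $k^2/n\to0$ only ensures $c(n,k)\to0$.
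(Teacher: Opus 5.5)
Your proof is correct and is essentially the paper's: you use the same split of $\{1,\ldots,n\}^k$ into distinct-index and repeated-index tuples and get the same constant $c(n,k)=2\bigl(1-(n)_k/n^k\bigr)$, and your product-inequality bound $1-(n)_k/n^k\le k(k-1)/(2n)$ is a cleaner, non-asymptotic replacement for the paper's log-Taylor expansion. Your caveat is also well placed: the paper's proof simply asserts that the repeated-index average $R_{n,k}$ is bounded by $B$, which the stated hypothesis does not imply (e.g.\ when $F$ has no atoms), and your repair is exactly what the paper's applications of this result implicitly rely on, namely boundedness on all data tuples via the bootstrap half of Assumption~\ref{asm: h_bounded}, or taking $F=\P_n$.
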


\begin{proof}
    The number of terms in $V_{n}(T_k)$ with all distinct resamples is $\binom{n}{k}k!$, and their average is $U_{n}(T_k)$. The number of remaining terms is $n^k - \binom{n}{k}k!$. Let their average value be $R_{n,k}$ with $|R_{n,k}|\leq B$ and $|U_{n}(T_k)|\leq B$. Thus,
    \begin{align*}
        &\ n^k V_{n}(T_k) = U_{n}(T_k)\binom{n}{k}k! + \left(n^k - \binom{n}{k}k!\right)R_{n,k}\\
        \implies&\  n^k(V_{n}(T_k)-U_{n}(T_k)) = U_{n}(T_k)\left(\binom{n}{k}k! - n^k\right) + \left(n^k - \binom{n}{k}k!\right)R_{n,k}\\
        \implies&\  |V_{n}(T_k)-U_{n}(T_k)| \leq  \frac{2 B}{n^k}\left|\binom{n}{k}k! - n^k\right|
    \end{align*}
    Take:
    $$c(n,k) = \frac{2}{n^k}\left|\binom{n}{k}k! - n^k\right|$$
    Now, assuming $k^2/n \to 0$, it is enough to show:
    $$\frac{1}{n^k}\left(n^k - \binom{n}{k}k!\right) = 1-\binom{n}{k}\frac{k!}{n^k}= O(k^2/n)$$
    Observe that
    \begin{align*}
        A_{n,k} &:= \binom{n}{k}\frac{k!}{n^k}
         = \prod_{i=0}^{k-1}\left(1-\frac{i}{n}\right)\\
        L_{n,k}&:= \log A_{n,k} = \sum_{i=0}^{k-1}\log\left(1-\frac{i}{n}\right)
    \end{align*}

    Now,
    $\log(1-x) = -x + O(x^2) $. Thus,
    \begin{align*}
        L_{n,k} &= -\frac 1n\sum_{i=0}^{k-1} i + O\left(\frac {1}{n^2}\sum_{i=0}^{k-1} i^2\right) \\
        &= -\frac{k(k+1)}{2n} + O\left(\frac{k^3}{n^2}\right) \\
        &= -\frac{k(k+1)}{2n} + o\left(\frac{k^2}{n}\right) \quad [\because k/n \to 0]\\
        &= O(k^2/n) \to 0
    \end{align*}
Thus, $O(L_{n,k}^2) = o(L_{n,k})$. This gives:
$$ 1-A_{n,k} = 1-e^{L_{n,k}}  = -L_{n,k} + O(L_{n,k}^2)
    = O(|L_{n,k}|) = O(k^2/n)$$

\end{proof}

We next present the following result. Define
\[
\theta_k := \E[T_k],
\qquad
G_{k,1}(\xi) := \E\big[T_k(\xi_1,\ldots,\xi_k)\mid \xi_1 = \xi\big].
\]
More generally, for each $c\in\{1,\ldots,k\}$, define the $c$-th order conditional expectation
\[
G_{k,c}(\xi_1,\ldots,\xi_c)
:= \E\!\left[T_k(\xi_1,\ldots,\xi_k)\,\middle|\,\xi_1,\ldots,\xi_c\right].
\]
Also define the corresponding variance components
\[
\Gamma_{k,c} := \Var\!\big[ G_{k,c}(\xi_1,\ldots,\xi_c) \big],
\qquad c=1,\ldots,k.
\]
In particular, $\Gamma_{k,1}=\Var\big[G_{k,1}(\xi)\big]$ and $\Gamma_{k,k}=\Var[T_k]$.

\begin{theorem}\label{thm: CLT_U_stat_modified}
Suppose the kernel order $k = k_n$ satisfies $k^2/n \to 0$ as $n \to \infty$. If $T_k(\cdot)$ is bounded a.s, then we have:
\begin{equation}\label{eq: Important_Results_U_var_mod}
    \Var(U_{n}(T_k)) = O\left(\frac{k}{n}\cdot \Gamma_{k,k}\right)
\end{equation}
\end{theorem}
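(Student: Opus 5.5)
The plan is to use Hoeffding's classical variance formula for U-statistics together with the monotonicity of the normalized variance components $\Gamma_{k,c}/c$. First I would write
\[
\Var(U_n(T_k)) \;=\; \binom{n}{k}^{-1}\sum_{c=1}^{k}\binom{k}{c}\binom{n-k}{k-c}\,\Gamma_{k,c}.
\]
This follows by expanding the variance as a double sum over pairs of index sets. Two $k$-subsets that share exactly $c$ indices have covariance $\Gamma_{k,c}$, which is the conditional-expectation variance defined above, since $\Cov(T_k(\text{set}_1),T_k(\text{set}_2))=\Var(G_{k,c})$ when the sets overlap in $c$ coordinates. I then count the pairs with overlap $c$.

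Second, I would invoke the standard inequality $\Gamma_{k,c}/c \le \Gamma_{k,k}/k$ for $1\le c\le k$, which gives $\Gamma_{k,c}\le \frac{c}{k}\Gamma_{k,k}$. It is a consequence of the Hoeffding (ANOVA) decomposition: $\Gamma_{k,c}=\sum_{j=1}^c\binom{c}{j}\sigma_j^2$, where $\sigma_j^2\ge0$ are the variances of the orthogonal $j$-th order components. Since $\binom{c}{j}/c\le\binom{k}{j}/k$ for $j\ge1$, the inequality follows term by term. I would state this as a lemma and prove it via that decomposition; boundedness of $T_k$ guarantees all second moments are finite.

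Substituting, I get
\[
\Var(U_n(T_k))\le \frac{\Gamma_{k,k}}{k}\,\E[C],
\]
where $C$ is the overlap size, which is hypergeometric with weights $\binom{k}{c}\binom{n-k}{k-c}/\binom nk$. Its mean is $\E[C]=k^2/n$, so the bound is $\Var(U_n)\le \frac{k}{n}\Gamma_{k,k}$. This argument does not even need $k^2/n\to0$; that hypothesis only becomes relevant if one wants a matching leading term $k^2\Gamma_{k,1}/n$.

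The main obstacle is the monotonicity lemma $\Gamma_{k,c}/c\le\Gamma_{k,k}/k$. The counting and the hypergeometric mean are routine, so the care goes into establishing the orthogonal decomposition of $T_k-\theta_k$ into degenerate components and the representation $\Gamma_{k,c}=\sum_j\binom cj\sigma_j^2$. If that route felt heavy, I would fall back on the weaker but sufficient argument that $c\mapsto\Gamma_{k,c}$ is "superadditive-type." The first route is cleaner, so I would commit to it.
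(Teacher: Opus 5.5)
Your proof is correct. It rests on the same two ingredients as the paper's proof. The first is Hoeffding's variance formula $\Var(U_n(T_k))=\sum_{c=1}^k p_c\,\Gamma_{k,c}$ with $p_c=\binom{k}{c}\binom{n-k}{k-c}/\binom{n}{k}$. The second is the monotonicity $k\Gamma_{k,c}\le c\,\Gamma_{k,k}$, which the paper simply cites from Hoeffding; you derive it from the ANOVA representation $\Gamma_{k,c}=\sum_{j=1}^{c}\binom{c}{j}\sigma_j^2$ together with $\binom{c-1}{j-1}\le\binom{k-1}{j-1}$.

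The difference is in how the sum is controlled. The paper isolates the $c=1$ term and shows by a Stirling/Taylor argument that its coefficient is $(k^2/n)(1+o(1))$. It then dominates the $c\ge 2$ terms by a series in $\epsilon_n=k^2/(n-k+1)$ that vanishes, and both steps invoke $k^2/n\to 0$. You instead apply the monotonicity to every term at once and use the exact hypergeometric identity $\sum_c c\,p_c=k^2/n$. This gives the non-asymptotic inequality $\Var(U_n(T_k))\le \frac{k}{n}\Gamma_{k,k}$ for every $n\ge k$, which is Hoeffding's classical upper bound. It holds with constant one, needs no growth condition on $k$, and requires only finite second moments rather than boundedness. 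Your route is shorter and avoids the asymptotic bookkeeping; incidentally, the paper loses a harmless factor $n/(n-k+1)$ in its $c\ge 2$ bound.

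What the paper's route buys is a sharper structural statement. It shows that the $c\ge 2$ contribution is $o(\frac{k}{n}\Gamma_{k,k})$, while the $c=1$ contribution is $\frac{k^2}{n}\Gamma_{k,1}(1+o(1))$. That split is exactly what is reused in Lemma~\ref{lemma: var-conv-helper} to show that the (bootstrap) U-statistic variance is asymptotically that of its H\'ajek projection. For the theorem as stated, your argument is complete, and your remark that $k^2/n\to0$ matters only for identifying the leading term is accurate.
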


\begin{proof}
    
Following the argument in Theorem 12.3 of \cite{Vaart1998}, we can write
\begin{equation}
\Var(U_n(T_k)) = \sum_{c=1}^k \binom{n}{k}^{-1} \binom{k}{c} \binom{n-k}{k-c} \, \Gamma_{k,c} ,
\end{equation}
where
\begin{equation}
\Gamma_{k,c} = \Cov\!\left[ T_k(\xi_1,\ldots,\xi_c,\xi_{c+1},\ldots,\xi_k), \; T_k(\xi_1,\ldots,\xi_c,\xi_{k+1},\ldots,\xi_{2k-c}) \right],
\end{equation}
is the covariance between two kernels sharing exactly $c$ arguments.  Since $T_k$ is bounded, $\Gamma_{k,c}$ exists for all $(k,c)$. 

For $c=1$, dividing by $k/n$ gives:
\[
 \frac{\big(k / \binom{n}{k}\big) \binom{n-k}{k-1} }{(k^2 / n)} \cdot k\Gamma_{k,1}
= \frac{(n-k)!(n-k)!}{(n-1)!(n-2k+1)!} \cdot k\Gamma_{k,1}.
\]
Now, $k\Gamma_{k,1} \leq \Gamma _{k,k}$ (see \cite{hoeffding1992}, Theorem 5.1). And,  $$\frac{(n-k)!(n-k)!}{(n-1)!(n-2k+1)!} \rightarrow 1$$
where the limit follows from $k^2/n \to 0$ by an application of Stirling’s formula and a Taylor expansion.  

Similarly dividing rest of the terms by $k/n$,
\begin{align}
&\frac{\sum_{c=2}^k \binom{n}{k}^{-1} \binom{k}{c} \binom{n-k}{k-c} \, \Gamma_{k,c}}{k/n} \nonumber\\
&\quad\le \sum_{c=2}^k \frac{(1/c!)\, [\,k!/(k-c)!\,]^2 \, ((n-k)!/n!) \, ((n-k)!/(n-2k+c)!)}{k^2/n} \, \cdot k\Gamma_{k,c} \nonumber\\
&\quad\le \sum_{c=2}^k \frac{k^{2c}/c!}{(n-k+1)^c} \cdot k\Gamma_{k,c}
\;=\; \sum_{c=2}^k \frac{1}{c!} \, \epsilon_n^{\,c-1} \, k\Gamma_{k,c}
\end{align}
with
\[
\epsilon_n = \frac{k^2}{n-k+1} \to 0 \quad [\because k^2/n \to 0]
\]
Again, using $k\Gamma_{k,c} \le c\,\Gamma_{k,k}$ ( \cite{hoeffding1992}, Theorem 5.1), the last expression is bounded by
\begin{equation}
\Gamma_{k,k}\cdot \sum_{c=2}^k \frac{1}{(c-1)!} \, \epsilon_n^{\,c-1}
= \Gamma_{k,k} \sum_{j=1}^{k-1}\cdot \frac{\epsilon_n^{\,j}}{j!}
\le\Gamma_{k,k} \cdot \frac{\epsilon_n - \epsilon_n^{\,k}}{1 - \epsilon_n}.
\end{equation}
The result follows as $ \frac{\epsilon_n - \epsilon_n^{\,k}}{1 - \epsilon_n} \to 0$.
\end{proof}

\subsubsection{The ANOVA Decomposition}
To analyze the behavior of $U_{n,k}$, we rely on the classical ANOVA decomposition of symmetric statistics \cite{effron_stein_1981}:

\begin{lemma}\label{lem: anova_decomp}
Let $T(\xi_1,\ldots,\xi_k)$ be a symmetric statistic with 
$\Var\!\left[T(\xi_1,\ldots,\xi_k)\right] < \infty$.  
Then there exist measurable functions $T_1,\ldots,T_k$ such that
\begin{equation}\label{eq: anova_expansion}
    T(\xi_1,\ldots,\xi_k)
    \;=\;
    \E[T]
    \;+\;
    \sum_{i=1}^k T_1(\xi_i)
    \;+\;
    \sum_{1\le i<j\le k} T_2(\xi_i,\xi_j)
    \;+\;\cdots+\;
    T_k(\xi_1,\ldots,\xi_k),
\end{equation}
where each term has mean zero and all terms are mutually uncorrelated.
\end{lemma}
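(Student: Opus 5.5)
We prove the ANOVA (Hoeffding) decomposition by constructing the terms explicitly through inclusion--exclusion of conditional expectations. Write $\xi_A=(\xi_i)_{i\in A}$ for $A\subseteq\{1,\ldots,k\}$, and set
\[
h_A(\xi_A):=\E\!\left[T(\xi_1,\ldots,\xi_k)\,\middle|\,\xi_A\right],
\qquad
T_A(\xi_A):=\sum_{B\subseteq A}(-1)^{|A|-|B|}\,h_B(\xi_B).
\]
Here $h_\emptyset=\E[T]$. Since $T$ is symmetric and the $\xi_i$ are i.i.d., $h_A$ depends on $A$ only through $|A|$. The same then holds for $T_A$: there are functions $T_c$ with $T_A(\xi_A)=T_{|A|}(\xi_{i_1},\ldots,\xi_{i_c})$, and each $T_c$ is symmetric in its arguments. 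All these objects are in $L^2$ because $\Var[T]<\infty$ and conditional expectation is an $L^2$ contraction.

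\textbf{Step 1: the expansion.} Möbius inversion on the Boolean lattice gives $h_A=\sum_{B\subseteq A}T_B$ for every $A$. This follows from $\sum_{B\subseteq C\subseteq A}(-1)^{|C|-|B|}=\mathbf 1\{B=A\}$. Taking $A=\{1,\ldots,k\}$, where $h_A=T$, and grouping subsets by cardinality yields exactly the expansion \eqref{eq: anova_expansion}. The constant term is $T_\emptyset=\E[T]$.

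\textbf{Step 2: the degeneracy property.} This is the key step. We show that for $A\neq\emptyset$ and any $j\in A$,
\[
\E\!\left[T_A(\xi_A)\,\middle|\,\xi_{A\setminus\{j\}}\right]=0.
\]
Pair each $B\subseteq A\setminus\{j\}$ with $B\cup\{j\}$; the two carry opposite signs. By independence and the tower property, $\E[h_{B\cup\{j\}}\mid\xi_{A\setminus\{j\}}]=\E[h_{B\cup\{j\}}\mid\xi_B]=h_B$. Indeed, $h_{B\cup\{j\}}$ is a function of $\xi_B$ and $\xi_j$ only, and $\xi_j$ is independent of $\xi_{A\setminus\{j\}}$. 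Hence each pair cancels.

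\textbf{Step 3: mean zero and uncorrelatedness.} Mean zero for $A\neq\emptyset$ follows from Step 2 by the tower property. Now take $A\neq A'$; without loss of generality there is some $j\in A\setminus A'$. Conditioning on $\xi_{(A\cup A')\setminus\{j\}}$ gives
\[
\E[T_AT_{A'}]=\E\!\left[T_{A'}\,\E\!\left[T_A\,\middle|\,\xi_{(A\cup A')\setminus\{j\}}\right]\right].
\]
The inner conditional expectation equals $\E[T_A\mid\xi_{A\setminus\{j\}}]=0$, again by independence of the coordinates outside $A$. So distinct summands in the expansion are uncorrelated, including distinct subsets of the same size.

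\textbf{Main obstacle.} The only delicate point is the conditioning identity used in Steps 2 and 3. We need that $\E[f(\xi_C)\mid\xi_D]=\E[f(\xi_C)\mid\xi_{C\cap D}]$ for independent coordinates. This is a standard consequence of Fubini's theorem applied to the product measure, so it is a bookkeeping issue rather than a genuine difficulty. The rest is the combinatorial inversion in Step 1.
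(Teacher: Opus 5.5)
Your proof is correct. The paper gives no argument of its own here; it simply invokes the classical decomposition of Efron and Stein \cite{effron_stein_1981}, and your construction is the standard proof behind that citation. The classical presentation defines the components recursively, $T_1(\xi_1)=\E[T\mid\xi_1]-\E[T]$, $T_2(\xi_1,\xi_2)=\E[T\mid\xi_1,\xi_2]-T_1(\xi_1)-T_1(\xi_2)-\E[T]$, and so on. Your inclusion--exclusion formula for $T_A$ is precisely the closed form of that recursion, obtained by M\"obius inversion.

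What your self-contained version buys is that it isolates the form of orthogonality the paper actually uses in the proof of Lemma~\ref{lem:athey_bound}. There the components are indexed by subsets of $\{1,\ldots,n\}$, and the variance formula for $U_n(T)-\mathring U_n(T)$ needs orthogonality of components indexed by \emph{distinct subsets of the same size}. Your Step~3 gives this verbatim, once the degeneracy of Step~2 is read as the intrinsic property $\int T_c(x_1,\ldots,x_{c-1},y)\,d\P(y)=0$ for almost every $(x_1,\ldots,x_{c-1})$.

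Two small points of hygiene. First, the statement tacitly assumes the $\xi_i$ are i.i.d.; this is what yields a single function $T_c$ for each order. Second, you should fix versions by defining $h_c(x_1,\ldots,x_c):=\int T(x_1,\ldots,x_c,y_{c+1},\ldots,y_k)\,d\P(y_{c+1})\cdots d\P(y_k)$ through Fubini. Then ``$h_A$ depends only on $|A|$'' and the symmetry of $T_c$ are identities between functions rather than merely almost-sure statements.
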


A key tool in understanding the sampling distribution of $U_{n,k}$ is its Hájek projection, which extracts the leading first-order contribution from the ANOVA expansion.  
For our kernel $H_k$, define
\[
    g_k(\xi)
    := \E\!\left[\, H_k(\xi,\xi_2,\ldots,\xi_k) \,\right] - W_k,
\]
which coincides with the first-order ANOVA component $T_1$ in Lemma~\ref{lem: anova_decomp}.  
The Hájek projection of the U-statistic is then
\begin{equation}\label{eq: hajek_projection}
    \U_{n,k}
    \;:=\;
    W_k
    \;+\;
    \frac{k}{n}\sum_{i=1}^n g_k(\xi_i),
\end{equation}
providing a linear approximation whose variability dominates the asymptotic behavior of $U_{n,k}$.  
The remainder $U_{n,k} - \U_{n,k}$ collects all higher-order ($\ge2$) ANOVA terms. To ensure that this linear approximation is sufficiently accurate, we have the following lemma:

\begin{lemma}[Control of higher-order terms]\label{lem:athey_bound}
Suppose the base learner $H_k(\xi_1,\ldots,\xi_k)$ is symmetric and uniformly bounded, i.e., Assumption \ref{asm: h_bounded} holds. Then the associated complete bagged estimator $U_{n,k}$ and its Hájek projection $\mathring U_{n,k}$ satisfy
\[
\E\!\left[(U_{n,k}-\mathring U_{n,k})^2\right]
\le
\frac{k^2}{n^2}\Var(H_k)
=
O\left(\frac{k^2}{n^2}\right)
\]
\end{lemma}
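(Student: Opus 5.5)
We will prove the bound via the Hoeffding (ANOVA) decomposition of Lemma~\ref{lem: anova_decomp}, applied to $H_k$. Writing $H_k = W_k + \sum_{i}T_1(\xi_i) + \sum_{i<j}T_2(\xi_i,\xi_j)+\cdots$ with $T_1=g_k$, and averaging over all $\binom nk$ subsets, the $U$-statistic becomes
\[
U_{n,k} = W_k + \sum_{c=1}^k \binom{k}{c}\binom{n}{c}^{-1}\sum_{1\le i_1<\cdots<i_c\le n} T_c(\xi_{i_1},\ldots,\xi_{i_c}).
\]
This holds because each $c$-subset of $\{1,\ldots,n\}$ lies in exactly $\binom{n-c}{k-c}$ of the $k$-subsets, and $\binom{n-c}{k-c}/\binom nk=\binom kc/\binom nc$. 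The $c=1$ term is exactly $\frac kn\sum_i g_k(\xi_i)$, so $U_{n,k}-\U_{n,k}$ is the sum over $c\ge2$.

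Next we compute the second moment using orthogonality. Distinct index sets and distinct orders give uncorrelated terms, since $\E[T_c\mid$ any proper subset of its arguments$]=0$. Setting $V_c:=\Var(T_c(\xi_1,\ldots,\xi_c))$, we get
\[
\E\bigl[(U_{n,k}-\U_{n,k})^2\bigr]=\sum_{c=2}^k \binom kc^2\binom nc^{-1} V_c .
\]
By the same orthogonality applied to $H_k$ itself, $\Var(H_k)=\sum_{c=1}^k\binom kc V_c$. It therefore suffices to show that for every $2\le c\le k\le n$,
\[
\binom kc\binom nc^{-1}\le \frac{k^2}{n^2}.
\]

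This combinatorial inequality is the main step. The plan is to write the ratio as $\prod_{j=0}^{c-1}\frac{k-j}{n-j}$. Each factor is at most $k/n$, because $(k-j)n\le (n-j)k$ is equivalent to $j(n-k)\ge0$. Since there are $c\ge2$ factors each at most $1$, the product is at most $(k/n)^2$.

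Summing then gives $\sum_{c\ge2}\binom kc^2\binom nc^{-1}V_c\le \frac{k^2}{n^2}\sum_{c\ge2}\binom kc V_c\le\frac{k^2}{n^2}\Var(H_k)$. Finally, Assumption~\ref{asm: h_bounded} gives $\Var(H_k)\le K_0^2$, which yields the $O(k^2/n^2)$ bound.

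The only real care needed is in justifying the orthogonality and the counting identity for the $U$-statistic representation. Both are standard (Efron--Stein, Hoeffding), and boundedness guarantees all the moments involved exist.
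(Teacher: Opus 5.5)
Your proposal is correct and follows essentially the same route as the paper: you expand $U_{n,k}-\mathring U_{n,k}$ into the Hoeffding components of order $c\ge 2$ with coefficients $\binom{k}{c}/\binom{n}{c}$, use orthogonality to reduce the second moment to $\sum_{c\ge 2}\frac{\binom{k}{c}}{\binom{n}{c}}\binom{k}{c}V_c$, and compare with $\Var(H_k)=\sum_{c}\binom{k}{c}V_c$. The paper bounds the coefficient by its $c=2$ value $\frac{k(k-1)}{n(n-1)}\le\frac{k^2}{n^2}$, which is the same as your factor-by-factor bound on $\prod_{j=0}^{c-1}\frac{k-j}{n-j}$.
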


This lemma shows that the higher-order ANOVA terms are automatically negligible whenever the tree-based base learner is uniformly bounded. Consequently, the Hájek projection $\mathring U_{n,k}$ approximates $U_{n,k}$ with the required precision.

\begin{proof}[Proof of Lemma \ref{lem:athey_bound}]
Following the notation of Lemma \ref{lem: anova_decomp} and the U-statistic notation introduced in Section \ref{sec: Notation}, for any symmetric statistic $T(\xi_1,\ldots,\xi_k)$ with 
$\Var\!\left[T(\xi_1,\ldots,\xi_k)\right] < \infty$, The remainder $U_{n}(T)-\mathring U_{n}(T)$ consists exactly of the ANOVA terms of order at least two. By orthogonality of the ANOVA decomposition (\cite{Vaart1998}),
\[
\E\!\left[U_{n}(T)-\mathring U_{n}(T)\right]^2
=
\Var(U_{n}(T)-\mathring U_{n}(T)).
\]
Using the same arguments as in the proof of Lemma 3.3 of \cite{athey_RF}, since $T_j$ are mutually uncorrelated, we have:
\begin{align*}
    \E\!\left[\left(U_{n}(T)-\mathring U_{n}(T)\right)^2\right]
&=
\sum_{j=2}^{k}
\left(\frac{k_j}{n_j}\right)^2
\binom{n}{j} \V(T_j)\\
&= \sum_{j=2}^{k}
\frac{k_j}{n_j}
\binom{k}{j} \V(T_j)\\
&\leq \frac{k_2}{n_2}
\sum_{j=2}^{k}
\binom{k}{j} \V(T_j)\\
&\leq \frac{k_2}{n_2}\Var(T)
\end{align*}

where:
\[
k_j = k\cdot (k-1)\cdots (k-j+1),
\]
Thus,
\[
\E\!\left[U_{n}(T)-\mathring U_{n}(T)\right]^2
\le
\frac{k(k-1)}{n(n-1)}\Var(T) \leq \frac{k^2}{n^2} \Var(T)
\]
The result follows by replacing $T$ by $H_k$ and using the uniform boundedness Assumption \ref{asm: h_bounded}
\end{proof}

\subsubsection{Variance Bounds}

\begin{lemma}\label{lemma: g_bounds}
    Under Assumption \ref{asm: sup bound on H diff}, for every $k$ and every
$1\le c\le k$,
\[
\left|
g_{k,c}(\xi_1',\dots,\xi_c')
-
\mathbb{E}\big[g_{k,c}(\xi_1,\dots,\xi_c)\big]
\right|
\le
K_1 \cdot \frac{c}{k} \qquad\P\text{-a.s.}
\]
Additionally,
\[
\pr^*\left(\left|
g_{k,c}(\xi_1',\dots,\xi_c')
-
\E^*\big[g_{k,c}(\xi_1,\dots,\xi_c)\big]
\right|\le
K_1 \cdot \frac{c}{k}\right) = 1
 \qquad\P\text{-a.s.}
\]
It follows that for any $p>0,\  1\leq c\leq k,\ \exists M_{p,c}<\infty$ such that:
\begin{enumerate}
    \item $k^p\E\left[\left|
g_{k,c}(\xi_1',\dots,\xi_c')
-
\mathbb{E}\big[g_{k,c}(\xi_1,\dots,\xi_c)\big]
\right|^p\right]<M_{p,c}$
\item  $k^p\E^*\left[\left|
g_{k,c}(\xi_1',\dots,\xi_c')
-
\Es\big[g_{k,c}(\xi_1,\dots,\xi_c)\big]
\right|^p\right]<M_{p,c}\quad \P$-a.s.
\end{enumerate}
\end{lemma}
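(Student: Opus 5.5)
The plan is to write $g_{k,c}$ as a partial average of $H_k$ over its last $k-c$ arguments. The stability bound of Assumption~\ref{asm: sup bound on H diff} then passes through this average by Jensen's inequality. I read $g_{k,c}(x_1,\dots,x_c)$ as $\E[H_k(x_1,\dots,x_c,\xi_{c+1},\dots,\xi_k)]$, possibly centered by the constant $W_k$ as in \eqref{eq: Hajek_H_k}; any additive constant cancels in $g_{k,c}-\E[g_{k,c}]$, so this choice does not matter.

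\textbf{Step 1: representation.} Let $\xi_1,\dots,\xi_c,\xi_{c+1},\dots,\xi_k$ be i.i.d.\ $\P$ and independent of $\xi_1',\dots,\xi_c'$. By Fubini and the tower property, $\E[g_{k,c}(\xi_1,\dots,\xi_c)]=\E[H_k(\xi_1,\dots,\xi_k)]$. Conditional on $\xi'_{1:c}$ we can therefore write
\[
g_{k,c}(\xi'_{1:c})-\E[g_{k,c}(\xi_{1:c})]
=\E\!\left[H_k(\xi'_{1:c},\xi_{c+1:k})-H_k(\xi_{1:c},\xi_{c+1:k})\,\middle|\,\xi'_{1:c}\right].
\]
The key point is that the same tail arguments $\xi_{c+1:k}$ appear in both kernels. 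Jensen's inequality then bounds the absolute value by the conditional expectation of the absolute difference.

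\textbf{Step 2: almost-sure bound.} Assumption~\ref{asm: sup bound on H diff} says the absolute difference is at most $K_1c/k$ on a set of full product measure in $(\xi'_{1:c},\xi_{1:c},\xi_{c+1:k})$. By Fubini, for $\P$-a.e.\ $\xi'_{1:c}$ the corresponding section has full measure in the remaining coordinates. On those sections the conditional expectation is at most $K_1c/k$, which gives the first claim.

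\textbf{Step 3: bootstrap version.} Repeat Steps 1 and 2 verbatim under $\P_n$, with $g_{k,c}$ and its mean defined through $\E^*$. Because $\P_n$ is discrete, the condition $\pr^*(\cdot)=1$ in the bootstrap half of Assumption~\ref{asm: sup bound on H diff} means every atom tuple satisfies the inequality. So the conditional bound holds for every $\xi'^*_{1:c}$ in the support, for $\P$-a.e.\ realized dataset. The only care needed is to intersect the countably many null sets over $c$ and $k$, which leaves a single $\P$-null exceptional set of datasets.

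\textbf{Step 4: moment bounds.} Raise the almost-sure bounds to the power $p$ to get $k^p|g_{k,c}(\xi'_{1:c})-\E[g_{k,c}(\xi_{1:c})]|^p\le (K_1c)^p$. Taking $M_{p,c}:=(K_1c)^p+1$ yields both moment claims, uniformly in $k$ and, in the bootstrap case, in the data.

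The main obstacle is measure-theoretic rather than analytic. One must pass from a joint almost-sure statement to a statement about conditional expectations, which is done via Fubini on sections. One must also keep the bootstrap exceptional null set independent of $c$ and $k$.
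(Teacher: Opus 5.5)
Your proposal is correct and follows essentially the same route as the paper. Like the paper, you write $g_{k,c}(\xi'_{1:c})-\E[g_{k,c}(\xi_{1:c})]$ as a conditional expectation of the kernel difference with shared tail arguments, apply Jensen and the almost-sure bound from Assumption~\ref{asm: sup bound on H diff}, repeat this under $\P_n$, and raise to the $p$-th power. Your Fubini-on-sections argument, the atom-by-atom reading of the bootstrap condition, and the explicit choice $M_{p,c}=(K_1c)^p+1$ simply spell out details the paper leaves implicit.
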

\begin{proof}
    \begin{align*}
        \left|
g_{k,c}(\xi_1',\dots,\xi_c')
-
\mathbb{E}\big[g_{k,c}(\xi_1,\dots,\xi_c)\big]
\right| & = \big|\E[H_k(\xi'_1,...,\xi'_c,\xi_{c+1},...,\xi_k) - H_k(\xi_1,...,\xi_k) \mid \xi_1',...,\xi_c']\big|\\
&\leq  \E\left[\big|H_k(\xi'_1,...,\xi'_c,\xi_{c+1},...,\xi_k) - H_k(\xi_1,...,\xi_k) \big|\mid \xi_1',...,\xi_c'\right]\\
&\leq K_1\cdot \frac{c}{k}\qquad \P\text{-a.s}.
    \end{align*}

Similarly,
\begin{align*}
   & \pr^*\left( \left|
g_{k,c}(\xi_1',\dots,\xi_c')
-
\E^*\big[g_{k,c}(\xi_1,\dots,\xi_c)\big]
\right|\leq K_1\cdot \frac ck\right)\\
\geq\  &     \pr^*\left(\E^*\left[\big|H_k(\xi'_1,...,\xi'_c,\xi_{c+1},...,\xi_k) - H_k(\xi_1,...,\xi_k) \big|\mid \xi_1',...,\xi_c'\right]\leq K_1\cdot \frac ck\right) = 1 \qquad \P\text{-a.s.}
\end{align*}
\end{proof}

\begin{lemma}\label{lem: efron-stein}[Efron--Stein inequality, Theorem~3.1 in \cite{Boucheron2013}]

Let $X_1,\ldots,X_k \sim F$ be $k$ independent random elements, and let
$f(X_1,\ldots,X_k)$ be a measurable function such that
\[
\mathbb{E}_F\bigl[f(X_1,\ldots,X_k)^2\bigr] < \infty .
\]
Then,
\[
\Var_F\!\bigl(f(X_1,\ldots,X_k)\bigr)
\;\le\;
\frac12 \sum_{i=1}^k
\mathbb{E}_F\!\left[
\bigl(
f(X_1,\ldots,X_i,\ldots,X_k)
-
f(X_1,\ldots,X_i',\ldots,X_k)
\bigr)^2
\right],
\]
where $X_i'$ is an independent copy of $X_i$ for each $i=1,\ldots,k$.
\end{lemma}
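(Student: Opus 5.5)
The plan is the standard martingale-difference proof of the Efron--Stein inequality.

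Write $Z := f(X_1,\ldots,X_k)$, which lies in $L^2$ by assumption. For $i=0,1,\ldots,k$ let $\E_i[\cdot] := \E_F[\cdot\mid X_1,\ldots,X_i]$, so that $\E_0 Z = \E_F Z$ and $\E_k Z = Z$. Set $\Delta_i := \E_i Z - \E_{i-1}Z$. Then $Z - \E_F Z = \sum_{i=1}^k \Delta_i$ telescopes.

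The $\Delta_i$ are martingale differences with respect to the filtration $\sigma(X_1,\ldots,X_i)$. For $i<j$, $\Delta_i$ is $\sigma(X_1,\ldots,X_i)$-measurable and $\E_{j-1}\Delta_j = 0$, so the tower property gives $\E[\Delta_i\Delta_j]=0$. Hence
\[
\Var_F(Z) = \sum_{i=1}^k \E_F[\Delta_i^2].
\]

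The key step is to bound each $\E_F[\Delta_i^2]$ by a ``leave-one-coordinate-out'' variance. Let $\E^{(i)}$ denote integration over $X_i$ alone, with all other coordinates held fixed:
\[
\E^{(i)} Z = \int f(X_1,\ldots,x,\ldots,X_k)\,F(dx).
\]
Independence of the coordinates, together with Fubini, gives the identity $\E_{i-1} Z = \E_i\bigl[\E^{(i)} Z\bigr]$. Indeed, both sides equal the integral of $f$ over $X_i,\ldots,X_k$ with $X_1,\ldots,X_{i-1}$ fixed. Verifying this identity carefully, with the measurability and Fubini bookkeeping, is the only real obstacle; everything else is algebra. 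It follows that
\[
\Delta_i = \E_i\bigl[Z - \E^{(i)}Z\bigr].
\]
Conditional Jensen then yields
\[
\E_F[\Delta_i^2] \le \E_F\bigl[(Z-\E^{(i)}Z)^2\bigr] = \E_F\bigl[\Var^{(i)}(Z)\bigr],
\]
where $\Var^{(i)}$ is the variance in $X_i$ alone, conditional on the other coordinates.

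Finally, I will use the elementary identity that for i.i.d.\ $Y,Y'$ in $L^2$ one has $\Var(Y) = \tfrac12\E[(Y-Y')^2]$. Apply it conditionally on $(X_j)_{j\ne i}$ with $Y = f(\ldots,X_i,\ldots)$ and $Y' = f(\ldots,X_i',\ldots)$, where $X_i'$ is an independent copy of $X_i$. This gives
\[
\Var^{(i)}(Z) = \tfrac12\,\E^{(i)}\bigl[(f(\ldots,X_i,\ldots)-f(\ldots,X_i',\ldots))^2\bigr].
\]
Taking expectations and summing over $i$ gives the claimed bound
\[
\Var_F(Z) \le \tfrac12\sum_{i=1}^k \E_F\bigl[(f(X_1,\ldots,X_i,\ldots,X_k)-f(X_1,\ldots,X_i',\ldots,X_k))^2\bigr].
\]
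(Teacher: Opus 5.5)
Your proposal is correct and is essentially the standard martingale-difference proof of Theorem~3.1 in the cited Boucheron--Lugosi--Massart reference: the telescoping $Z-\mathbb{E}Z=\sum_i\Delta_i$, orthogonality of the increments, the Fubini identity giving $\Delta_i=\mathbb{E}_i[Z-\mathbb{E}^{(i)}Z]$, conditional Jensen, and $\mathrm{Var}(Y)=\tfrac12\mathbb{E}[(Y-Y')^2]$ applied conditionally on the other coordinates. The paper gives no proof of its own and simply imports the lemma from that source, so your argument coincides with the intended one.
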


As a consequence, we have a bound on the variance of the base learners:
\begin{lemma}\label{lemma: var_bound_Tk}
    Suppose $\xi_1,..,\xi_k \sim F$. Consider a measurable function $T_k$ satisfying the following assumptions:
    \begin{enumerate}
        \item $T_k$ is bounded a.s.
        \item $ |T_k(\xi_1,\xi_2,...,\xi_k) - T_k(\xi_1',\xi_2,...,\xi_k)| \leq c/k\quad  a.s$ for some constant c.
    \end{enumerate}
    Then, $\V_F(T_k) \leq  c^2/k$.
\end{lemma}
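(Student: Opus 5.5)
The bound $\V_F(T_k)\le c^2/k$ should come from a single application of the Efron--Stein inequality (Lemma~\ref{lem: efron-stein}) with $f=T_k$ and $X_i=\xi_i$. Since $T_k$ is bounded a.s., $\E_F[T_k^2]<\infty$, so the lemma applies and gives
\[
\V_F(T_k)\le \frac12\sum_{i=1}^k \E_F\Bigl[\bigl(T_k(\xi_1,\ldots,\xi_i,\ldots,\xi_k)-T_k(\xi_1,\ldots,\xi_i',\ldots,\xi_k)\bigr)^2\Bigr].
\]

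It remains to bound each summand by $(c/k)^2$. Hypothesis 2 is stated only for replacement of the first coordinate. I would read it, as in Assumption~\ref{asm: sup bound on H diff} with $c=1$, as holding almost surely over independent draws $\xi_1',\xi_1,\ldots,\xi_k\sim F$. For a general coordinate $i$, I would use the symmetry of the kernel, which is standing throughout the paper: permuting the arguments moves coordinate $i$ to position $1$ without changing the value of $T_k$. The permuted vector has the same joint law, because the $\xi_j$ are i.i.d.\ and $\xi_i'$ is an independent copy. The a.s.\ bound therefore transfers to every coordinate.

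If symmetry were not assumed, I would instead read hypothesis 2 as holding for every coordinate; the rest of the argument is unchanged. Either way, each squared difference is at most $c^2/k^2$ almost surely, hence in expectation. Summing over the $k$ coordinates gives $\V_F(T_k)\le \tfrac12\cdot k\cdot c^2/k^2=c^2/(2k)\le c^2/k$.

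The only real subtlety is this null-set and coordinate issue: making sure the a.s.\ difference bound holds under the product law that appears in Efron--Stein for every $i$, not just $i=1$. The same argument applies verbatim with $F=\P_n$, using the bootstrap half of Assumption~\ref{asm: sup bound on H diff}, which is presumably how the lemma is used later.
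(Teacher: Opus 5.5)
Your proof is correct and is essentially the paper's argument: the paper gives no separate proof and presents the lemma simply as a consequence of the Efron--Stein inequality (Lemma~\ref{lem: efron-stein}), where each of the $k$ one-coordinate squared differences is bounded by $c^2/k^2$, giving $c^2/(2k)\le c^2/k$. Your step that uses the standing symmetry of $T_k$ (or reads hypothesis~2 as holding for every coordinate) to pass from the first coordinate to all coordinates fills in a detail the paper leaves implicit. That step is genuinely needed: for a non-symmetric kernel such as $T_k=\xi_2$ with $c=0$, the hypotheses as literally stated hold but the conclusion fails.
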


\subsection{Proofs of Section \ref{sec:main results}}

\subsubsection{Proofs of Section \ref{sec: main_results_complete_U}}
\begin{proof}[Proof of Theorem \ref{thm: clt_bagging_modified}]
\[\sqrt n (U_{n,k}-W_k) = \sqrt n (U_{n,k}-\U_{n,k})  + \sqrt n (\U_{n,k}-W_k) \]

    By Assumption \ref{asm: rate_of_k} and Lemma \ref{lem:athey_bound}, we get:
    \[\sqrt n k^{\epsilon}(U_{n,k} - \U_{n,k})\xrightarrow[]{p} 0\] 

     Thus,
        $$\sqrt n(U_{n,k} - W_k) = \sqrt n \frac{k}{n}\sum_{i=1}^n g_k(\xi_i) + o_p(1/k^{\epsilon})$$
        Define
\[
Z_{n,k} :=
\begin{cases}
    \dfrac{\sum_{i=1}^n g_k(\xi_i)}{\sqrt{n \, \Var(g_k(\xi))}} & \text{if } \Var(g_k(\xi)) > 0, \\[1.2em]
    \text{an arbitrary standard normal variable} & \text{if } \Var(g_k(\xi)) = 0.
\end{cases}
\]
which gives: 
$$\sqrt n(U_{n,k} - W_k) = k\sqrt{V(g_k(\xi))}Z_{n,k} +o(1/k^{\epsilon})$$
For the convergence of $Z_{n,k}$, along any sequence where $k^{2+\epsilon} \V(g_k(\xi))>1$, we verify the Lyapounov's condition:
\begin{align*}
\frac{n \, \E\!\left[ \lvert g_k(\xi) \rvert^{2+\epsilon} \right]}
     {\big( n \, \Var(g_k(\xi)) \big)^{1+\epsilon/2}} & \leq \frac{ M_{1,1}/k^{2+\epsilon}}
     {n^{\epsilon/2
     }\big(\Var(g_k(\xi)) \big)^{1+\epsilon/2}} \quad [\text{where $M_{1,1}$ is a constant as defined in Lemma \ref{lemma: g_bounds}}]\\
     &= \frac{M_{1,1}}
     {n^{\epsilon/2
     }\cdot k^{2+\epsilon}\big(k^{2+\epsilon}\Var(g_k(\xi)) \big)^{1+\epsilon/2} {(k^{2+\epsilon}})^{-1-\epsilon/2}} \\
     &\leq M_{1,1} \left(\frac{k^{2+\epsilon}}{n}\right)^{{\epsilon/2}} \to 0
\end{align*}

\end{proof}

 Before proceeding to the main result of Theorem \ref{thm: growing_k_main_theorem}, we first state some preliminary results:

\begin{lemma}\label{lemma: var-conv-helper}
 Consider the subsequence of $k$ for which $\V(g_k(\xi))k^{2+\epsilon}>1$. Along any such sequence, under Assumptions \ref{asm: h_bounded} and \ref{asm: rate_of_k}, we have:
\begin{enumerate}
    \item $k^{2+\epsilon}(\zeta^*_{k,1} - \zeta_{k,1}) \xrightarrow[]{p} 0$
       \item $\frac{\V^*(U_{n,k}^*)}{\V^*(\Uh)}\xrightarrow[]{p}1$
        \item $\frac{\V(U_{n,k})}{\V(\U_{n,k})}\xrightarrow[]{}1$ 
\end{enumerate}
\end{lemma}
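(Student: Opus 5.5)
We treat the three claims separately. Parts 2 and 3 are short consequences of Lemma~\ref{lem:athey_bound} once part 1 supplies a lower bound on $\zeta^*_{k,1}$. The real work is part 1.

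\textbf{Notation.} Here $\zeta_{k,1}=\V(g_k(\xi))$. We take $\zeta^*_{k,1}=\V^*(g^*_k(\xi^*))$, where
\[
g^*_k(x):=\E^*[H_k(x,\xi_2^*,\ldots,\xi_k^*)]-\E^*[H_k(\xi^*_1,\ldots,\xi^*_k)]
\]
is the bootstrap Hájek component. Correspondingly, $\Uh=\E^*[H_k]+\frac kn\sum_i g_k^*(\xi_i^*)$.

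\textbf{Part 3 (no bootstrap).} By construction $\V(\U_{n,k})=k^2\zeta_{k,1}/n$, which along the subsequence exceeds $k^{-\epsilon}/n$. Lemma~\ref{lem:athey_bound}, combined with Lemma~\ref{lemma: var_bound_Tk} under Assumption~\ref{asm: sup bound on H diff} (so $\V(H_k)\le K_1^2/k$), gives $\E(U_{n,k}-\U_{n,k})^2\le K_1^2k/n^2$. The ratio of the remainder variance to $\V(\U_{n,k})$ is therefore $O(k^{1+\epsilon}/n)\to0$ by Assumption~\ref{asm: rate_of_k}. Since $U-\U$ is uncorrelated with $\U$ (ANOVA orthogonality, or Cauchy--Schwarz), the variance ratio tends to $1$.

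\textbf{Part 2 (bootstrap).} We repeat the same argument conditionally on $\P_n$. The bootstrap sample is i.i.d. from $\P_n$, so Lemma~\ref{lem:athey_bound} applies verbatim under $\pr^*$. The bootstrap halves of Assumptions~\ref{asm: h_bounded} and~\ref{asm: sup bound on H diff} give $\V^*(H_k)\le K_1^2/k$ almost surely. We have $\V^*(\Uh)=k^2\zeta^*_{k,1}/n$, and part 1 yields $k^{2+\epsilon}\zeta^*_{k,1}\ge k^{2+\epsilon}\zeta_{k,1}-o_p(1)>1-o_p(1)$. Hence the ratio is $1+O_p(k^{1+\epsilon}/n)$.

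\textbf{Part 1: decomposition.} Write $g^*_k(x)=g_k(x)+e_n(x)$. Then
\[
\zeta^*_{k,1}=\frac1n\sum_i\bigl(g_k(\xi_i)+e_n(\xi_i)\bigr)^2-\Bigl(\frac1n\sum_i\bigl(g_k(\xi_i)+e_n(\xi_i)\bigr)\Bigr)^2 .
\]
This splits into three pieces.
\begin{enumerate}[(a)]
\item The empirical variance of the $g_k(\xi_i)$ minus $\zeta_{k,1}$. By Lemma~\ref{lemma: g_bounds}, $|g_k|\le K_1/k$, so this piece has standard deviation $O(n^{-1/2}k^{-2})$. Multiplied by $k^{2+\epsilon}$ it is $O_p(k^\epsilon/\sqrt n)\to0$.
\item The cross term $2\cdot\widehat{\Cov}(g_k,e_n)$.
\item The empirical variance of $e_n$.
\end{enumerate}
By Cauchy--Schwarz it suffices to show $k^{2+\epsilon}\cdot k^{-1}\cdot\bigl(\widehat{\V}(e_n)\bigr)^{1/2}\to0$ in probability, together with $k^{2+\epsilon}\widehat{\V}(e_n)\to0$ in probability.

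\textbf{Part 1: controlling $e_n$.} Only differences $e_n(x)-e_n(x')$ matter, because additive constants cancel under centering. The quantity $g^*_k(x)-g^*_k(x')$ is the $(k-1)$-order V-statistic of the kernel $H_k(x,\cdot)-H_k(x',\cdot)$. That kernel is bounded by $K_1/k$ by Assumption~\ref{asm: sup bound on H diff}.
\begin{itemize}
\item Theorem~\ref{thm: diff_U_V} bounds the V-versus-U gap by $(K_1/k)\cdot O(k^2/n)=O(k/n)$. Cross term: $k^{2+\epsilon}\cdot O(k^{-1})\cdot O(k/n)=O(k^{2+\epsilon}/n)\to0$. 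Squared term: $k^{2+\epsilon}\cdot O(k^2/n^2)=O(k^{4+\epsilon}/n^2)\to0$. Both vanish by Assumption~\ref{asm: rate_of_k}.
\item The U-statistic of that kernel fluctuates around $g_k(x)-g_k(x')$. By Theorem~\ref{thm: CLT_U_stat_modified} and Lemma~\ref{lemma: var_bound_Tk}, its variance is $O\bigl((k/n)\cdot k^{-1}\bigr)=O(1/n)$ for fixed $x,x'$. This looks too large if used naively: $k^{1+\epsilon}/\sqrt n\to0$ is still fine by Assumption~\ref{asm: rate_of_k}, but the squared term gives $k^{2+\epsilon}/n\to0$, which also holds.
\end{itemize}

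\textbf{Main obstacle.} The delicate point is that the points $x=\xi_i$ at which we evaluate are themselves data. The U-statistic defining $g^*_k(\xi_i)$ therefore involves $\xi_i$ among its arguments, so the fixed-$(x,x')$ variance bound does not apply directly. I would handle this by first deleting the index $i$ from the inner average. This costs $O(k/n)\cdot K_1/k$, which is again negligible. After deletion, the inner average is independent of $\xi_i$, and the bound can be applied conditionally on $\xi_i$. One then takes expectations of $\widehat{\V}(e_n)$ and concludes with Markov's inequality.
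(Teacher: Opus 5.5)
Your proof is correct, but for Parts 1 and 2 it takes a different route from the paper.

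\textbf{Part 1.} The paper never splits $g^*_k$ pointwise. It writes $\zeta^*_{k,1}=\E^*[H_k(\xi_1^*,\xi^*_{2:k})H_k(\xi_1^*,\xi^*_{k+1:2k-1})]-V_{n,k}^2$. The first term is a V-statistic of order $2k-1$ of a symmetrized product kernel $T^{cov}_k$. The paper controls it by Theorem~\ref{thm: diff_U_V} and Theorem~\ref{thm: CLT_U_stat_modified}, plus a separate claim that $\V(T^{cov}_k)=O(1/k)$, proved by coupling with a product of two independent kernels. The term $V_{n,k}^2-W_k^2$ is handled through the CLT of Theorem~\ref{thm: clt_bagging_modified}.

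Because everything there is a genuine U/V-statistic of the data, the paper never meets your evaluation-point dependence problem. The price is that it must estimate $O(1)$ quantities to precision $o(k^{-2-\epsilon})$, which is where it uses the full rate $k^{4+2\epsilon}/n\to0$.

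Your decomposition works with centred objects of size $O(1/k)$, namely the difference kernel $H_k(x,\cdot)-H_k(x',\cdot)$ bounded by $K_1/k$. So it only needs $k^{2+2\epsilon}/n\to0$, at the cost of the leave-out step. Two small points on that step:
\begin{itemize}
\item When bounding $e_n(\xi_i)-e_n(\xi_j)$, you should delete both $i$ and $j$ and condition on $(\xi_i,\xi_j)$, not just delete $i$.
\item The bound on terms with repeated indices comes from the bootstrap half of Assumption~\ref{asm: sup bound on H diff}.
\end{itemize}
Both costs are $O(1/n)$, as you say. You could also use $|H_k(x,\cdot)-H_k(x',\cdot)|\le K_1/k$ directly, which gives conditional variance $O(1/(nk))$ instead of $O(1/n)$.

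\textbf{Part 2.} The paper expands $\V^*(U^*_{n,k})$ via the Hoeffding variance formula and uses the crude bound $\zeta^*_{k,k}\le K_0^2$, which needs $k^{3+\epsilon}/n\to0$. Your argument applies Lemma~\ref{lem:athey_bound} under $\P_n$, with Hájek orthogonality and $\V^*(H_k)\lesssim 1/k$. It is the same decomposition idea in a slightly sharper form.

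\textbf{Part 3.} The paper does not write this part out; your argument is the natural non-bootstrap analogue.

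Finally, both your proof and the paper's actually use Assumption~\ref{asm: sup bound on H diff}, through Lemmas~\ref{lemma: g_bounds} and~\ref{lemma: var_bound_Tk}. This is so even though the statement lists only Assumptions~\ref{asm: h_bounded} and~\ref{asm: rate_of_k}.
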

\begin{proof}
\begin{enumerate}
    \item $k^{2+\epsilon}(\zeta^*_{k,1} - \zeta_{k,1}) \xrightarrow[]{p} 0$
  
Consider $\xi_1^*,...,\xi_{2k}^*$, i.i.d from $P_n$, and $\xi_1,...,\xi_{2k}$, i.i.d from $P$. Then,
    \begin{align*}
        \zeta^*_{k,1} &= cov^*(H_k(\xi_1^*,\xi^*_{2:k}), H_k(\xi_1^*,\xi^*_{k+1:2k-1}))\\
        &= \E^*\left[H_k(\xi_1^*,\xi^*_{2:k})\cdot H_k(\xi_1^*,\xi^*_{k+1:2k-1})\right] - \left(W_k^*\right)^2
    \end{align*}
    Similarly,
        \begin{align*}
        \zeta_{k,1} &= cov(H_k(\xi_1,\xi_{2:k}), H_k(\xi_1,\xi_{k+1:2k-1}))\\
        &= \E\left[H_k(\xi_1,\xi_{2:k})\cdot H_k(\xi_1,\xi_{k+1:2k-1})\right] - \left(W_k\right)^2
    \end{align*}
Now, we have:
\begin{align*}
    k^{2+\epsilon}\left|(W_k^*)^2-(W_k)^2\right| &= k^{2+\epsilon}(W_k^*-W_k)(W_k^*+W_k)\\
    & \leq 2K_0 k^{2+\epsilon}|(V_{n,k}-W_k)|\qquad \P\text{-a.s}.\qquad [\because \text{ Assumption } \ref{asm: h_bounded}]
\end{align*}
Now, using Theorems \ref{thm: diff_U_V} and \ref{thm: clt_bagging_modified}, since we are considering a sequence along which $k^{2+\epsilon}\V(g_k(\xi))>1$, we have:
\begin{align*}
    \sqrt n |V_{n,k}-W_k| &\leq \sqrt n \left|V_{n,k}-U_{n,k}\right| + \sqrt n \left|U_{n,k}-W_k\right|\\
    &\leq \sqrt n O(k^2/n)+\left|k\sqrt{\V(g_k(\xi))}Z_{n,k}+o_p(1/k^{\epsilon})\right|\\
    & = o(1) +O_p(1)\qquad [\because \text{By Lemma \ref{lemma: g_bounds}, $k^2\V(g_k(\xi))$ is bounded}]\\
    &= O_p(1)
\end{align*}
Thus, by Assumption \ref{asm: rate_of_k},
\begin{align*}
     k^{2+\epsilon}\left|(W_k^*)^2-(W_k)^2\right| &\leq \frac{2K_0k^{2+\epsilon}}{\sqrt n}\cdot O_p(1) \xrightarrow[]{p} 0
\end{align*}

We next show that:
\begin{align*}
    & k^{2+\epsilon}\left[\E^*\left[H_k(\xi_1^*,\xi^*_{2},...,\xi^*_{k})\cdot H_k(\xi_1^*,\xi^*_{k+1},...,\xi^*_{2k-1})\right] - \E\left[H_k(\xi_1,\xi_{2},...,\xi_{k})\cdot H_k(\xi_1,\xi_{k+1},...,\xi_{2k-1})\right]\right]\\
    = &\  o_p(1)
\end{align*}

Define the following quantities:
\begin{align}
 t(\xi_1,...,\xi_{2k}) &:= H_k(\xi_1,...,\xi_k)H_k(\xi_1,\xi_{k+1},...,\xi_{2k-1})
\end{align}
Let $T^{cov}_k$ be the symmetric version of the above. Observe that $$V_n(T^{cov}_k): = \E^*\left[H_k(\xi_1^*,\xi^*_{2},...,\xi^*_{k})\cdot H_k(\xi_1^*,\xi^*_{k+1},...,\xi^*_{2k-1})\right]$$ is the $V$ statistic of the data $(\xi_1,...,\xi_n)$ with the kernel $T$. Let $U^{cov}_{n,k}$ be the corresponding $U$ statistic with kernel $T^{cov}_k$, and
$$\theta^{cov}_{n,k}:=\E\left[H_k(\xi_1,\xi_{2},...,\xi_{k})\cdot H_k(\xi_1,\xi_{k+1},...,\xi_{2k-1})\right]$$
Thus, it is enough to show:
\begin{equation}\label{eqn: U_V cov convergence}
    k^{2+\epsilon}|V_n(T^{cov}_k) - \theta^{cov}_{n,k}| \leq  k^{2+\epsilon}|U_n(T^{cov}_k) - \theta^{cov}_{n,k}| + k^{2+\epsilon}|V_n(T^{cov}_k) - U_n(T^{cov}_k)| = o_p(1)
\end{equation}
From Assumption \ref{asm: h_bounded} and Theorem \ref{thm: diff_U_V}, we have:
\begin{equation}\label{eqn: var_conv_eq1}
k^{2+\epsilon}|V_n(T^{cov}_k) - U_n(T^{cov}_k)| = o_p(1)    
\end{equation}
By Assumption \ref{asm: h_bounded} and Theorem \ref{thm: CLT_U_stat_modified}, we have:
\begin{align*}
    k^{4+2\epsilon}\E\left[U_n(T^{cov}_k) - \theta^{cov}_{n,k}\right]^2 & =  O\left(\frac {k^{5+2\epsilon}}n \cdot \V(T^{cov}_{k})\right)
\end{align*}
Now, we claim that:
\begin{claim}\label{claim: Gamma_kk bound}
    $\V(T^{cov}_{k}) = O(1/k)$
\end{claim}
Thus, $$O\left(\frac {k^{5+2\epsilon}}n \cdot \V(T^{cov}_{k})\right) = O\left(\frac {k^{4+2\epsilon}}n\right) = o(1)$$
which ensures \begin{equation}\label{eqn: var_conv_eq2}
    k^{2+\epsilon}|U_n(T^{cov}_k) - \theta^{cov}_{n,k}| = o_p(1)
\end{equation}
Finally, from \ref{eqn: var_conv_eq1} and \ref{eqn: var_conv_eq2} we get \ref{eqn: U_V cov convergence} which completes the proof. It remains to show Claim \ref{claim: Gamma_kk bound}.
\begin{proof}[Proof of Claim \ref{claim: Gamma_kk bound}]
Define:
\begin{equation}
    \tilde t(\xi_1,...,\xi_{2k}) := H_k(\xi_1,...,\xi_k)H_k(\xi_{k+1},\xi_{k+2}...,\xi_{2k}) 
\end{equation}
   Similarly, let $\tilde T^{cov}_k$ be the symmetric version of the above.

   Now, by Assumptions \ref{asm: h_bounded} and \ref{asm: sup bound on H diff}, for some constant $C >0$, we have:
       \begin{align*}
        &|t - \tilde t| \leq |H_k(\xi_1,...,\xi_k)|\cdot |H_k(\xi_1,\xi_{k+1}...,\xi_{2k-1}) - H_k(\xi_{2k},\xi_{k+1}...,\xi_{2k-1})|  \leq C/k\\
        \implies\ & |T^{cov}_k - \tilde T^{cov}_k| \leq  C/k\\
        \implies\ & \V(T^{cov}_k - \tilde T^{cov}_k) = o(1/k)
    \end{align*}
Let $\sigma^2_k = \V(H_k(\xi_1,...,\xi_k))$. Then,
\begin{align*}
    \V(\tilde T^{cov}_k) \leq V(\tilde t) &= \V[H_k(\xi_1,...,\xi_{k})H_k(\xi_{k+1},...,\xi_{2k})]\\
    & = \E[H_k(\xi_1,...,\xi_{k})^2]^2 - \E[H_k(\xi_1,...,\xi_{k})]^4\\
    & = (\sigma_k^2+W_k^2)^2 - (W_k^2)^2\\
    & = \sigma_k^2(\sigma_k^2+2W_k^2)  = O(\sigma_k^2) = O(1/k)
\end{align*}
where the last equality holds from Lemma \ref{lemma: var_bound_Tk} using Assumptions \ref{asm: h_bounded} and \ref{asm: sup bound on H diff}.

    Finally,
    \begin{align*}
        \V(T^{cov}_k) &\leq \V(\tilde T^{cov}_k) + \V(\tilde T^{cov}_k - T^{cov}_k) +2\sqrt{\V(\tilde T^{cov}_k) \cdot \V(\tilde T^{cov}_k - T^{cov}_k)}\\
        &= O(1/k) +o(1/k)
    \end{align*}
\end{proof}
This completes the proof of Lemma \ref{lemma: var-conv-helper}.1.

\item $\frac{\V^*(U_{n,k}^*)}{\V^*(\Uh)}\xrightarrow[]{p}1$

By Lemma \ref{lemma: var-conv-helper}.1, $k^{2+\epsilon}(\zeta^*_{k,1}-\zeta_{k,1}) \xrightarrow[]{p} 0$ with $k^{2+\epsilon}\zeta_{k,1} = k^{2+\epsilon}\V(g_k(\xi))>1$. Thus, $$\pr(k^{2+\epsilon}\zeta^*_{k,1}>0.5)\to 1$$
\begin{align*}
    \pr\left(\left|\frac{\V^*(U_{n,k}^*)}{\V^*(\Uh)}-1\right|>\delta\right) & \leq   \pr\left(\left|\frac{\V^*(U_{n,k}^*)}{\V^*(\Uh)}-1\right|>\delta\ \mid k^{2+\epsilon}\zeta^*_{k,1}>0.5\ \right) + \pr\left( k^{2+\epsilon}\zeta^*_{k,1}\leq 0.5\ \right)
\end{align*}
where $\pr\left( k^{2+\epsilon}\zeta^*_{k,1}\leq 0.5\ \right) \to 0$. Thus, it is enough to show:
\begin{equation}\label{eq: conditional_zeta^* geq 1}
    \pr\left(\left|\frac{\V^*(U_{n,k}^*)}{\V^*(\Uh)}-1\right|>\delta\ \mid k^{2+\epsilon}\zeta^*_{k,1}>0.5\ \right) \to 0
\end{equation}

Again, from Theorem 12.3 of \cite{Vaart1998}, we can write
\begin{equation}\label{eq: variance star U_n,k}
\V^*(U_{n,k}^*) = \sum_{c=1}^k \binom{n}{k}^{-1} \binom{k}{c} \binom{n-k}{k-c} \, \zeta^*_{k,c} 
\end{equation}
and 
\begin{equation}
\V^*(\Uh) = \frac{k^2}{n}\zeta^*_{k,1}
\end{equation}

Now, conditional on $ k^{2+\epsilon}\zeta^*_{k,1}> 0.5$, for $c = 1$ in \ref{eq: variance star U_n,k},
\[
\frac{\big(k / \binom{n}{k}\big) \binom{n-k}{k-1} \zeta^*_{k,1}}{(k^2 / n)\zeta^*_{k,1}}
= \frac{(n-k)!(n-k)!}{(n-1)!(n-2k+1)!} \to 1 
\]
To continue, 
$$\frac{\sum_{c=2}^k \binom{n}{k}^{-1} \binom{k}{c} \binom{n-k}{k-c} \, \zeta^*_{k,c}}{(k^2/n)\,\zeta^*_{k,1}} \leq \frac{k\zeta^*_{k,k}}{n\,\zeta^*_{k,1}} \cdot \frac{1 - \epsilon_n^{\,k-1}}{1 - \epsilon_n}\cdot \frac{n}{n-k+1}$$
where
\[
\epsilon_n = \frac{k^2}{n-k+1} \to 0 \quad \text{ by Assumption \ref{asm: rate_of_k}}
\]
Again, with $ k^{2+\epsilon}\zeta^*_{k,1}> 0.5$, using $k^{3+\epsilon}/n \to 0$ from Assumption \ref{asm: rate_of_k} we have:
\begin{align*}
    &\frac{k\zeta^*_{k,k}}{n\,\zeta^*_{k,1}} \cdot \frac{1 - \epsilon_n^{\,k-1}}{1 - \epsilon_n}\cdot \frac{n}{n-k+1}\\
    =\ & \frac{\zeta^*_{k,k}}{k^{2+\epsilon}\,\zeta^*_{k,1}} \cdot \frac{1 - \epsilon_n^{\,k-1}}{1 - \epsilon_n}\cdot \frac{n}{n-k+1}\frac{k^{3+\epsilon}}{n} = o(\zeta^*_{k,k}) \xrightarrow[]{a.s} 0\numberthis \label{eq: star_last_Unk}
\end{align*}
Where the last limit holds since $\zeta^*_{k,k} = \V^*(H_k)\leq K_0^2\ \P$-a.s.  by Assumption \ref{asm: h_bounded}.

\end{enumerate}  

\end{proof}

We next move on to prove Theorem \ref{thm: growing_k_main_theorem}:
\begin{proof}[Proof of Theorem \ref{thm: growing_k_main_theorem}] 
For any subsequence along which $k^{2+\epsilon}\V(g)k(\xi))>1$, we will first show that:
    \begin{equation} \label{eqn: U_star to V conv}
        \left|\pr^*\left(\frac{U_{n,k}^* - V_{n,k}}{\sigma^*_{n,k}}\leq x\right) - \Phi(x)\right| \xrightarrow[]{p} 0
    \end{equation}
     Conditional on the data, the Hajek projection of $U_{n,k}^*$ is given by $\Uh$, where
    \begin{equation}
        \Uh - V_{n,k} = \sum_{i = 1}^n \E^*[U_{n,k}^* - V_{n,k} \mid \xi_i^*] = \frac 1n \sum_{i=1}^n k[g_k^*(\xi_i^*) -V_{n,k}] 
    \end{equation}
    Now, conditional on the data, $k[g_k^*(\xi_i^*) -V_{n,k}]$ are i.i.d, with:
    \begin{align*}
    \E^*[k(g_k^*(\xi_i^*) -V_{n,k})] &= 0\\
        \E^*[k^2[g_k^*(\xi_i^*) -V_{n,k}]^2] &= k^2\zeta^*_{k,1} \\
        \E^*[k^3|g_k^*(\xi_i^*) -V_{n,k}|^3]&=k^3{\rho^*_{n,k}}
    \end{align*}
        Define:
    \begin{equation}
        {\sigma^*}^2_{n,k} := \frac{k^2\zeta^*_{k,1}}{n} = \V^*(\Uh)
    \end{equation}
        By Lemma \ref{lemma: var-conv-helper}, $k^{2+\epsilon} (\zeta^*_{k,1}- \zeta_{k,1}) \xrightarrow[]{p} 0$. Now, we are considering the subsequence along which: 
\begin{equation}\label{eqn: var_threshold}
    k^{2+\epsilon}\zeta_{k,1} = k^{2+\epsilon}\V(g_k(\xi))>1
\end{equation} 
Thus, along that subsequence we also have:
\begin{equation}\label{eqn: zeta_ratio_to_1}
    \frac{k^{2+\epsilon}\zeta^*_{k,1}}{k^{2+\epsilon}\zeta_{k,1}}  \xrightarrow[]{p}  1 \implies \frac{k^{2}\zeta^*_{k,1}}{k^{2}\zeta_{k,1}} \xrightarrow[]{p} 1 \implies \frac{n{\sigma^*}^2_{n,k}}{k^2\V(g_k(\xi))}  \xrightarrow[]{p} 1
\end{equation}
which ensures (\ref{eqn: growing_k_main_theorem: variance limit}). \text{From Lemma }(\ref{lemma: g_bounds}), we have $k^2\zeta^*_{k,1} <M_{2,1}$ and $k^3\rho^*_{n,k}<M_{3,1}$ $\P\text{-a.s}$.  Thus, by Barry-Esseen theorem:
\begin{align*}
&\left| \pr^*\!\left( 
   \frac{\Uh - V_{n,k}}{\sigma^*_{n,k}} \;\leq\; x 
   \right) - \Phi(x) \right|\\
=\ &\left| \pr^*\!\left( 
   \frac{\sqrt{n}(\Uh - V_{n,k})}{\sqrt{k^2 \zeta_{k,1}^*}} \;\leq\; x 
   \right) - \Phi(x) \right|\\
\leq\ &
   \frac{C k^3 \rho^{*}_{n,k}}
        {\sqrt{n}\,\big(k^2 \zeta^*_{k,1}\big)^{3/2}} \leq \frac{C M_{3,1}}
        {\sqrt{n}\,\big(k^2 \zeta^*_{k,1}\big)^{3/2}}\xrightarrow[]{p} 0\numberthis\label{eqn: growing_k_berry-esseen}
\end{align*}

The final convergence (\ref{eqn: growing_k_berry-esseen}) holds  because, by using \ref{eqn: zeta_ratio_to_1}, \ref{eqn: var_threshold} and Assumption \ref{asm: rate_of_k}, we have:
\begin{equation}
    \sqrt n \left(k^2\zeta^*_{k,1}\right)^{3/2} =  \left(k^{2+\epsilon}\zeta_{k,1}\right)^{3/2} \cdot \frac{\left(k^{2}\zeta^*_{k,1}\right)^{3/2}}{\left(k^{2}\zeta_{k,1}\right)^{3/2}}\cdot\frac{\sqrt n}{k^{3\epsilon/2}} \to \infty
\end{equation}
We next show that:
\begin{equation}\label{eq: growing_k_main U minus hajek U converjence}
    \pr^*\!\left( 
   \left|\frac{U_{n,k}^* - \Uh }{\sigma^*_{n,k}}\right| > \delta 
   \right) \xrightarrow[]{p} 0
\end{equation}
It is enough to show:

\begin{equation}
    \E^*\!\left( 
   \frac{U_{n,k}^* - \Uh }{\sigma^*_{n,k}}\right)^2 \xrightarrow[]{p} 0
\end{equation}
Here, by construction ${\sigma^*}^2_{n,k} = \V^*(\Uh)$. Let $\tilde{\sigma^*}^2_{n,k} = \V^*(U^*_{n,k})$. Since $\tilde\sigma^*_{n,k}$ and $\sigma^*_{n,k}$ are constants given the data, we proceed as follows:
\begin{align*}
     \E^*\left(\frac{U^*_{n,k}-\Uh}{\sigma^*_{n,k}}\right)^2 & =  \E^*\left(\frac{U^*_{n,k}-V_{n,k}}{\tilde\sigma^*_{n,k}}\cdot 
     \frac{\tilde\sigma^*_{n,k}}{\sigma^*_{n,k}} - \frac{\Uh-V_{n,k}}{\sigma^*_{n,k}}\right)^2\\
     &= \left(    \frac{\tilde\sigma^*_{n,k}}{\sigma^*_{n,k}} \right)^2\E^*\left(\frac{U^*_{n,k}-V_{n,k}}{\tilde\sigma^*_{n,k}}\right)^2 + \E^*\left(\frac{\Uh-V_{n,k}}{\sigma^*_{n,k}}\right)^2\\
     &\quad -2  \left(    \frac{\tilde\sigma^*_{n,k}}{\sigma^*_{n,k}} \right)\cdot\E^*\left(\frac{U^*_{n,k}-V_{n,k}}{\tilde\sigma^*_{n,k}}\cdot \frac{\Uh-V_{n,k}}{\sigma^*_{n,k}}\right)\\
     & = \left(    \frac{\tilde\sigma^*_{n,k}}{\sigma^*_{n,k}} \right)^2 +1 - 2  \left(    \frac{\tilde\sigma^*_{n,k}}{\sigma^*_{n,k}} \right)\cdot\E^*\left(\frac{U^*_{n,k}-V_{n,k}}{\tilde\sigma^*_{n,k}}\cdot \frac{\Uh-V_{n,k}}{\sigma^*_{n,k}}\right)\\
     &= \left(    \frac{\tilde\sigma^*_{n,k}}{\sigma^*_{n,k}} \right)^2 +1 -  2  \left(    \frac{\tilde\sigma^*_{n,k}}{\sigma^*_{n,k}} \right)\cdot \left(\frac{{\sigma^*_{n,k}}^2}{\tilde\sigma^*_{n,k}\cdot \sigma^*_{n,k}}\right)\numberthis \label{eq: hajek_bound}\\
&= \left(    \frac{\tilde\sigma^*_{n,k}}{\sigma^*_{n,k}} \right)^2 +1 -  2  \xrightarrow[]{p} 0\quad \text{ (By Lemma \ref{lemma: var-conv-helper}.2)}
\end{align*}

Here, eqn (\ref{eq: hajek_bound}) holds because $\Uh$ is the Hajek Projection and $$\E^*\left((U^*_{n,k}-\Uh)\cdot (\Uh-V_{n,k})\right) = 0 \implies \E^*\left((U^*_{n,k}-V_{n,k})\cdot (\Uh-V_{n,k})\right) = \E^*\left(\Uh-V_{n,k}\right)^2$$

Thus, (\ref{eqn: U_star to V conv}) follows by applying Theorem A\ref{thm: conditional_Slutsky} (i) on \ref{eq: growing_k_main U minus hajek U converjence} and \ref{eqn: growing_k_berry-esseen}. Next, we have:

\begin{equation}\label{eqn: theorem main growing k U,V gap}
    \pr^*\left(\frac{|U_{n,k} -V_{n,k} |}{\sigma^*_{n,k}}>\delta\right) \xrightarrow[]{p} 0
\end{equation}

where (\ref{eqn: theorem main growing k U,V gap}) holds because:
\begin{align*}
     \frac{1}{{\sigma^*}^2_{n,k}}\cdot \E^*\left(|U_{n,k} -V_{n,k} |^2\right) &= \frac{O(k^4/n^2)}{{\sigma^*}^2_{n,k}} \qquad [\because \text{Theorem \ref{thm: diff_U_V} and Asm \ref{asm: h_bounded}}]\\
     &= O(k^{4+\epsilon}/n)\cdot \frac{k^{2}\V(g_k(\xi))}{n\cdot {\sigma^*}^2_{n,k}}\cdot \frac{1}{k^{2+\epsilon}\V(g_k(\xi))} \xrightarrow[]{p} 0
\end{align*}
by Assumption \ref{asm: rate_of_k}, eqn \ref{eqn: growing_k_main_theorem: variance limit} and the hypothesis of Assumption \ref{thm: growing_k_main_theorem}. Finally, the conclusion of Theorem \ref{thm: growing_k_main_theorem}, holds by applying Slutsky (Theorem A\ref{thm: conditional_Slutsky}) on \ref{eqn: U_star to V conv} and \ref{eqn: theorem main growing k U,V gap}.
\end{proof}

\begin{proof}[Proof of Corollary \ref{corr: growing_k_main_CI}] Let 
      $$ I_{n,k,\infty,B} \subseteq \tilde I_{n,k,\infty,B} = [U_{n,k} \pm (t_{B,{1-\alpha/2}}\cdot S_{n,k,\infty,B} + \frac{1}{k^{\epsilon/2}\sqrt n})]$$
    Here, $\frac{1}{k^{\epsilon/2}\sqrt n} = o(1/\sqrt n)$.
    Consider the subsequence long which $k^{2+\epsilon}\V(g_k(\xi))>1$. Along that subsequence, from Theorem \ref{thm: clt_bagging_modified}, we get:
    \begin{equation}\label{eq: growing_k_CLT}
\frac{\sqrt n(U_{n,k} - W_k)}{k\sqrt{\V(g_k(\xi))}} = Z_{n,k} + \frac{o(1)}{k^{\epsilon/2}\sqrt{k^{2+\epsilon}\V(g_k(\xi))}} \xrightarrow[]{d} N(0,1)        
    \end{equation}

    Thus using Theorem \ref{thm: main_cheap_boot} on (\ref{eq: growing_k_CLT}) above and \ref{eqn: Full_U bootstrap convergence} (from Theorem \ref{thm: growing_k_main_theorem}), we get that:
    $$\lim_{n\to \infty}\  \pr(W_k \in  I_{n,k,\infty,B}) =  1-\alpha \implies \liminf_{n\to \infty}\  \pr(W_k \in  \tilde I_{n,k,\infty,B}) \geq  1-\alpha$$
    
Now consider the complement subsequence along which $k^{2}\V(g_k(\xi))\leq 1/{k^{\epsilon}}$. Along this subsequence, we have: $k^{\epsilon/2}\sqrt n(U_{n,k} - W_k)  =o_p(1) $. Thus, 
$$\pr(W_k \in \tilde I_{n,k,\infty,B}) \geq \pr(W_k \in U_{n,k} \pm \frac{1}{\sqrt nk^{\epsilon/2}}) = \pr(\sqrt nk^{\epsilon/2}|W_k - U_{n,k}| \leq 1)\to 1$$
\end{proof}

\subsubsection{Proofs of Section \ref{sec:main_result_U_statistic}}\label{sec:proof main}
Let $N := \binom{n}{k}$. Conditional on the observed data $(\xi_1,\ldots,\xi_n)$, consider the finite population
\[
\left\{ H_k(\xi_{B}) : B \subset \{1,\ldots,n\},\, |B|=k \right\},
\]
which consists of $N$ elements. Let $F_n$ denote the uniform distribution on this finite set. Let $v_1,\ldots,v_{D_n}$ be independent draws from $F_n$. The corresponding incomplete U-statistic is defined as
\begin{equation}\label{eqn: U_nkd_defn}
    U_{n,k,D_n}
\;:=\;
\frac{1}{D_n}\sum_{i=1}^{D_n} v_i .
\end{equation}

Where $U_{n,k} = \E_{F_n}[v_i]$. Note that the distribution $F_n$ is completely determined by the observed data $(\xi_1,\ldots,\xi_n)$.

% Requires:
% \usepackage{tikz}
% \usetikzlibrary{positioning,arrows.meta,fit}

% Requires:
% \usepackage{tikz}
% \usetikzlibrary{positioning,arrows.meta}

\begin{tikzpicture}[
    font=\scriptsize,
    >=Latex,
    node distance=1.0cm and 1.2cm,
    box/.style={draw, rounded corners, align=left, inner sep=5pt, minimum width=3.2cm, minimum height=2.2cm},
    big/.style={draw, rounded corners, align=left, inner sep=6pt, minimum width=4.3cm, minimum height=2.6cm},
    arr/.style={->, thick, shorten >=2pt, shorten <=2pt}
]

%---------------------------
% Left: True distribution
%---------------------------
\node[box] (P) {
\textbf{True distribution}
$\P$\\[5pt]
Data: $\xi_1,\ldots,\xi_n \overset{\text{i.i.d.}}{\sim} \P$
};

%---------------------------
% Middle: Empirical distribution
%---------------------------
\node[big, right=1.6cm of P] (Pn) {
\textbf{Empirical distribution}\\
$P_n := \frac1n\sum_{i=1}^n \delta_{\xi_i}$\\[3pt]
$\E^*[\cdot] := \E[\cdot \mid P_n]$\\
$\pr^*[\cdot] := \pr(\cdot \mid P_n)$\\
$U_{n,k}, V_{n,k}$ are $P_n$ measurable
};

\draw[arr] (P.east) -- (Pn.west);

%---------------------------
% Bottom: F_n from data
%---------------------------
\node[big, below=1.3cm of Pn, minimum height=2.4cm] (Fn) {
\textbf{Finite-Bagging law}\\
$N=\binom{n}{k}$ subsamples from $\{\xi_i\}$\\
$F_n =$ uniform over $\{T_k(\xi_{i_1},\ldots,\xi_{i_k})\}$\\[3pt]
Draw: $v_i \sim F_n$\\
(\,$F_n$ fully determined by $P_n$\,)
};

\draw[arr] (Pn.south) -- (Fn.north);

%---------------------------
% Right-top: Bootstrap resample
%---------------------------
\node[box, right=1.6cm of Pn] (boot) {
\textbf{Bootstrap resample}\\
$\xi_1^*,\ldots,\xi_n^* \overset{\text{i.i.d.}}{\sim} P_n$\\[2pt]
$P_n^* := \frac1n\sum_{i=1}^n \delta_{\xi_i^*}$\\
$U^*_{n,k}, V^*_{n,k}$ are $P^*_n$ measurable
};

\draw[arr] (Pn.east) -- node[above, align=center]{\scriptsize bootstrap} (boot.west);

%---------------------------
% Right-bottom: F_n^* from bootstrap resample
%---------------------------
\node[box, below=1.3cm of boot] (Fnstar) {
\textbf{Bootstrap Finite-Bagging law}\\
$N=\binom{n}{k}$ subsamples from $\{\xi_i^*\}$\\
$F_n^* =$ uniform over $\{T_k(\xi_{i_1}^*,\ldots,\xi_{i_k}^*)\}$\\[3pt]
Draw: $v_i^* \sim F_n^*$\\
(\,$F_n^*$ fully determined by $P_n^*$\,)
};

\draw[arr] (boot.south) -- (Fnstar.north);

\end{tikzpicture}

Similarly, define $F_n^*$ for a resample drawn from the original sample, and $v_i^* \sim F_n^*$. We next have the following lemma:
\begin{lemma}\label{lemma: var_bound F_n}
    Under Assumptions \ref{asm: h_bounded} and \ref{asm: sup bound on H diff}, for some constant $c$, independent of the data, we have: $$\V_{F_n}(v_i)  \leq c/k \qquad \P\text{-a.s}$$
\end{lemma}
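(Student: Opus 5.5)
The plan is to compare the without-replacement law $F_n$ with the with-replacement (bootstrap) law. That way the bootstrap half of Assumption~\ref{asm: sup bound on H diff} can be fed into the Efron--Stein inequality (Lemma~\ref{lem: efron-stein}).

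Fix a data realization in the $\P$-probability-one set on which the bootstrap parts of Assumptions~\ref{asm: h_bounded} and~\ref{asm: sup bound on H diff} hold. Let $J_1,\ldots,J_k$ be i.i.d.\ uniform on $\{1,\ldots,n\}$, and set $X:=H_k(\xi_{J_1},\ldots,\xi_{J_k})$. Then $(\xi_{J_1},\ldots,\xi_{J_k})\overset{d}{=}(\xi_1^*,\ldots,\xi_k^*)$ with $\xi_i^*$ i.i.d.\ from $\P_n$. Hence $X$ is a function of $k$ independent inputs, and replacing one coordinate changes its value by at most $K_1/k$ with $\pr^*$-probability one. Lemma~\ref{lem: efron-stein} applied under $\P_n$ (equivalently, Lemma~\ref{lemma: var_bound_Tk} with $F=\P_n$, which is legitimate because $|H_k|\le K_0$ $\pr^*$-a.s.) gives
\[
\V^*(X)\;\le\;\frac12\sum_{i=1}^k\Bigl(\frac{K_1}{k}\Bigr)^2=\frac{K_1^2}{2k}.
\]

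Next, let $A$ be the event that $J_1,\ldots,J_k$ are pairwise distinct. Conditional on $A$, the unordered set $\{J_1,\ldots,J_k\}$ is uniform over the $k$-subsets. Since $H_k$ is symmetric, the conditional law of $X$ given $A$ is exactly $F_n$. Moreover $\pr(A)=A_{n,k}=\prod_{i=0}^{k-1}(1-i/n)$ is deterministic and does not depend on the data. With $m:=\E^*[X]$, the variance is minimized at the mean, so
\[
\V_{F_n}(v_i)\;\le\;\E\bigl[(X-m)^2\mid A\bigr]\;\le\;\frac{\E^*[(X-m)^2]}{\pr(A)}\;=\;\frac{\V^*(X)}{A_{n,k}}\;\le\;\frac{K_1^2}{2k\,A_{n,k}}.
\]

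The remaining step, and the only real subtlety, is a lower bound on $A_{n,k}$ that is uniform along the sequence. By Assumption~\ref{asm: rate_of_k}, $k^2/n\to0$, and the computation in the proof of Theorem~\ref{thm: diff_U_V} gives $1-A_{n,k}=O(k^2/n)\to0$. Hence $A_{n,k}\ge 1/2$ for all $n\ge n_0$, which yields $\V_{F_n}(v_i)\le K_1^2/k$ for those $n$. For the finitely many $n<n_0$, I use the crude bound $\V_{F_n}(v_i)\le K_0^2\le (K_0^2\max_{n<n_0}k_n)/k$; this relies on the $\P$-a.s.\ boundedness of $H_k$ on distinct-index subsamples, which is part of the bootstrap boundedness in Assumption~\ref{asm: h_bounded}. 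Taking
\[
c:=\max\Bigl\{K_1^2,\;K_0^2\max_{n<n_0}k_n\Bigr\}
\]
gives a constant independent of the data, and the bound holds $\P$-a.s.
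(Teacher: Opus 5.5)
Your argument is correct, but it takes a different route from the paper's proof.

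The paper writes $\V_{F_n}(v_i)=U_n(H_k^2)-U_n(H_k)^2$ and compares it \emph{additively} with the with-replacement analogue.
\begin{itemize}
\item The term $V_n(H_k^2)-V_n(H_k)^2=\V_{\P_n}(H_k)$ is bounded by $O(1/k)$ via Lemma~\ref{lemma: var_bound_Tk}. This is the same Efron--Stein input under $\P_n$ that you use.
\item The two discrepancies $|U_n(H_k^2)-V_n(H_k^2)|$ and $|U_n(H_k)^2-V_n(H_k)^2|$ are bounded through Theorem~\ref{thm: diff_U_V} by $O(K_0^2k^2/n)$. Making these $O(1/k)$ needs $k^3/n=O(1)$, and that is where Assumption~\ref{asm: rate_of_k} enters.
\end{itemize}

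You instead realize $F_n$ exactly as the with-replacement law conditioned on the event of distinct indices. This gives the \emph{multiplicative} bound $\V_{F_n}(v_i)\le \V_{\P_n}(H_k)/A_{n,k}$. It buys two things:
\begin{itemize}
\item You only need $A_{n,k}$ bounded away from zero, i.e.\ $k^2/n=O(1)$, which is weaker than what the additive comparison requires.
\item You get the explicit constant $K_1^2/(2kA_{n,k})$, in which $K_0$ plays no role for large $n$. For the finitely many small $n$ you could even replace the crude $K_0^2$ bound by $A_{n,k_n}>0$.
\end{itemize}
The paper's route has the economy of reusing the $U$--$V$ comparison it needs elsewhere anyway, in Lemma~\ref{lemma: var-conv-helper} and Theorem~\ref{thm: V_stat_CI}. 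Your conditioning argument is self-contained and applies pointwise to $F_n^*$ as well.

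Both proofs invoke Assumption~\ref{asm: rate_of_k}, even though the lemma as stated lists only Assumptions~\ref{asm: h_bounded} and~\ref{asm: sup bound on H diff}.
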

\begin{proof}
    \begin{align*}
        \V_{F_n}(v_i) &= \E_{F_n}[v_i^2] - \E_{F_n}[v_i]^2\\
        &= \frac1N\sum_{i=1}^N v_i^2 - U_{n,k}^2\\
        &= U_n(H_k^2) - U_n(H_k)^2\\
        & \leq \left| U_n(H_k^2) - V_n(H_k^2)\right| + |U_n(H_k)^2 - V_n(H_k)^2| + |V_n(H_k^2) - V_n(H_k)^2|
    \end{align*}

    Now, by Assumptions \ref{asm: h_bounded} and \ref{asm: sup bound on H diff}, using Lemma \ref{lemma: var_bound_Tk}, for some constant $c_1$, independent of the data, we have:
    \begin{equation}
        |V_n(H_k^2) - V_n(H_k)^2| = \V_{P_n}(H_k) \leq c_1/k 
    \end{equation}
    Now, Assumption \ref{asm: h_bounded} implies both $H_k$, $H_k^2$ and $| U_n(H_k) + V_n(H_k)|$ are all bounded uniformly, $\P$-\text{a.s.} Thus, the result follows by Assumption \ref{asm: rate_of_k} and Theorem \ref{thm: diff_U_V}, as:
    \begin{equation}
        \left| U_n(H_k^2) - V_n(H_k^2)\right| \leq \frac{c_2}{k}\qquad \text{ and }\qquad  \left| U_n(H_k)^2 - V_n(H_k)^2\right| \leq \frac{c_3}{k}
    \end{equation}
\end{proof}

\begin{lemma}\label{lemma: subsampled D_n}
    Suppose Assumptions \ref{asm: h_bounded} - \ref{asm: D_n growth} hold. Consider $\epsilon$ as defined in Assumption \ref{asm: rate_of_k}. Then we have:
    \begin{enumerate}
        \item $\sqrt n k^{\epsilon}|U_{n,k,D_n} - U_{n,k}| = o_p(1)$.
        \item For any subsequence along which $k^{2+\epsilon}\V(g_k(\xi))>1$, we have:
            \begin{equation} 
                \Es\left[\left(\frac{U_{n,k,D_n} - U_{n,k}}{\sigma^*_{n,k}}\right)^2\ \right] \xrightarrow[]{p} 0
            \end{equation}
           \item For any subsequence along which $k^{2+\epsilon}\V(g_k(\xi))>1$, we have:
            \begin{equation} 
                \Es\left[\left(\frac{U^*_{n,k,D_n} - U^*_{n,k}}{\sigma^*_{n,k}}\right)^2\right] \xrightarrow[]{p} 0
            \end{equation}
    \end{enumerate}
 where $\sigma^*_{n,k}$ is as defined in Theorem \ref{thm: growing_k_main_theorem}.
\end{lemma}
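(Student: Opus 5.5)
The plan is to control each of the three quantities through a conditional second-moment bound. We use the finite-population representation \eqref{eqn: U_nkd_defn}: conditional on the data, $U_{n,k,D_n}$ is an average of $D_n$ i.i.d.\ draws $v_i\sim F_n$ with mean $U_{n,k}$. Hence
\[
\E_{F_n}\!\left[(U_{n,k,D_n}-U_{n,k})^2\right]=\frac{\V_{F_n}(v_1)}{D_n}\le \frac{c}{kD_n}\qquad \P\text{-a.s.},
\]
where the inequality is Lemma~\ref{lemma: var_bound F_n}. The constant $c$ does not depend on the data.

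For part 1, we multiply this bound by $nk^{2\epsilon}$ and take the outer expectation over the data, which gives
\[
\E\!\left[nk^{2\epsilon}(U_{n,k,D_n}-U_{n,k})^2\right]\le \frac{c\,n}{D_n k^{1-2\epsilon}}\to 0
\]
by Assumption~\ref{asm: D_n growth}. Markov's inequality then yields $\sqrt n k^{\epsilon}|U_{n,k,D_n}-U_{n,k}|=o_p(1)$.

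Part 2 has one subtlety. Under $\Es$ the bagging draws of the original estimator are not resampled by the bootstrap, so the conditional expectation should be read as averaging over the Monte Carlo bagging randomness given the data. We would state this interpretation explicitly; the bound then reads $\Es[(U_{n,k,D_n}-U_{n,k})^2]\le c/(kD_n)$ almost surely. Dividing by $\sigma^{*2}_{n,k}$ gives
\[
\frac{c}{kD_n\sigma^{*2}_{n,k}}
=\frac{c\,n}{D_nk^{1-2\epsilon}}\cdot\frac{1}{k^{2\epsilon}\,k^2\V(g_k(\xi))}\cdot\frac{k^2\V(g_k(\xi))}{n\sigma^{*2}_{n,k}} .
\]
The first factor tends to $0$ by Assumption~\ref{asm: D_n growth}. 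The third factor tends to $1$ in probability by \eqref{eqn: growing_k_main_theorem: variance limit} of Theorem~\ref{thm: growing_k_main_theorem}. The middle factor is the delicate one. Along the subsequence, $k^2\V(g_k)>k^{-\epsilon}$, so this factor is at most $k^{-\epsilon}\le 1$. The product therefore tends to $0$ in probability, which proves part 2.

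For part 3, we apply the same argument to the bootstrap world. Conditional on the bootstrap resample $\xi^*_{1:n}$, $U^*_{n,k,D_n}$ is an average of $D_n$ i.i.d.\ draws from $F_n^*$ with mean $U^*_{n,k}$. We need the analogue of Lemma~\ref{lemma: var_bound F_n} for $F_n^*$. Its proof carries over because of the bootstrap halves of Assumptions~\ref{asm: h_bounded} and \ref{asm: sup bound on H diff}. Specifically, $\V_{P_n^*}(H_k)\le c_1/k$ follows from Lemma~\ref{lemma: var_bound_Tk}, since a resample from $P_n^*$ is supported on the original data points and so inherits boundedness and the $K_1/k$ sensitivity almost surely. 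The U/V gap is then handled by Theorem~\ref{thm: diff_U_V}, which holds for an arbitrary distribution. This gives $\V_{F_n^*}(v^*)\le c/k$ a.s.\ with a data-free constant. Taking $\Es$ (the tower property over the resample, then the bagging draws) gives $\Es[(U^*_{n,k,D_n}-U^*_{n,k})^2]\le c/(kD_n)$, and the same factorization as in part 2 finishes the argument.

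We expect the main obstacle to be checking that Lemma~\ref{lemma: var_bound F_n} transfers to $F_n^*$ with a uniform constant. Ties among the $\xi^*_i$ mean that $F_n^*$ places mass on kernels evaluated at repeated points. This is exactly why the bootstrap versions of Assumptions~\ref{asm: h_bounded}--\ref{asm: sup bound on H diff} are needed, and we would verify that resampling from $P_n^*$ stays within their scope.
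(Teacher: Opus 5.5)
Your proposal is correct and follows the paper's proof essentially step for step. The shared steps are the conditional Monte Carlo bound $\V_{F_n}(v_1)/D_n \lesssim 1/(kD_n)$ from Lemma~\ref{lemma: var_bound F_n}, Markov's inequality for part 1, and division by $\sigma^{*2}_{n,k}$ together with \eqref{eqn: growing_k_main_theorem: variance limit} for part 2. For part 3 you both transfer Lemma~\ref{lemma: var_bound F_n} to $F_n^*$ using the bootstrap halves of Assumptions~\ref{asm: h_bounded}--\ref{asm: sup bound on H diff}. Your factorization in part 2 is slightly cleaner than the paper's: each of your three factors is controlled on its own, whereas in the paper's display only the product of the first two factors is guaranteed to vanish under Assumption~\ref{asm: D_n growth}.
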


\begin{proof} We have:
\begin{equation}
    \Es\left[(U_{n,k,D_n}-U_{n,k})^2\right] = \Es\left[\left(\frac{1}{D_n}\sum_{i=1}^{D_n}(v_i-U_{n,k})\right)^2\right]=\frac{1}{D_n} \V_{F_n}(v_i)
           \lesssim  \frac{1}{kD_n}\qquad \P\text{-a.s.}\\
\end{equation}
    
   \begin{enumerate}
       \item Now,
       \begin{align*}
           \E\left[nk^{2\epsilon}(U_{n,k,D_n}-U_{n,k})^2\right] &= nk^{2\epsilon}\E\left[\Es\left[(U_{n,k,D_n}-U_{n,k})^2\right]\right]\\
           &\lesssim \frac{n}{D_n}\cdot \frac {k^{2\epsilon}}k\to 0
       \end{align*}
       \item Similarly,
              \begin{align*}
           \Es\left[\left(\frac{U_{n,k,D_n} - U_{n,k}}{\sigma^*_{n,k}}\right)^2\ \right] &= \frac{1}{{\sigma^*_{n,k}}^2}\Es\left[(U_{n,k,D_n}-U_{n,k})^2\right]\\
           &\lesssim \frac{1}{k^{1-\epsilon}}\cdot\frac{n}{k^{2+\epsilon}\V(g_k(\xi))D_n}\cdot \frac {k^2\V(g_k(\xi))}{n{\sigma^*_{n,k}}^2}\xrightarrow[]{p} 0
       \end{align*}
       \item Again,
       \begin{align*}
            \Es\left[(U^*_{n,k,D_n}-U^*_{n,k})^2\right] &= \Es\left[\E^*\left[\left(\frac{1}{D_n}\sum_{i=1}^{D_n}(v^*_i-U^*_{n,k})\right)^2 \bigm| F_n^*\right]\right]\\
            &= \frac{1}{D_n} \E^*\left[\V_{F^*_n}(v^*_i)\right]
       \end{align*}
       Now, similar to the proof of Lemma \ref{lemma: var_bound F_n},
       \begin{align*}
            \E^*\left[\V_{F^*_n}(v^*_i)\right]&\leq \E^*\left[\left| U^*_n(H_k^2) - V^*_n(H_k^2)\right| + |U^*_n(H_k)^2 - V^*_n(H_k)^2| + |V^*_n(H_k^2) - V^*_n(H_k)^2|\right]
       \end{align*}
       Since $P_n$ is always a discrete distribution on a finite support, for any $P_n$ satisfying the conditions of Theorem \ref{thm: diff_U_V} and Lemma \ref{lemma: var_bound F_n}, any resampled $P_n^*$ generated from $P_n$ would also satisfy the assumptions. Thus, suppose $\xi_i^*\sim P_n^*$, then by Assumption $\ref{asm: h_bounded}$ and \ref{asm: sup bound on H diff}, the following statements hold $\P$\text{-a.s.}
       \[
       \pr^*\Big[|T_k(\xi_1^*,...,\xi_k^*)|\leq B\quad P_n^*\text{-a.s}\Big] = 1 
       \]
       where $T_k$ is $H_k$ or $H_k^2$; and, 
       \[
\ps\left[\sup_{\xi_i, \xi_i'\ \in\  \Xi}
\Big|
H_k(\xi_1'^*,\dots,\xi_c'^*,\xi_{c+1}^*,\dots,\xi_k^*)
-
H_k(\xi_1^*,\dots,\xi_c^*,\xi_{c+1}^*,\dots,\xi_k^*)
\Big|
\le
K_1 \cdot\frac{c}{k}\quad  P_n^*\text{-a.s}\right]
 = 1
\]
Thus, similar to the proof of Lemma \ref{lemma: var_bound F_n},  $\E^*\left[\V_{F^*_n}(v^*_i)\right] \lesssim  1/k\ \P$-a.s. The remainder of the proof then follows similar to part~(2) of Lemma~\ref{lemma: subsampled D_n}.
   \end{enumerate}
\end{proof}

\begin{proof}[Proof of Theorem \ref{thm: growing_k_main_CI_D_n}]
    Consider any subsequence along which $k^{2+\epsilon}V(g_k(\xi))>1$. Along that subsequence, using Theorem \ref{thm: clt_bagging_modified} and Lemma \ref{lemma: subsampled D_n} (1), we have:
    
    \begin{align*}
        \frac{\sqrt n (U_{n,k,D_n} - W_k)}{k\sqrt{V(g_k(\xi))}} &=  \frac{\sqrt n(U_{n,k
        } - W_k)}{k\sqrt{V(g_k(\xi))}} +  \frac{\sqrt n (U_{n,k,D_n} - U_{n,k})}{k\sqrt{V(g_k(\xi))}}\\
        &= Z_{n,k}+\frac{o_p(1)}{k^{\epsilon/2}\sqrt{k^{2+\epsilon}V(g_k(\xi))}} \xrightarrow[]{d} N(0,1)\numberthis \label{eqn: corr_2_main_CLT}
    \end{align*}
Next, we show that:
\begin{equation}\label{eqn: corr_2_conditional_CLT}
    \pr^*\left(\frac{U^*_{n,k,D_n} - U_{n,k,D_n}}{\sigma^*_{n,k}}\leq x\right)  \xrightarrow[]{p} \Phi(x)
\end{equation}

Now,
    \begin{align*}
        &\pr^*\left(\frac{U^*_{n,k,D_n} - U_{n,k,D_n}}{\sigma^*_{n,k}}\leq x \right)  \\
        =\ &  \pr^*\left(\frac{U^*_{n,k,D_n} - U^*_{n,k}}{\sigma^*_{n,k}} + \frac{U^*_{n,k} - U_{n,k}}{\sigma^*_{n,k}} + \frac{U_{n,k} - U_{n,k,D_n}}{\sigma^*_{n,k}}\leq x\right)
    \end{align*}

 Now, by Lemma \ref{lemma: subsampled D_n} (2), we have that for any $\delta>0$,
 \begin{equation}\label{eqn: cor_2_eq1}
     \pr^*\left[\left|\frac{U^*_{n,k,D_n} - U_{n,k}^*}{\sigma^*_{n,k}} + \frac{U_{n,k} - U_{n,k,D_n}}{\sigma^*_{n,k}}\right|>\delta\right] \xrightarrow[]{p} 0
 \end{equation}
And, from Theorem \ref{thm: growing_k_main_theorem} we have:
\begin{equation}\label{eqn: cor_2_eq2}
    \left|\pr^*\left( \frac{U^*_{n,k} - U_{n,k}}{\sigma^*_{n,k}}  \leq x\right) - \Phi(x)\right| \xrightarrow[]{p} 0
\end{equation}
Thus, using Theorem \ref{thm: conditional_Slutsky} on (\ref{eqn: cor_2_eq1} and \ref{eqn: cor_2_eq2}) gives \ref{eqn: corr_2_conditional_CLT}. Now, applying Theorem~\ref{thm: main_cheap_boot} to \eqref{eqn: corr_2_main_CLT} and \eqref{eqn: corr_2_conditional_CLT}, the remainder of the argument proceeds exactly as in the proof of Corollary~\ref{corr: growing_k_main_CI}.
    
\end{proof}

\subsubsection{Proofs of Section \ref{sec: main_result_V_statistic}}
We proceed similar to the proofs of Section \ref{sec:main_result_U_statistic}. Let $M := n^k$. Conditional on the observed data $(\xi_1,\ldots,\xi_n)$, consider the finite population
\[
\left\{
H_k(\xi_{i_1},\ldots,\xi_{i_k})
:
(i_1,\ldots,i_k)\in\{1,\ldots,n\}^k
\right\},
\]
which consists of $M$ elements (allowing repetitions). Let $\widetilde F_n$ denote the uniform distribution on this finite set. Let $\tilde v_1,\ldots,\tilde v_{D_n}$ be independent draws from $\widetilde F_n$. The corresponding incomplete V-statistic is defined as
\begin{equation}\label{eqn: V_nkd_defn}
    V_{n,k,D_n}
    :=
    \frac{1}{D_n}\sum_{i=1}^{D_n} \tilde v_i.
\end{equation}

Observe that
\[
V_{n,k} = \E_{\widetilde F_n}[\tilde v_i].
\]
Note that the distribution $\widetilde F_n$ is completely determined by the observed data $(\xi_1,\ldots,\xi_n)$.

Similarly, define $\widetilde F_n^*$ based on a bootstrap resample from the original data, and let $\tilde v_i^* \sim \widetilde F_n^*$. We next state the following lemma:

\begin{lemma}\label{lemma: var_bound_Fn_V}
Under Assumptions~\ref{asm: h_bounded} and~\ref{asm: sup bound on H diff}, there exists a constant $c$, independent of the data, such that
\[
\Var_{\widetilde F_n}(\tilde v_i)
\le
\frac{c}{k}
\qquad
\P\text{-a.s.}
\]
\end{lemma}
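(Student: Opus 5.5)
The variance of a uniform draw from $\widetilde F_n$ is exactly the variance of the kernel evaluated on an i.i.d.\ sample from the empirical distribution. A uniformly random ordered $k$-tuple $(i_1,\ldots,i_k)\in\{1,\ldots,n\}^k$ produces arguments $(\xi_{i_1},\ldots,\xi_{i_k})$ that are i.i.d.\ draws from $\P_n$. Hence, conditional on the data,
\[
\Var_{\widetilde F_n}(\tilde v_i)=V_n(H_k^2)-V_n(H_k)^2=\V^*\bigl(H_k(\xi_1^*,\ldots,\xi_k^*)\bigr).
\]
This is the same quantity that appears as the third term in the proof of Lemma~\ref{lemma: var_bound F_n}. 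The argument is therefore simpler than in the U-statistic case: there is no U--V discrepancy term to control, and Assumption~\ref{asm: rate_of_k} is not needed.

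I would then apply Lemma~\ref{lemma: var_bound_Tk} with $F=\P_n$, for $\P$-a.e.\ realization of the data. Its two hypotheses are met as follows.
\begin{itemize}
\item Boundedness of $H_k$ under $\P_n^{\otimes k}$ is exactly the bootstrap half \eqref{eqn: h_bounded_Pn} of Assumption~\ref{asm: h_bounded}.
\item The one-coordinate bounded-difference property under $\P_n$ is the bootstrap half of Assumption~\ref{asm: sup bound on H diff} with $c=1$, giving a bound $K_1/k$ that holds $\pr^*$-a.s.
\end{itemize}
Lemma~\ref{lemma: var_bound_Tk} (via Efron--Stein, Lemma~\ref{lem: efron-stein}, applied under $\P_n$) then yields $\V^*(H_k)\le K_1^2/k$. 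One checks that $\tfrac12\sum_{i=1}^k (K_1/k)^2 = K_1^2/(2k)$, so any $c\ge K_1^2$ suffices. Since $K_1$ does not depend on the data, the constant $c$ is data-independent and the bound holds $\P$-a.s.

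The only delicate point is making sure the bounded-difference hypothesis is genuinely available under $\P_n$. Under $\P_n$, tuples with repeated indices carry positive mass, and that is exactly where the $\P$-a.s.\ version of the assumption says nothing. This is precisely why Assumption~\ref{asm: sup bound on H diff} includes its $\pr^*$-a.s.\ clause. With that clause in hand, the Efron--Stein bound goes through verbatim with expectations taken under $\P_n$, and the null set of data realizations on which the bootstrap clauses fail is excluded, so the conclusion holds $\P$-a.s.
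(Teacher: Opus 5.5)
Your proposal is correct and follows the paper's proof essentially verbatim. Like the paper, you identify $\Var_{\widetilde F_n}(\tilde v_i)=V_n(H_k^2)-V_n(H_k)^2=\Var_{\P_n}(H_k)$ and bound it by Lemma~\ref{lemma: var_bound_Tk} applied under $\P_n$, using the bootstrap clauses of Assumptions~\ref{asm: h_bounded} and~\ref{asm: sup bound on H diff}; your added care about why the $\pr^*$-a.s.\ clause is needed (and, implicitly, using the symmetry of $H_k$ to transfer the first-coordinate bound to every coordinate in Efron--Stein) only makes explicit what the paper leaves implicit.
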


\begin{proof}
We compute
\begin{align*}
\Var_{\widetilde F_n}(\tilde v_i)
&=
\E_{\widetilde F_n}[\tilde v_i^2]
-
\E_{\widetilde F_n}[\tilde v_i]^2 \\
&=
\frac1M \sum_{i=1}^M \tilde v_i^2
-
V_{n,k}^2 \\
&=
V_n(H_k^2)
-
V_n(H_k)^2.
\end{align*}

By Assumptions~\ref{asm: h_bounded} and~\ref{asm: sup bound on H diff}, using Lemma~\ref{lemma: var_bound_Tk}, there exists $c_1$ independent of the data such that
\[
V_n(H_k^2) - V_n(H_k)^2
=
\Var_{P_n}(H_k)
\le
\frac{c_1}{k}.
\]

This completes the proof.
\end{proof}

We next state and prove an equivalent version of Lemma \ref{lemma: subsampled D_n} for the V-statistic:

\begin{lemma}\label{lemma: subsampled_Dn_V}
Suppose Assumptions \ref{asm: h_bounded}--\ref{asm: D_n growth} hold. Consider $\epsilon$ as defined in Assumption~\ref{asm: rate_of_k}. Then we have:
\begin{enumerate}
    \item $\sqrt n\, k^{\epsilon}\,|V_{n,k,D_n} - V_{n,k}| = o_p(1)$.
    \item For any subsequence along which $k^{2+\epsilon}\Var(g_k(\xi))>1$, we have
    \begin{equation}\label{eq:V_subsample_part2}
        \E^*\!\left[\left(\frac{V_{n,k,D_n} - V_{n,k}}{\sigma^*_{n,k}}\right)^2\right] \xrightarrow[]{p} 0.
    \end{equation}
    \item For any subsequence along which $k^{2+\epsilon}\Var(g_k(\xi))>1$, we have
    \begin{equation}\label{eq:V_subsample_part3}
        \E^*\!\left[\left(\frac{V^*_{n,k,D_n} - V^*_{n,k}}{\sigma^*_{n,k}}\right)^2\right] \xrightarrow[]{p} 0.
    \end{equation}
\end{enumerate}
where $\sigma^*_{n,k}$ is as defined in Theorem~\ref{thm: growing_k_main_theorem}.
\end{lemma}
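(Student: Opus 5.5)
We mirror the proof of Lemma~\ref{lemma: subsampled D_n}, substituting Lemma~\ref{lemma: var_bound_Fn_V} for Lemma~\ref{lemma: var_bound F_n}.

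\emph{Conditional variance bound.} Conditional on the data, the draws $\tilde v_1,\ldots,\tilde v_{D_n}$ are i.i.d.\ from $\widetilde F_n$ with mean $V_{n,k}$. Hence
\[
\E^*\!\left[(V_{n,k,D_n}-V_{n,k})^2\right]=\frac{1}{D_n}\Var_{\widetilde F_n}(\tilde v_1)\le \frac{c}{kD_n}\qquad \P\text{-a.s.}
\]
This follows directly from Lemma~\ref{lemma: var_bound_Fn_V}. It is actually simpler than the U-statistic case: $\Var_{\widetilde F_n}$ is exactly $\Var_{P_n}(H_k)$, so no U--V comparison terms are needed.

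\emph{Part (1).} Take expectations over the data:
\[
\E\!\left[nk^{2\epsilon}(V_{n,k,D_n}-V_{n,k})^2\right]\lesssim \frac{n}{D_nk^{1-2\epsilon}},
\]
which tends to $0$ by Assumption~\ref{asm: D_n growth}. Markov's inequality then gives $\sqrt n k^{\epsilon}|V_{n,k,D_n}-V_{n,k}|=o_p(1)$.

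\emph{Part (2).} Divide the conditional bound by ${\sigma^*_{n,k}}^2$ and factor it as
\[
\frac{1}{k^{1-\epsilon}}\cdot\frac{n}{k^{2+\epsilon}\Var(g_k(\xi))\,D_n}\cdot\frac{k^2\Var(g_k(\xi))}{n{\sigma^*_{n,k}}^2}.
\]
Along the subsequence where $k^{2+\epsilon}\Var(g_k(\xi))>1$, the middle factor is at most $n/D_n$, so the first two factors together are at most $n/(D_nk^{1-\epsilon})\to 0$ by Assumption~\ref{asm: D_n growth}. The last factor tends to $1$ in probability by \eqref{eqn: growing_k_main_theorem: variance limit} of Theorem~\ref{thm: growing_k_main_theorem}. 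The product therefore tends to $0$ in probability.

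\emph{Part (3).} This is the only step needing care. We condition twice:
\[
\E^*\!\left[(V^*_{n,k,D_n}-V^*_{n,k})^2\right]=\frac{1}{D_n}\E^*\!\left[\Var_{\widetilde F_n^*}(\tilde v_1^*)\right],
\qquad
\Var_{\widetilde F_n^*}(\tilde v_1^*)=\Var_{P_n^*}(H_k).
\]
We need $\Var_{P_n^*}(H_k)\le c/k$ uniformly. Lemma~\ref{lemma: var_bound_Tk} gives this bound for any law under which $H_k$ is bounded with $O(1/k)$ bounded differences. $P_n^*$ is supported on a subset of the atoms of $P_n$. The bootstrap parts of Assumptions~\ref{asm: h_bounded}--\ref{asm: sup bound on H diff} hold $\pr^*$-a.s.\ on that support, so they transfer to i.i.d.\ draws from $P_n^*$, exactly as argued in part (3) of Lemma~\ref{lemma: subsampled D_n}.

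This support-inheritance argument is the main obstacle. The bounded-difference condition is only assumed under $\P$ and $\P_n$, and we must make sure it carries over to $P_n^*$. The point is that every tuple drawn from $P_n^*$ is also a tuple of atoms of $P_n$, and the bound holds for all such tuples $\P$-a.s.

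Once this is settled, $\E^*[\Var_{P_n^*}(H_k)]\lesssim 1/k$ $\P$-a.s. The same scaling as in part (2) then finishes the proof.
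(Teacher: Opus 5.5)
Your proposal is correct and follows the paper's proof essentially step for step. It uses the same conditional bound $\E^*[(V_{n,k,D_n}-V_{n,k})^2]\lesssim 1/(kD_n)$ from Lemma~\ref{lemma: var_bound_Fn_V} and the same factorization against $\sigma^*_{n,k}$ in part (2). In part (3) it uses the same double conditioning, $\Var_{\widetilde F_n^*}(\tilde v_1^*)=\Var_{P_n^*}(H_k)$, with Lemma~\ref{lemma: var_bound_Tk} applied under $P_n^*$; your justification of why the bootstrap bounded-difference condition transfers to $P_n^*$ spells out what the paper leaves implicit.
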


\begin{proof}
Let $\widetilde F_n$ be the uniform distribution on
$\{H_k(\xi_{i_1},\ldots,\xi_{i_k}) : (i_1,\ldots,i_k)\in\{1,\ldots,n\}^k\}$,
and let $\tilde v_1,\ldots,\tilde v_{D_n}\stackrel{\text{i.i.d.}}{\sim}\widetilde F_n$ so that
$V_{n,k,D_n}=\frac1{D_n}\sum_{i=1}^{D_n}\tilde v_i$ and $V_{n,k}=\E_{\widetilde F_n}[\tilde v_i]$.
Then
\begin{equation}\label{eq:V_subsample_mse}
\E^*\!\left[(V_{n,k,D_n}-V_{n,k})^2\right]
=
\E^*\!\left[\left(\frac{1}{D_n}\sum_{i=1}^{D_n}(\tilde v_i-V_{n,k})\right)^2\right]
=
\frac{1}{D_n}\Var_{\widetilde F_n}(\tilde v_i).
\end{equation}
By Lemma~\ref{lemma: var_bound_Fn_V}, $\Var_{\widetilde F_n}(\tilde v_i)\lesssim 1/k$ $\P$-a.s., hence
\begin{equation}\label{eq:V_subsample_basic}
\E^*\!\left[(V_{n,k,D_n}-V_{n,k})^2\right] \lesssim \frac{1}{kD_n}
\qquad \P\text{-a.s.}
\end{equation}

\begin{enumerate}
\item Using \eqref{eq:V_subsample_basic},
\[
\E\!\left[nk^{2\epsilon}(V_{n,k,D_n}-V_{n,k})^2\right]
=
nk^{2\epsilon}\E\!\left[\E^*\!\left[(V_{n,k,D_n}-V_{n,k})^2\right]\right]
\lesssim
\frac{n}{D_n}\cdot\frac{k^{2\epsilon}}{k}
\to 0,
\]
by Assumption~\ref{asm: D_n growth}. This implies $\sqrt n\,k^\epsilon|V_{n,k,D_n}-V_{n,k}|=o_p(1)$.

\item On any subsequence with $k^{2+\epsilon}\Var(g_k(\xi))>1$, we have
\[
\E^*\!\left[\left(\frac{V_{n,k,D_n}-V_{n,k}}{\sigma^*_{n,k}}\right)^2\right]
=
\frac{1}{(\sigma^*_{n,k})^2}\E^*\!\left[(V_{n,k,D_n}-V_{n,k})^2\right]
\lesssim
\frac{1}{(\sigma^*_{n,k})^2}\cdot \frac{1}{kD_n}.
\]
Rewriting,
\[
\frac{1}{(\sigma^*_{n,k})^2}\cdot \frac{1}{kD_n}
=
\left(\frac{n(\sigma^*_{n,k})^2}{k^2\Var(g_k(\xi))}\right)^{-1}
\cdot
\frac{n}{k^{2+\epsilon}\Var(g_k(\xi))D_n}
\cdot
\frac{1}{k^{1-\epsilon}}.
\]
By Theorem~\ref{thm: growing_k_main_theorem}, $\frac{n(\sigma^*_{n,k})^2}{k^2\Var(g_k(\xi))}\xrightarrow{p}1$,
and by Assumption~\ref{asm: D_n growth} together with $k^{2+\epsilon}\Var(g_k(\xi))>1$,
the right-hand side converges to $0$ in probability, proving \eqref{eq:V_subsample_part2}.

\item Let $\widetilde F_n^*$ be the uniform distribution on
$\{H_k(\xi_{i_1}^*,\ldots,\xi_{i_k}^*) : (i_1,\ldots,i_k)\in\{1,\ldots,n\}^k\}$,
and let $\tilde v_1^*,\ldots,\tilde v_{D_n}^*\stackrel{\text{i.i.d.}}{\sim}\widetilde F_n^*$ so that
$V_{n,k,D_n}^*=\frac1{D_n}\sum_{i=1}^{D_n}\tilde v_i^*$ and $V_{n,k}^*=\E_{\widetilde F_n^*}[\tilde v_i^*]$.
Conditioning on $\widetilde F_n^*$,
\[
\E^*\!\left[(V_{n,k,D_n}^*-V_{n,k}^*)^2\mid \widetilde F_n^*\right]
=
\frac{1}{D_n}\Var_{\widetilde F_n^*}(\tilde v_i^*).
\]
Moreover,
\[
\Var_{\widetilde F_n^*}(\tilde v_i^*)
=
V_n^*(H_k^2)-\bigl(V_n^*(H_k)\bigr)^2
=
\Var_{P_n^*}(H_k).
\]
Using Lemma~\ref{lemma: var_bound_Tk} (applied under $P_n^*$),
$\Var_{P_n^*}(H_k)\lesssim 1/k$ $\P$-a.s., hence
\[
\E^*\!\left[(V_{n,k,D_n}^*-V_{n,k}^*)^2\right]
=
\E^*\!\left[\E^*\!\left[(V_{n,k,D_n}^*-V_{n,k}^*)^2\mid \widetilde F_n^*\right]\right]
\lesssim
\frac{1}{kD_n},
\qquad \P\text{-a.s.}
\]
The remainder follows identically to part (2), yielding \eqref{eq:V_subsample_part3}.
\end{enumerate}
\end{proof}

We finally proceed to prove Theorem \ref{thm: V_stat_CI}:

\begin{proof}[Proof of Theorem~\ref{thm: V_stat_CI}]
The proof follows the same structure as in Corollaries~\ref{corr: growing_k_main_CI}
and Theorem ~\ref{thm: growing_k_main_CI_D_n}. We first verify the following statements, for any subsequence along which $k^{2+\epsilon}\V(g_k(\xi))>1$.

\begin{enumerate}
    \item \textbf{Asymptotic normality.}
    The incomplete V-statistic satisfies
    \begin{equation}\label{eqn: thm4_CLT}
        \frac{\sqrt{n}\,\bigl(V_{n,k,D_n}-W_k\bigr)}
        {k\sqrt{\Var(g_k(\xi))}}
        \xrightarrow{d} N(0,1).
    \end{equation}
    \begin{proof}[Proof of Equation (\ref{eqn: thm4_CLT})] Using \ref{eqn: corr_2_main_CLT}, it is enough to show:
    \begin{equation}
        \frac{\sqrt{n}\,\bigl(V_{n,k,D_n}-U_{n,k,D_n}\bigr)}
        {k\sqrt{\Var(g_k(\xi))}} \xrightarrow[]{p} 0
    \end{equation}
      Again,
     \begin{equation*}
\frac{\sqrt{n}\,\bigl(V_{n,k,D_n}-U_{n,k,D_n}\bigr)}
{k\sqrt{\Var(g_k(\xi))}}
=
\underbrace{
\frac{\sqrt{n}\,\bigl(V_{n,k,D_n}-V_{n,k}\bigr)}
{k\sqrt{\Var(g_k(\xi))}}
}_{:= A}
+
\underbrace{
\frac{\sqrt{n}\,\bigl(V_{n,k}-U_{n,k}\bigr)}
{k\sqrt{\Var(g_k(\xi))}}
}_{:=B}
+
\underbrace{
\frac{\sqrt{n}\,\bigl(U_{n,k}-U_{n,k,D_n}\bigr)}
{k\sqrt{\Var(g_k(\xi))}}
}_{:=C}.
\end{equation*}
Using Lemma \ref{lemma: subsampled_Dn_V} (1), $A = o_p(1)$. From the proof of Theorem \ref{thm: growing_k_main_CI_D_n}, $C = o_p(1)$. Finally, using Theorem \ref{thm: diff_U_V}, $$B = O(k^{2+\epsilon/2}/n)\cdot\frac{\sqrt n}{\sqrt{k^{2+\epsilon }\V(g_k(\xi))}} = o(1)$$
which completes the proof.
    \end{proof}

    \item \textbf{Conditional bootstrap consistency.}
    For every $x\in\mathbb R$,
    \begin{equation}\label{eqn: thm_4_conditional_CLT}
        \pr^*\!\left(
        \frac{V^*_{n,k,D_n}-V_{n,k,D_n}}
        {\sigma^*_{n,k}}
        \le x
        \right)
        \xrightarrow{p}
        \Phi(x),
    \end{equation}
   \begin{proof}[Proof of \eqref{eqn: thm_4_conditional_CLT}]
Decompose
\begin{align*}
\frac{V^*_{n,k,D_n}-V_{n,k,D_n}}{\sigma^*_{n,k}}
&=
\underbrace{\frac{V^*_{n,k,D_n}-V^*_{n,k}}{\sigma^*_{n,k}}}_{A_n^*}
+
\underbrace{\frac{V^*_{n,k}-V_{n,k}}{\sigma^*_{n,k}}}_{B_n^*}
+
\underbrace{\frac{V_{n,k}-V_{n,k,D_n}}{\sigma^*_{n,k}}}_{C_n}.
\end{align*}

By Lemma~\ref{lemma: subsampled_Dn_V}(3),
\[
\E^*\!\left[(A_n^*)^2\right]\xrightarrow{p}0.
\]
Similarly, by Lemma~\ref{lemma: subsampled_Dn_V}(2),
\[
\E^*\!\left[(C_n)^2\right]\xrightarrow{p}0
\]
Thus, using conditional Slutsky's Theorem \ref{thm: conditional_Slutsky}, it is enough to  show:
\begin{equation}\label{eq:completeV_bootstrap_CLT}
\pr^*\!\left(
\frac{V^*_{n,k}-V_{n,k}}{\sigma^*_{n,k}}
\le x
\right)
\xrightarrow{p}
\Phi(x).
\end{equation}
Again, we write:
\begin{align*}
\frac{V^*_{n,k}-V_{n,k}}{\sigma^*_{n,k}}
&=
\frac{V^*_{n,k}-U^*_{n,k}}{\sigma^*_{n,k}}
+
\frac{U^*_{n,k}-U_{n,k}}{\sigma^*_{n,k}}
+
\frac{U_{n,k}-V_{n,k}}{\sigma^*_{n,k}}
\end{align*}

where, due to Theorem \ref{thm: growing_k_main_theorem}, we have
\begin{equation}
\pr^*\!\left(
\frac{U^*_{n,k}-U_{n,k}}{\sigma^*_{n,k}}
\le x
\right)
\xrightarrow{p}
\Phi(x).
\end{equation}

Now, by eqn \ref{eqn: h_bounded_P} in Assumption \ref{asm: h_bounded} and Theorem \ref{thm: diff_U_V}, we have: 
\begin{align*}
    \frac{U_{n,k}-V_{n,k}}{\sigma^*_{n,k}} &= \sqrt n k^{\epsilon/2}(U_{n,k}-V_{n,k})\cdot\sqrt{\frac{k^2\V(g_k(\xi))}{ n{\sigma^*_{n,k}}^2}}\cdot \frac{1}{\sqrt{k^{2+\epsilon}\V(g_k(\xi))}}\xrightarrow[]{p}0
\end{align*}
Similarly, by eqn \ref{eqn: h_bounded_Pn} in Assumption \ref{asm: h_bounded} and Theorem \ref{thm: diff_U_V}, we have:
\begin{align*}
   \P^*\left(\frac{U^*_{n,k}-V^*_{n,k}}{\sigma^*_{n,k}}>\delta\right)\xrightarrow[]{p}0
\end{align*}
\end{proof}
Thus, by Theorem \ref{thm: conditional_Slutsky}, we get eqn \ref{eq:completeV_bootstrap_CLT}, which in turn proves \ref{eqn: thm_4_conditional_CLT}
\end{enumerate}

The remainder of the proof is similar to that of Corollary \ref{corr: growing_k_main_CI}.
\end{proof}

\subsection{Proofs of Section \ref{sec:Applications}}

\begin{proof}[Proof of Proposition~\ref{prop: SO_bagging_validity}]
We verify the assumptions one by one.

\begin{itemize}
    \item \textbf{Assumption~\ref{asm: h_bounded}.}
    By Assumption~\ref{asm: h_bounded_SO}, there exists $M<\infty$ such that
    $|h(x,\xi)| \le M$ for all $x\in\mathcal X$ and $\P$-a.s.\ $\xi$.
    Hence, for any $(\xi_1,\ldots,\xi_{k})$,
    \[
    \left|
    H_{k}(\xi_1,\ldots,\xi_{k})
    \right|
    =
    \left|
    \min_{x\in\mathcal X}
    \frac{1}{k}\sum_{i=1}^{k} h(x,\xi_i)
    \right|
    \le M
    \qquad \P\text{-a.s.}
    \]
    Thus Assumption~\ref{asm: h_bounded} holds with $K_0=M$.

    Under the bootstrap law $\P^*$ induced by $\P_n$, the same bound holds since
    $\xi_i^*$ take values in $\{\xi_1,\ldots,\xi_n\}$, implying
    \[
    \P^*\!\left(
    \left|H_{k}(\xi_1^*,\ldots,\xi_{k}^*)\right| \le M
    \right)=1
    \quad \P\text{-a.s.}
    \]

        \item \textbf{Assumption \ref{asm: sup bound on H diff} (Bounded sensitivity).}

    Fix $1 \le c \le k$ and consider replacing the first $c$ arguments.For any $\varepsilon>0$, let $x_\varepsilon$ be an $\varepsilon$-optimal
    solution for $H_k(\xi_1,\ldots,\xi_k)$.
        \begin{align*}
&
H_k(\xi_1',\ldots,\xi_c',\xi_{c+1},\ldots,\xi_k)
-
H_k(\xi_1,\ldots,\xi_c,\xi_{c+1},\ldots,\xi_k)
\\
=\ &
\min_{x\in\mathcal X}
\left\{
\frac1k \sum_{i=1}^c h(x,\xi_i')
+
\frac1k \sum_{i=c+1}^k h(x,\xi_i)
\right\}
-
\min_{x\in\mathcal X}
\left\{
\frac1k \sum_{i=1}^c h(x,\xi_i)
+
\frac1k \sum_{i=c+1}^k h(x,\xi_i)
\right\}
\\
\leq\ &
\min_{x\in\mathcal X}
\left\{
\frac1k \sum_{i=1}^c h(x,\xi_i')
+
\frac1k \sum_{i=c+1}^k h(x,\xi_i)
\right\}
-
\left(
\frac1k \sum_{i=1}^c h(x_\varepsilon,\xi_i)
+
\frac1k \sum_{i=c+1}^k h(x_\varepsilon,\xi_i)
\right)
+ \varepsilon
\\
\leq\ &
\left(
\frac1k \sum_{i=1}^c h(x_\varepsilon,\xi_i')
+
\frac1k \sum_{i=c+1}^k h(x_\varepsilon,\xi_i)
\right)
-
\left(
\frac1k \sum_{i=1}^c h(x_\varepsilon,\xi_i)
+
\frac1k \sum_{i=c+1}^k h(x_\varepsilon,\xi_i)
\right)
+ \varepsilon
\\
=\ &
\frac1k \sum_{i=1}^c
\bigl(
h(x_\varepsilon,\xi_i')
-
h(x_\varepsilon,\xi_i)
\bigr)
+ \varepsilon
\\
\leq\ &
\frac1k \sum_{i=1}^c
\sup_{x\in\mathcal X}
\bigl|
h(x,\xi_i')
-
h(x,\xi_i)
\bigr|
+ \varepsilon
\end{align*}
By symmetry, and since $\varepsilon$ is arbitrary, we have:
\begin{equation}\label{eq:HK_sensitivity_bound}
\bigm|H_k(\xi_1',\ldots,\xi_c',\xi_{c+1},\ldots,\xi_k)
-
H_k(\xi_1,\ldots,\xi_c,\xi_{c+1},\ldots,\xi_k)\bigm|\  \leq\  \frac1k \sum_{i=1}^c
\sup_{x\in\mathcal X}
\bigl|
h(x,\xi_i')
-
h(x,\xi_i)
\bigr|    
\end{equation}

Under Assumption~\ref{asm: h_bounded_SO}, $|h(x,\xi)|\le M$ for all $x\in\mathcal X$ and $\P$-a.s.\ $\xi$, hence
\[
\sup_{x\in\mathcal X}\bigl|h(x,\xi_i')-h(x,\xi_i)\bigr|
\le
\sup_{x\in\mathcal X}\bigl(|h(x,\xi_i')|+|h(x,\xi_i)|\bigr)
\le 2M,
\qquad i=1,\ldots,c.
\]
Plugging this into \eqref{eq:HK_sensitivity_bound} yields
\[
\left|
H_k(\xi_1',\ldots,\xi_c',\xi_{c+1},\ldots,\xi_k)
-
H_k(\xi_1,\ldots,\xi_k)
\right|
\le
\frac{2Mc}{k},
\qquad \P\text{-a.s.}
\]
Therefore Assumption \ref{asm: sup bound on H diff} holds with $K_1:=2M$, i.e.,
\[
\P\!\left(
\left|
H_k(\xi_1',\ldots,\xi_c',\xi_{c+1},\ldots,\xi_k)
-
H_k(\xi_1,\ldots,\xi_k)
\right|
\le K_1\frac{c}{k}
\right)=1.
\]

The bootstrap statement follows identically, since the same bound
$|h(x,\xi)|\le M$ applies to $\xi_i^*,\xi_i^{*'}\sim\P_n$. Hence
\[
\P^*\!\left(
\left|
H_k(\xi_1^{*'},\ldots,\xi_c^{*'},\xi_{c+1}^*,\ldots,\xi_k^*)
-
H_k(\xi_1^*,\ldots,\xi_k^*)
\right|
\le K_1\frac{c}{k}
\right)=1,
\qquad \P\text{-a.s.}
\]
This completes the verification of Assumption \ref{asm: sup bound on H diff}.

\end{itemize}
\end{proof}

\begin{proof}[Proof of Proposition \ref{prop:rf_assump23}]
Fix \(k\) and \(x\). Write
\[
L_k(x)=:L,\qquad N_k(x)=:N=|L|.
\]

\paragraph{Verification of Assumption~\ref{asm: h_bounded}.}
By Assumption \ref{asm:rf_leaf_size}(1), \(|Y_i|\le M\) almost surely for all \(i\). Hence
\[
|H_k(x;Z_1,\ldots,Z_k)|
=
\left|
\frac{1}{N}\sum_{i\in L} Y_i
\right|
\le
\frac{1}{N}\sum_{i\in L}|Y_i|
\le M
\qquad \text{a.s.}
\]
Thus Assumption~\ref{asm: h_bounded} holds with \(K_0=M\).

The bootstrap statement follows identically. Indeed, if \(Z_1^*,\ldots,Z_k^*\) are drawn under the bootstrap law induced by \(P_n\), then each bootstrap response still satisfies \(|Y_i^*|\le M\), and therefore
\[
\left|H_k(x;Z_1^*,\ldots,Z_k^*)\right|
\le M
\qquad P^*\text{-a.s.}
\]
So Assumption~\ref{asm: h_bounded} is verified.

\paragraph{Verification of Assumption~\ref{asm: sup bound on H diff}.}
Now replace the first \(c\) training observations, obtaining
\[
Z_1',\ldots,Z_c',Z_{c+1},\ldots,Z_k,
\]
and let the resulting leaf index set and leaf size be
\[
L_k'(x)=:L',\qquad N_k'(x)=:N'=|L'|.
\]
Define the corresponding predictor
\[
H_k'(x)
:=
H_k(x;Z_1',\ldots,Z_c',Z_{c+1},\ldots,Z_k)
=
\frac{1}{N'}\sum_{i\in L'} Y_i'.
\]
Here \(Y_i'=Y_i\) for \(i>c\). Also, by Assumption \ref{asm:rf_leaf_size}(3),
\[
N\ge \lambda k,\qquad N'\ge \lambda k
\qquad \text{a.s.}
\]
and by Assumption \ref{asm:rf_leaf_size}(4),
\[
|L\triangle L'|\le c
\qquad \text{a.s.}
\]

Let \(S:=L\cap L'\). Then
\[
H_k(x)-H_k'(x)
=
\frac{1}{N}\sum_{i\in L}Y_i
-
\frac{1}{N'}\sum_{i\in L'}Y_i'
\]
can be written as
\[
H_k(x)-H_k'(x)
=
\frac{1}{N}\sum_{i\in S}(Y_i-Y_i')
+\frac{1}{N}\sum_{i\in L\setminus L'}Y_i
-\frac{1}{N}\sum_{i\in L'\setminus L}Y_i'
+\left(\frac{1}{N}-\frac{1}{N'}\right)\sum_{i\in L'}Y_i'.
\]
We bound the four terms separately.

For the first term, note that \(Y_i'=Y_i\) for all \(i>c\), so only indices among
\(\{1,\ldots,c\}\) can contribute:
\[
\left|
\frac{1}{N}\sum_{i\in S}(Y_i-Y_i')
\right|
=
\left|
\frac{1}{N}\sum_{i\in S\cap\{1,\ldots,c\}}(Y_i-Y_i')
\right|
\le
\frac{1}{N}\cdot 2M \cdot c
\le
\frac{2M}{\lambda}\frac{c}{k}.
\]

For the second and third terms together, using \(|Y_i|\le M\) and \(|Y_i'|\le M\),
\[
\left|
\frac{1}{N}\sum_{i\in L\setminus L'}Y_i
-\frac{1}{N}\sum_{i\in L'\setminus L}Y_i'
\right|
\le
\frac{M}{N}\bigl(|L\setminus L'|+|L'\setminus L|\bigr)
=
\frac{M}{N}|L\triangle L'|
\le
\frac{M}{\lambda}\frac{c}{k}.
\]

For the last term,
\[
\left|
\left(\frac{1}{N}-\frac{1}{N'}\right)\sum_{i\in L'}Y_i'
\right|
=
\frac{|N'-N|}{NN'}\left|\sum_{i\in L'}Y_i'\right|
\le
\frac{|N'-N|}{NN'}\cdot M N'
=
\frac{M|N'-N|}{N}.
\]
Since \(|N'-N|\le |L\triangle L'|\le c\), we obtain
\[
\left|
\left(\frac{1}{N}-\frac{1}{N'}\right)\sum_{i\in L'}Y_i'
\right|
\le
\frac{M}{N}c
\le
\frac{M}{\lambda}\frac{c}{k}.
\]

Combining the bounds gives
\[
|H_k(x)-H_k'(x)|
\le
\frac{2M}{\lambda}\frac{c}{k}
+
\frac{M}{\lambda}\frac{c}{k}
+
\frac{M}{\lambda}\frac{c}{k}
=
\frac{4M}{\lambda}\frac{c}{k}.
\]
Therefore Assumption~\ref{asm: sup bound on H diff} holds with
\[
K_1=\frac{4M}{\lambda}.
\]

The bootstrap version is proved in exactly the same way, since under the bootstrap law the responses remain bounded by \(M\), the leaf size is still bounded below by \(\lambda k\), and the symmetric-difference condition continues to hold. Hence
\[
P^*\!\left(
\left|
H_k(x;Z_1^{*\prime},\ldots,Z_c^{*\prime},Z_{c+1}^*,\ldots,Z_k^*)
-
H_k(x;Z_1^*,\ldots,Z_k^*)
\right|
\le
\frac{4M}{\lambda}\frac{c}{k}
\right)
=1
\qquad \text{a.s.}
\]
This verifies Assumption~\ref{asm: sup bound on H diff} under the bootstrap law as well.
\end{proof}
\end{document}